\newif\ifsupp\supptrue
\newif\ifaistats\aistatsfalse
\renewcommand{\epsilon}{\varepsilon}
\newcommand{\mypar}[1]{\smallskip
	\noindent{\textbf{{#1}:}}}
\newcommand{\eps}{\ensuremath{\varepsilon}}
\newcommand{\boldb}{\ensuremath{\boldsymbol{b}}}
\newcommand{\boldg}{\ensuremath{\boldsymbol{g}}}
\newcommand{\boldu}{\ensuremath{\boldsymbol{u}}}
\newcommand{\boldv}{\ensuremath{\boldsymbol{v}}}
\newcommand{\boldx}{\bfx}
\newcommand{\boldR}{\bfR}
\newcommand{\bfR}{\ensuremath{\mathbf{R}}}
\newcommand{\bfx}{\ensuremath{\mathbf{x}}}
\newcommand{\bfz}{\ensuremath{\mathbf{z}}}
\newcommand{\calA}{\ensuremath{\mathcal{A}}}
\newcommand{\calC}{\ensuremath{\mathcal{C}}}
\newcommand{\calD}{\ensuremath{\mathcal{D}}}
\newcommand{\calL}{\ensuremath{\mathcal{L}}}
\newcommand{\calLc}{\ensuremath{\mathcal{L}^{(B)}_{\sf clipped}}}
\newcommand{\calN}{\ensuremath{\mathcal{N}}}
\newcommand{\calS}{\ensuremath{\mathcal{S}}}
\newcommand{\calT}{\ensuremath{\mathcal{T}}}
\newcommand{\calX}{\ensuremath{\mathcal{X}}}
\newcommand{\polylog}[1]{\ensuremath{{\rm polylog}\left(#1\right)}}
\renewcommand{\Pr}{\mathop{\mathbf{Pr}}}
\newcommand{\E}{\mathop{\mathbb{E}}}
\newtheorem{lem}{Lemma}[section]
\newtheorem{thm}[lem]{Theorem}
\newtheorem{cor}[lem]{Corollary}
\newtheorem{defn}[lem]{Definition}
\newtheorem{claim}[lem]{Claim}
\DeclareMathOperator*{\argmin}{arg\,min}
\newcommand{\bracket}[1]{\left(#1\right)}
\newcommand{\vast}{\bBigg@{4}}
\newcommand{\Vast}{\bBigg@{5}}
\newcommand{\expt}[2]{\E_{#1}\left[{#2}\right]}
\newcommand{\power}[2]{\left(#1\right)^{#2}}
\newcommand{\indi}[1]{\mathds{1}\left(#1\right)}
\newcommand{\ip}[2]{\langle #1, #2\rangle}
\newcommand{\ipn}[3]{\langle #1, #2\rangle_{#3}}
\newcommand{\calLH}{\calLc}
\newcommand{\privT}{\theta^\texttt{priv}\,}
\newcommand{\thetaH}{\theta^*_{\sf {clipped}}\,}
\newcommand{\ltwo}[1]{\left\|#1\right\|_2}
\newcommand{\snorm}[2]{\left\|#1\right\|_{#2}}
\newcommand{\trunc}[2]{\left[#2\right]_{#1}}
\DeclarePairedDelimiter\abs{\lvert}{\rvert}
\newcommand{\bbracket}[1]{\left\{#1\right\} }
\newcommand{\bigO}[1]{O\left(#1\right)}
\newcommand{\bigOmega}[1]{\Omega\left(#1\right)}
\newcommand{\expo}[1]{\exp\bracket{#1}}
\newcommand{\bT}{\Theta}%
\newcommand{\bt}{\theta}%
\newcommand{\ee}[1]{E^{(#1)}}%
\newcommand{\interior}[1]{{\kern0pt#1}^{\mathrm{o}}}
\newcommand{\adpgd}{\calA_{\sf DP\textsf{-}GD}}
\newcommand{\adpgdc}{\calA^{\sf clipped}_{\sf DP\textsf{-}GD}}
\newcommand{\gpriv}{\boldg^{\texttt{priv}}}
\newcommand{\dpgdclip}{DP-GD$_{\sf clipped}$}
\newcommand{\rank}{\texttt{rank}}
\newcommand{\clip}{\texttt{{clip}}}
\newcommand{\ellc}{\ell^{(B)}_{\sf clipped}}
\newcommand{\subg}{\partial_\theta}
\renewcommand{\tilde}{\widetilde}
\newcommand{\cn}{B} %
\newcommand{\thetapop}{\theta_{\sf pop}^*}
\newcommand{\opt}{\alpha_{\sf opt}}
\newcommand{\gen}{\alpha_{\sf gen}}
\newcommand*\samethanks[1][\value{footnote}]{\footnotemark[#1]}
\begin{document}
\ifaistats
\twocolumn[
\aistatstitle{Evading the Curse of Dimensionality in Unconstrained Private GLMs}

\aistatsauthor{ Shuang Song \And Thomas Steinke \And  Om Thakkar \And Abhradeep Thakurta }

\aistatsaddress{Google \And Google \And Google \And Google} ]

\else
\title{Evading Curse of Dimensionality in Unconstrained Private GLMs\\ via Private Gradient Descent}
\author{Shuang Song\thanks{Google. \texttt{\{shuangsong, steinke, omthkkr, athakurta\}@google.com}} \and Thomas Steinke\samethanks[1] \and Om Thakkar\samethanks[1] \and Abhradeep Thakurta\samethanks[1]}

\maketitle
\fi

\begin{abstract}
We revisit the well-studied problem of differentially private empirical risk minimization (ERM). We show that for unconstrained convex generalized linear models (GLMs), one can obtain an excess empirical risk of $\tilde O\left(\sqrt{\rank}/\epsilon n\right)$, where $\rank$ is the rank of the feature matrix in the GLM problem, $n$ is the number of data samples, and $\epsilon$ is the privacy parameter. This bound is attained via differentially private gradient descent (DP-GD). Furthermore, via the \emph{first lower bound for unconstrained private ERM}, we show that our upper bound is tight. In sharp contrast to the constrained ERM setting, there is no dependence on the dimensionality of the ambient model space ($p$). (Notice that $\rank\leq \min\{n, p\}$.) 
Besides, we obtain an analogous excess population risk bound which depends on $\rank$ instead of $p$.

For the smooth non-convex GLM setting (i.e., where the objective function is non-convex but preserves the GLM structure), we  further show that DP-GD attains a  dimension-independent convergence of $\tilde O\left(\sqrt{\rank}/\epsilon n\right)$ to a first-order-stationary-point of the underlying objective.

Finally, we show that for convex GLMs, a variant of DP-GD commonly used in practice (which involves clipping the individual gradients) also exhibits the same dimension-independent convergence to the minimum of a well-defined objective. To that end, we provide a structural lemma that characterizes the effect of clipping on the optimization profile of DP-GD.
\end{abstract}

\section{Introduction}
\label{sec:intro}

Differentially private empirical risk minimization (ERM) is a well-studied area in the privacy literature~\citep{chaudhuri2011differentially,kifer2012private,song2013stochastic,BST14,jain2014near,DP-DL,mcmahan2017learning, WLKCJN17, BassilyFTT19,iyengar2019towards, pichapati2019adaclip,TAB19,feldman2019private}. In the constrained setting, where the model space is bounded by $\calC\subsetneq\bfR^p$, tight upper and lower bounds are known for both excess empirical risk~\citep{BST14}, and excess population risk~\citep{bassily2019private,feldman2019private}. Surprisingly, in the arguably simpler unconstrained setting where $\calC=\bfR^p$, the problem space is much less explored. To our knowledge, the only prior work that distinguishes between the constrained and the unconstrained case is that of~\citet{jain2014near}. They show dimension-independent upper bounds for population risk (under differential privacy) in the convex generalized linear models (GLMs) case, which alludes to a separation between  constrained and unconstrained settings. In contrast, the lower bound of \citet{BST14} shows that an explicit dependence on the dimensionality ($p$) is necessary in the constrained setting, even for GLMs.

In this work, we revisit the private unconstrained ERM setting, and close the gap between upper and lower bounds for the GLM case. We first show that for convex GLMs, one can attain an excess empirical risk of $\tilde O\left(\sqrt{\rank}/(\epsilon\cdot n)\right)$\footnote{$\tilde O(\cdot)$ hides $\polylog{1/\delta}$, where $\delta$ is a privacy parameter.}, where $\rank$ is the rank of the feature matrix, and $\epsilon$ is the privacy parameter. 
In comparison, \citet{jain2014near} provide a much worse upper bound on the excess population risk for the unconstrained setting as $\tilde O\left(1/\epsilon\sqrt{n}\right)$. If interpreted in terms of excess empirical risk, their bound matches ours only when $\rank=n$. 
Our upper bound is in sharp contrast to the lower bound of $\tilde \Omega\left(\sqrt{p}/(\epsilon\cdot n)\right)$ for the constrained setting~\citep{BST14} ($p$ being the dimensionality of the model space), as $\rank\leq\min\{n,p\}$ always holds, and $\rank$ may be much smaller. 
Our upper bound is achieved via differentially private gradient descent (DP-GD)~\citep{BST14,song2013stochastic,talwar2014private,DP-DL}. While our guarantees extend to the stochastic variant of DP-GD, we focus on the full gradient version for brevity.

We further show that our bound on the excess empirical risk is essentially tight. It is worth mentioning that ours is \emph{the first lower bound} on excess empirical risk for any unconstrained ERM problem (including GLMs). The lower bound is based on fingerprinting codes~\citep{bun2018fingerprinting,SteinkeU17}.

In subsequent works~\citep{kairouz2020dimension,zhou2020bypassing}, there have been extensions of our results for convex GLMs to general convex ERMs via adaptive preconditioners. However, these results are more restrictive in their guarantees, and \emph{do not imply our results}. 
They require the existence of public data to identify the subspace where the gradient of the objective function lies.

Going beyond convex GLMs, we show that the dimension-independent convergence holds even for \emph{non-convex} GLMs (i.e., the loss function has the GLM structure, but can be non-convex). Such problems appear commonly in robust regression \citep{amid2019robust,masnadi2009design, masnadi2010design}. 
Given an objective function $\calL(\theta;D)=\frac{1}{n}\sum\limits_{i=1}^n\ell(\theta;d_i)$, where dataset $D = \{d_1, \ldots, d_n\}$, we show that DP-GD reaches a first-order stationary point (FOSP) of $\calL(\theta;D)$ at a rate of $\tilde O\left(\sqrt{\rank} /(\epsilon\cdot n)\right)$ as long as the individual loss functions $\ell$ are smooth in the model parameter. 

All our upper bounds are primarily based on DP-GD, which in general, requires a bound on the $\ell_2$-norm of the subgradients of individual loss functions in $\calL(\theta;D)$. In practice, however, such a bound is seldom known a priori for complex models. As a result, a variant of DP-GD, called \emph{clipped} DP-GD~\citep{DP-DL,papernot2020tempered}, is commonly used. It scales the subgradients down if the $\ell_2$-norm crosses a predefined threshold, a.k.a.~the clipping norm ($\cn>0$). In this work, we show that no matter what the clipping norm is, for convex GLMs, clipped DP-GD still has an excess empirical risk of $\tilde O(\sqrt{\rank}/(\epsilon\cdot n))$ with respect to a well-defined objective function $\calLc(\theta;D)$ (in contrast to the original objective $\calL(\theta;D)$). The function $\calLc$ still satisfies the convex GLM property. While there are other contemporary works~\citep{chen2020understanding} that study the effect of clipping on DP-GD, they are orthogonal to the results in this paper. We focus on formal excess empirical risk guarantees, whereas \citet{chen2020understanding} focus on understanding the gradient profile generated by DP-GD due to clipping. 

\subsection{Our Contributions}
\label{sec:contrib}

\mypar{Dimension-independent excess empirical risk bounds for convex generalized linear models (GLMs)} In Section~\ref{sec:ubdpgd}, we consider a class of problems with loss functions of the form $\ell(\ip{\bt}{\bfx};y)$, where $\bfx\in\calX\subset\bfR^p$ is the feature vector, $y\in\bfR$ is the response variable, and $\ell$ is convex in the first parameter. We show that if the optimization is over an unconstrained space (i.e., $\theta\in\bfR^p$), then for an objective function $\calL(\theta;D)=\frac{1}{n}\sum\limits_{i=1}^n\ell\left(\ip{\theta}{\bfx_i};y_i\right)$, one can achieve an excess empirical risk  of $\widetilde O\left(L\sqrt{\rank}/(\epsilon\cdot n)\right)$, where ${\rank}\leq \min\{n,p\}$ is the rank of the feature matrix $X=[\bfx_1,\ldots,\bfx_n]$. 
To the best of our knowledge, \emph{this is the first rank-based excess empirical risk bound for private convex GLMs}. Notice that the bound does not have any explicit dependence on the dimensionality $p$. We achieve this bound by optimizing on $\calL(\theta;D)$ using \emph{differentially private gradient descent} (DP-GD)~\citep{BST14,talwar2014private,song2013stochastic}. We also obtain an excess population risk of the form $\widetilde{O}(L\cdot\min\{1/\sqrt n,\sqrt{\rank}/(\epsilon\cdot n)\})$, which is equivalent to the optimal excess population risk obtained by prior works~\citep{bassily2019private,bassily2020stability,BassilyFTT19}, except that the ambient dimensionality $p$ is replaced by $\rank$\footnote{In the context of population risk, $\rank$ refers to the rank of the covariance matrix for the data generating distribution.}.

Existing lower bounds for constrained private convex learning~\citep{BST14} (i.e., $\theta\in\calC\subsetneq\bfR^p$) show that for excess empirical risk, an explicit polynomial dependence on the dimensionality of the model space ($p$) is necessary.
In contrast, our bound only depends on the  $\rank$ of the feature matrix $X$.
Our main insight is that for DP-GD on generalized linear problems, the gradients lie in a low-rank subspace. 
The noisy gradients that DP-GD uses for state updates do not significantly impact this low-rank structure due to the \emph{spherical} (and \emph{stable}) nature of the Gaussian distribution. Our  results  extend to the local differentially private (LDP)~\citep{Warner,evfimievski2003limiting,KLNRS} variant of DP-GD, albeit with an increase of a $\sqrt{n}$ factor in the excess empirical risk~\citep{duchi2018minimax}.

While \citet{jain2014near} proved a related dimension-independent risk guarantee for two other differentially private algorithms, namely output perturbation \citep{chaudhuri2011differentially} and objective perturbation \citep{chaudhuri2011differentially,kifer2012private}, our result is notable in the following aspects. First, we provide a more fine-grained control via the ${\rank}$ parameter. The result in \citet{jain2014near} only provides guarantees where ${\rank}$ is upper-bounded by $n$. Second, \citet{jain2014near} crucially relies on the existence of a centralized data source, whereas our result extends seamlessly to the LDP setting. Third, unlike the algorithms in \citet{jain2014near}, DP-GD does not require convexity to ensure privacy. This is important because even if the overall optimization function is non-convex, DP-GD still ensures differential privacy \citep{BST14,DP-DL}. Depending on the optimization profile, we may still observe a dimension-independent convergence. We provide more evidence of this phenomenon in Section~\ref{sec:privNonConvex}.

Additionally, we obtain a population risk guarantee that is asymptotically the same as the optimal excess population risk in~\cite{BassilyFTT19}, except that the ambient dimensionality ($p$) is replaced by $\rank$.
In Section~\ref{sec:expts}, we provide empirical evidence demonstrating the dimension-independence of DP-GD with Gaussian noise on logistic regression. 

\mypar{Tight lower bound on excess empirical risk for convex GLMs} In Section~\ref{sec:lbfp}, we show that our dimension-independent upper bound on the excess empirical risk for unconstrained convex GLM achieved via DP-GD is tight. This lower bound is in sharp contrast to that in \citet{BST14}, where they show that for GLMs in the constrained setting, an explicit dependence on the dimensionality ($p$) is necessary. It is worth mentioning that \emph{our lower bound is the first for any unconstrained private convex ERM}. Our lower bound is proved by transforming a GLM instance (namely, $\ell(\ip{\theta}{\bfx};y)=|\ip{\bfx}{\theta}-y|$) to estimating one-way marginals, to which we can apply fingerprinting techniques~\citep{bun2018fingerprinting,SteinkeU17}.

\mypar{Dimension-independent convergence to a first-order stationary point for non-convex GLMs} In Section \ref{sec:privNonConvex}, we extend our dimension-independent result to non-convex generalized linear problems, i.e., where the loss function $\ell$ can be non-convex but preserves the inner-product structure. We show that for this class of problems, DP-GD converges to a first-order stationary point (FOSP) (i.e., where the gradient of the objective function is zero). Again, this convergence guarantee is independent of the model dimensionality, and only depends on ${\rank}$ of the feature matrix. 
Specifically, we show that if the loss function for the non-convex generalized linear problem is smooth and $L$-Lipschitz in the $\ell_2$ norm, then DP-GD (paired with the exponential mechanism~\citep{mcsherry2007mechanism}) outputs a model $\theta_{\sf priv}$ such that the gradient of the objective function $\calL(\theta;D)$ at $\theta_{\sf priv}$ has $\ell_2$-norm of  $\widetilde O\left(L\sqrt{\rank}/(\epsilon n)\right)$.

While there has been work on understanding the convergence of variants of DP-GD on non-convex losses \citep{wang2019differentially}, ours is the first result to demonstrate a dimension-independent convergence. At the heart of our result is a simple folklore argument stated in \citet{allen2018natasha} that shows first-order convergence of GD for non-convex objectives. We conjecture that our result can be extended to second-order convergence (analogous to \citet{wang2019differentially}) under additional assumptions on the loss function. A natural direction would be to modify the argument of \citet{jin2017escape} to make it amenable to DP-GD.

\mypar{Analysis of clipped differentially private gradient descent (DP-GD) on convex GLMs} While the upper bounds in this paper are achieved by DP-GD, one major caveat for using the algorithm in practice is that it requires a predefined upper bound of $L$ on $\ltwo{\partial_\theta\ell(\ip{\theta}{\bfx};y)}$ for all $\bfx,y$. (Here, $\partial_\theta$ corresponds to the subgradients.) 
In real-world applications, $L$ is almost never known a priori. As a result, a variant of DP-GD (called clipped DP-GD) is used in practice where the individual subgradients corresponding to each data sample $(\bfx,y)$ are scaled/clipped to ensure that they are upper bounded by a predefined quantity $B>0$, a.k.a. the clipping norm~\citep{DP-DL,papernot2020tempered,chen2020understanding}. In Section~\ref{sec:clipping},  we show that no matter what the clipping norm $B$ is, for convex GLMs, clipped DP-GD optimizes a well-defined convex objective (denoted by $\calLc(\theta;D)$) corresponding to $\calL(\theta;D)$. Furthermore, the excess empirical risk with respect to $\calLc$ is $\widetilde O\left(B\sqrt{\rank}/(\epsilon\cdot n)\right)$, with $\rank$ being the rank of the feature matrix $X$. Notice the same dimension-independent convergence for clipped DP-GD as that of vanilla DP-GD. We also show that if $B\geq L$, then $\calLc(\theta;D)$ equals $\calL(\theta;D)$ point-wise. To prove the above bound, we provide a structural lemma (see Section~\ref{sec:clipeqvhuber}, which characterizes the clipping operation as a variant of Huberization~\citep{huber} for convex GLMs. To our knowledge, this is the \emph{first convergence guarantee for clipped DP-GD}.

As an interlude, in Section~\ref{sec:negclip}, we show that the convergence guarantees of clipped DP-GD are sensitive to the choice of the clipping norm $B$. Setting it low (i.e., $B\ll L$) can result in strange behaviors in the optimization profile, ranging from introducing $\Omega(1)$ bias in the excess empirical risk, to generating vectors that do not conform to the gradient field of any ``natural'' convex function. 

We note that there is a line of work on the practice and theory of gradient clipping~\citep{goodfellow2016deep,pascanu2012understanding,pascanu2013difficulty,zhang2019gradient}. Despite the similarity in name, these algorithms are different as they clip the \emph{averaged} gradient in each step, while in clipped DP-GD, we need the \emph{individual} gradient to be clipped to get a reasonable privacy/utility trade-off.

\section{Preliminaries}
\label{sec:diff}

In this section, we provide the some of the concepts required in the rest of the paper. %

\begin{defn}[Seminorm]
Given a vector space $V$ over a field $F$ of the real numbers $\boldR$, a seminorm on $V$ is a nonnegative-valued function $\rho: V\to\boldR$ with the following properties. For all $a\in F$, and $\boldu,\boldv\in V$: 
\begin{enumerate}
    \item {\bf Triangle inequality}: $\rho(\boldu+\boldv)\leq \rho(\boldu)+\rho(\boldv)$.\vspace{-0.2cm}
    \item {\bf Absolute scalability}: $\rho(a\cdot\boldu)=|a|\cdot\rho(\boldu)$.
\end{enumerate}
\label{def:seminorm}
\end{defn}

\mypar{Lipschitzness, Convexity, and Smoothness}
We additionally require the following definitions to state our results. These properties usually govern the rate of convergence of an algorithm for optimizing ERMs. 

\begin{defn}[$\ell_2$-Lipschitz continuity]
A function $f:\calC\to\boldR$ is $L$-Lipschitz w.r.t. the $\ell_2$-norm over a set $\calC\subseteq\boldR^p$ if the following holds:
$\forall \theta_1,\theta_2\in\calC, \left|f(\theta_1)-f(\theta_2)\right|\leq L \cdot\ltwo{\theta_1-\theta_2}$.
\label{def:lip}
\end{defn}

\begin{defn}[(Strong) convexity w.r.t. $\ell_2$-norm]
A function $f:\calC\to\boldR$ is $\Delta$-strongly convex w.r.t. the $\ell_2$-norm over a set $\calC\subseteq\boldR^p$ if $\forall \alpha\in(0,1),(\theta_1,\theta_2)\in \calC\times\calC$: 
\ifaistats
{\small\begin{align*}
& f(\alpha\theta_1+(1-\alpha)\theta_2) \\ & \qquad \leq
\alpha f(\theta_1)+(1-\alpha)f(\theta_2)-\Delta\frac{\alpha(\alpha-1)}{2}\ltwo{\theta_1-\theta_2}^2. 
\end{align*}}
\else
\begin{align*}
& f(\alpha\theta_1+(1-\alpha)\theta_2) 
\leq \alpha f(\theta_1)+(1-\alpha)f(\theta_2)-\Delta\frac{\alpha(\alpha-1)}{2}\ltwo{\theta_1-\theta_2}^2. 
\end{align*}
\fi
\noindent Function $f$ is simply convex if the above holds for $\Delta=0$.
\label{def:sc}
\end{defn}

\begin{defn}[Smoothness]
A function $f:\calC\to\boldR$ is $\beta$-smooth on $\calC\subseteq\boldR^p$ if for all $\theta_1\in \calC$ and for all $\theta_2\in\calC$, we have $
f(\theta_2) \leq f(\theta_1) + \ip{\nabla f(\theta_1)}{\theta_2-\theta_1} + \frac{\beta}{2} \|\theta_1-\theta_2\|_2^2.
$
\label{def:smooth}
\end{defn}

\mypar{Differential Privacy} In this paper, we focus on approximate differential privacy (DP) \citep{DMNS, ODO}. 

\begin{defn}[Differential privacy \citep{DMNS, ODO}] A randomized  algorithm $\calA$ is $(\eps,\delta)$-differentially private if, for any pair of datasets $D$ and $D'$ differing in exactly one data point (i.e., one data point is replaced in the other), and for all events $\calS$ in the output range of $\calA$, we have 
$$\Pr[\calA(D)\in \calS] \leq e^{\eps} \cdot \Pr[\calA(D')\in \calS] +\delta,$$
where the probability is taken over the random coins of $\calA$. 
\label{def:diiffP}
\end{defn}
For meaningful privacy guarantees, $\epsilon$ is assumed to be a small constant,  and $\delta \ll 1/n$ for $n=|D|$.

\mypar{Empirical risk minimization (ERM)} Let $D=\{d_1,\cdots,d_n\}\subseteq\calD^n$ be a data set of $n$ samples drawn from the domain $\calD$, and for $\calC\subseteq\boldR^p$ being the model space, let $\ell:\calC\times\calD\to\boldR$ be a loss function. Then the empirical risk over the data set $D$ is defined as $\calL(\theta;D)=\frac{1}{n}\sum\limits_{i=1}^n\ell(\theta;d_i)$. The objective of an \emph{empirical risk minimization} (ERM) algorithm is to output a model $\theta\in\calC$ that approximately minimizes the empirical risk $\calL$ over the set $\calC$. For the theoretical guarantees in this paper, we will only look at ERM loss, and the \emph{excess empirical risk} 
$R(\theta) = \calL(\theta;D) - \calL(\theta^*;D)$, where $\theta^* = \argmin_{\theta\in\calC} \calL(\theta;D)$.
By stability-based arguments \citep{BST14,SSSS09}, one can easily translate excess empirical risk for differentially private algorithms to their corresponding \emph{excess population risk}, where the population risk for $\theta$ is defined as $\mathbb{E}_{d\sim\calT}\left[\ell(\theta;d)\right]$, with $\calT$ being a given distribution over $\calD$. 

If $\calC \subsetneq \boldR^p$, we are in the so-called constrained ERM setting, whereas $\calC = \boldR^p$ is denoted as the unconstrained setting. In this paper, we focus on the \emph{unconstrained} setting.

\mypar{Generalized Linear Models} For most of this paper, we focus on a special class of ERM problems called generalized linear models~\citep{SSSS09}, where the loss function $\ell(\theta;d)$ takes a special inner-product form $\ell(\ip{\theta}{\boldx};y)$ for $d = (\boldx, y)$.
Here, $\boldx\in\boldR^p$ is usually called the feature vector and $y\in\boldR$ the response. 
Instead of being the feature vector in the original data, $\boldx$ can also be thought of as representing a mapped value $\phi(\boldx)$ of original feature vector. We do not make the distinction here. 

\mypar{Differentially Private Gradient Descent}
\label{sec:dpSGD}
We provide a formal description of Differentially Private Gradient Descent (DP-GD) in Algorithm~\ref{Alg:Frag}. 
In this version, the gradient $\boldg_t$ is computed over the whole data set, and the output $\privT$ is the average of the models over all iterations. 
In practice, we may instead use differentially private stochastic gradient descent (DP-SGD), where $\boldg_t$ is computed over a random mini-batch, and the output $\privT$ is the last model. 
While our analytical results are for the former setting (due to brevity), they extend to the latter with mild modifications to the proofs.
\begin{algorithm}[t]
	\caption{$\adpgd$: Diff. private gradient descent}
	\begin{algorithmic}[1]
	    \REQUIRE Data set $D=\{d_1,\cdots,d_n\}$, loss function: $\ell:\boldR^p\times\calD\to\boldR$, gradient $\ell_2$-norm bound: $L$, constraint set: $\calC\subseteq\boldR^p$, number of iterations: $T$, noise variance: $\sigma^2$, learning rate: $\eta$.
	    \STATE $\theta_0\leftarrow \boldsymbol{0}$.
	    \FOR{$t = 0,\dots,T-1$}
	        {\STATE $\gpriv_t \leftarrow \frac{1}{n}\sum\limits_{i=1}^n\subg \ell(\theta_t;d_i)+\calN\left(0,\sigma^2\right)$. \label{alg: stepgrad}}
	        {\STATE { $\theta_{t+1}\leftarrow \Pi_\calC\left(\theta_t-\eta\cdot\gpriv_t\right)$}, where\\  {$\Pi_\calC(\boldv)=\argmin\limits_{\theta\in\calC}\|\boldv-\theta\|_2$}.}
	    \ENDFOR
	    \STATE {\bf return} $\privT=\frac{1}{T}\sum\limits_{t=1}^T\theta_t$.
	\end{algorithmic}
	\label{Alg:Frag}
\end{algorithm}

\begin{thm}[From \citet{DP-DL,mironov2017renyi}]
If $\|\subg \ell(\theta;d)\|_2 \leq L$ for any $\theta \in \calC$ and $d \in \boldR^p$, then, Algorithm $\adpgd$ (Algorithm \ref{Alg:Frag}) is $(\epsilon,\delta)$-differentially private if the noise variance is $\sigma^2=\frac{2L^2T\log(1/\delta)}{(n\epsilon)^2}$.
\label{thm:priv}
\end{thm}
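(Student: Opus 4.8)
The plan is to view $\adpgd$ as an adaptive composition of $T$ Gaussian mechanisms followed by post-processing, so that the statement reduces to three known ingredients: the sensitivity of a bounded average, the privacy of the Gaussian mechanism, and composition. First I would fix the per-iteration query $q_t(D) := \frac{1}{n}\sum_{i=1}^n \subg\ell(\theta_t;d_i)$ and bound its $\ell_2$-sensitivity. If $D$ and $D'$ differ in the single record $d_j \mapsto d_j'$, then $q_t(D)-q_t(D') = \frac{1}{n}\big(\subg\ell(\theta_t;d_j)-\subg\ell(\theta_t;d_j')\big)$, whose $\ell_2$-norm is at most $\frac{2L}{n}$ by the hypothesis $\ltwo{\subg\ell(\theta;d)}\le L$. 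The point to stress is that this bound holds \emph{uniformly} over all $\theta_t\in\calC$, which is precisely what is needed below because $\theta_t$ is itself a random function of the earlier noisy gradients.

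Next, line~\ref{alg: stepgrad} releases $\gpriv_t = q_t(D) + \calN(0,\sigma^2 I)$, i.e.\ the Gaussian mechanism applied to a query of sensitivity $\Delta \le 2L/n$. By the R\'enyi differential privacy (equivalently, zCDP) analysis of the Gaussian mechanism \citep{mironov2017renyi}, each such release is $\big(\alpha,\, \tfrac{\alpha\Delta^2}{2\sigma^2}\big)$-RDP for every order $\alpha>1$. Because $\theta_{t+1} = \Pi_\calC(\theta_t - \eta\,\gpriv_t)$ is a deterministic function of $\gpriv_t$ and $\theta_t$, the transcript $(\gpriv_0,\dots,\gpriv_{T-1})$ is generated by an adaptive composition of $T$ such mechanisms, and adaptive composition for RDP adds the parameters, giving $\big(\alpha,\, \tfrac{\alpha T \Delta^2}{2\sigma^2}\big)$-RDP overall. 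The returned model $\privT = \frac{1}{T}\sum_{t=1}^T \theta_t$ is a deterministic function of this transcript, so post-processing preserves the guarantee.

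Finally I would convert back to approximate DP: an $(\alpha, \alpha\rho)$-RDP mechanism is $(\alpha\rho + \tfrac{\log(1/\delta)}{\alpha-1}, \delta)$-DP, and optimizing over $\alpha$ (equivalently, using the $\rho$-zCDP $\Rightarrow (\rho + 2\sqrt{\rho\log(1/\delta)},\delta)$-DP conversion) turns the bound $\tfrac{T\Delta^2}{2\sigma^2}$ into the target $\epsilon$ when $\sigma^2 = \frac{2L^2T\log(1/\delta)}{(n\epsilon)^2}$, up to the constants carried by the cited accountants. The only genuinely delicate step is the adaptivity in the composition --- one must use the \emph{worst-case-over-$\calC$} sensitivity rather than a data- or iterate-dependent one --- and the uniform Lipschitz hypothesis is exactly what makes this legitimate; everything else is bookkeeping with standard lemmas. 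An equivalent route is to invoke the moments-accountant bound of \citet{DP-DL} directly (with sampling ratio $1$, since DP-GD uses full batches), which yields the same $\sigma^2$.
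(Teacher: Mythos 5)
The paper never proves this theorem---it is imported wholesale from the cited accountants---and your argument (uniform $\ell_2$-sensitivity of the averaged gradient, R\'enyi/zCDP analysis of the Gaussian mechanism, adaptive composition, post-processing, and the RDP-to-$(\epsilon,\delta)$ conversion) is exactly the standard analysis those citations rest on, so it is correct and essentially the same approach. The only quibble is constant bookkeeping: under the paper's replacement notion of adjacency your bound $\Delta \le 2L/n$ makes the stated $\sigma^2=\frac{2L^2T\log(1/\delta)}{(n\epsilon)^2}$ come out to roughly $2\epsilon$ rather than $\epsilon$, i.e., the theorem's constant implicitly uses sensitivity $L/n$ (or absorbs the factor of $2$), a slack you already acknowledge by deferring constants to the cited accountants.
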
 

\begin{thm}[From \citet{BST14,talwar2014private}]
If the constraint set $\calC$ is convex, the loss function $\ell(\theta;d)$ is convex in the first parameter, $\|\subg\ell(\theta;d)\|_2\leq L$ for all $\theta\in\calC$ and $d\in D$,
then for objective function $\calL\left(\theta;D\right)=\frac{1}{n}\sum\limits_{i=1}^n\ell(\theta;d_i)$, under appropriate choices of the learning rate and the number of iterations in Algorithm $\adpgd$ (Algorithm \ref{Alg:Frag}), we have
\ifaistats
\begin{align*}
&\mathbb{E}\left[\calL\left(\privT;D\right)\right]-
\calL\left (\theta^*;D\right) 
\\
& \qquad \leq\frac{L\|\theta_0 - \theta^*\|_2\sqrt{p\log(1/\delta)}}{\epsilon n},
\end{align*}
\else
\begin{align*}
\mathbb{E}\left[\calL\left(\privT;D\right)\right]-\calL\left (\theta^*;D\right) 
\leq\frac{L\|\theta_0 - \theta^*\|_2\sqrt{p\log(1/\delta)}}{\epsilon n},
\end{align*}
\fi
where $\theta^* = \argmin\limits_{\theta\in\calC} \calL(\theta;D)$ is the minimizer and $\theta_0 \in \calC$ is the initial model.\\
The corresponding high-probability version is as follows. With probability at least $1-\beta$, we have
\ifaistats
\begin{align*}
&\calL\left(\privT;D\right) - \calL\left (\theta^*;D\right) \\
& \qquad
\leq\frac{L\|\theta_0-\theta^*\|_2\sqrt{p\log(1/\delta)\log(1/\beta)}}{\epsilon n}.
\end{align*}
\else
\begin{align*}
&\calL\left(\privT;D\right) - \calL\left (\theta^*;D\right)
\leq\frac{L\|\theta_0-\theta^*\|_2\sqrt{p\log(1/\delta)\log(1/\beta)}}{\epsilon n}.
\end{align*}
\fi
\label{thm:dpsgd_convergence}
\end{thm}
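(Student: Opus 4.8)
The plan is to treat Algorithm~\ref{Alg:Frag} as \emph{noisy projected subgradient descent} and to combine the textbook convergence guarantee for projected subgradient descent with the variance of the Gaussian noise mandated by privacy (Theorem~\ref{thm:priv}). Write $\boldg_t := \frac1n\sum_{i=1}^n \subg\ell(\theta_t;d_i)$, so that $\gpriv_t = \boldg_t + \boldb_t$ with $\boldb_t \sim \calN(0,\sigma^2\boldI_p)$ independent of $\theta_0,\dots,\theta_t$. Since $\calL(\cdot;D)$ is an average of the $L$-Lipschitz convex functions $\ell(\cdot;d_i)$, we have $\boldg_t \in \subg\calL(\theta_t;D)$ and $\ltwo{\boldg_t}\le L$.

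First I would record the one-step inequality. Because $\theta^*\in\calC$ and $\Pi_\calC$ is non-expansive, $\ltwo{\theta_{t+1}-\theta^*}^2 \le \ltwo{\theta_t - \eta\gpriv_t - \theta^*}^2$, which rearranges to
\[
\ip{\gpriv_t}{\theta_t-\theta^*} \le \frac{\ltwo{\theta_t-\theta^*}^2 - \ltwo{\theta_{t+1}-\theta^*}^2}{2\eta} + \frac{\eta}{2}\ltwo{\gpriv_t}^2 .
\]
Convexity gives $\calL(\theta_t;D)-\calL(\theta^*;D) \le \ip{\boldg_t}{\theta_t-\theta^*} = \ip{\gpriv_t}{\theta_t-\theta^*} - \ip{\boldb_t}{\theta_t-\theta^*}$. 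Summing over $t=0,\dots,T-1$, dividing by $T$, applying Jensen's inequality to pass to the averaged iterate $\privT$, and taking expectations, the cross terms vanish since $\mathbb{E}[\ip{\boldb_t}{\theta_t-\theta^*}]=0$ (mean-zero noise independent of $\theta_t$); using the telescoping bound $\le \ltwo{\theta_0-\theta^*}^2$ and $\mathbb{E}\ltwo{\gpriv_t}^2 \le L^2 + p\sigma^2$ yields
\[
\mathbb{E}[\calL(\privT;D)] - \calL(\theta^*;D) \le \frac{\ltwo{\theta_0-\theta^*}^2}{2\eta T} + \frac{\eta(L^2 + p\sigma^2)}{2}.
\]

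Next I would substitute $\sigma^2 = \frac{2L^2 T\log(1/\delta)}{(n\epsilon)^2}$, making the second term $\frac{\eta L^2}{2} + \frac{\eta p L^2 T\log(1/\delta)}{(n\epsilon)^2}$. Taking $T$ large enough that $\eta L^2/2$ is lower order, and then choosing $\eta$ to balance $\frac{\ltwo{\theta_0-\theta^*}^2}{2\eta T}$ against $\frac{\eta p L^2 T\log(1/\delta)}{(n\epsilon)^2}$ — i.e., $\eta T \asymp \frac{\ltwo{\theta_0-\theta^*}\, n\epsilon}{L\sqrt{p\log(1/\delta)}}$ — gives the claimed $O\!\left(\frac{L\ltwo{\theta_0-\theta^*}\sqrt{p\log(1/\delta)}}{\epsilon n}\right)$ bound. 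For the high-probability version one replaces the two expectation steps by concentration: the martingale $\sum_t \ip{\boldb_t}{\theta^*-\theta_t}$ is controlled by Azuma--Hoeffding/Freedman, and $\sum_t \ltwo{\boldb_t}^2$ concentrates around $Tp\sigma^2$ by a $\chi^2$ tail bound, each contributing a $\sqrt{\log(1/\beta)}$ factor.

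The in-expectation argument is essentially routine bookkeeping; the only delicate point is the high-probability statement, where to bound the martingale increments one must argue that $\ltwo{\theta_t-\theta^*}$ does not drift — here this is immediate from the diameter of the constraint set $\calC$, but it is the step that would need genuine care were the domain unbounded (which is precisely the regime the rest of the paper must handle differently).
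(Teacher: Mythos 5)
Your proposal is correct and follows essentially the same noisy-projected-subgradient argument (non-expansive projection, linearization, telescoping the potential $\ltwo{\theta_t-\theta^*}^2$, zero-mean cross terms, then balancing $\eta$ and $T$) as the cited source \citet{BST14}; the paper does not re-prove Theorem~\ref{thm:dpsgd_convergence}, and its own proof of Theorem~\ref{thm:dimIndependence} is exactly this template with the Euclidean norm replaced by the seminorm $\snorm{\cdot}{M}$. The only minor caveats are constants (your balancing yields the bound up to a small constant factor, which the cited statement also suppresses) and, as you yourself note, the high-probability martingale step uses boundedness of $\calC$, which is indeed the regime of the cited constrained result.
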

\section{Tight Excess Empirical Risk for Convex Generalized Liner Models (GLMs)}
\label{sec:ucconvex}

In Section~\ref{sec:ubdpgd}, we show that algorithm $\adpgd$ (Algorithm~\ref{Alg:Frag}) converges to an excess empirical risk of $\tilde O(\sqrt{\rank}/{\epsilon n})$, which in particular is independent of the ambient dimensionality ($p$). (Here $\rank$ is the rank of the feature matrix for GLMs.) In comparison, prior excess empirical risk of $\tilde O(\sqrt{p}/{\epsilon n})$ (via Theorem
~\ref{thm:dpsgd_convergence}) has an explicit dependence $p$ for Algorithm $\adpgd$.

In Section~\ref{sec:lbdpgd} we further show that this bound is tight. It is the \emph{first lower bound on excess empirical risk for any unconstrained Lipschitz optimization problem}.

\subsection{Upper-bound via Private Gradient Descent}
\label{sec:ubdpgd}
Consider the following convex optimization problem. Let $\calL(\theta;D)=\frac{1}{n}\sum\limits_{i=1}^n\ell(\ip{\theta}{\bfx_i};y_i)$ be an objective function defined over the data set $D=\{(\bfx_1,y_1),\ldots,(\bfx_n,y_n)\}$ with $\bfx_i\in\calX$ and $y_i\in\bfR$ for all $i\in[n]$, where $\calX\subset \bfR^p$ is a bounded set. 
Assume the loss function $\ell(\ip{\theta}{\bfx};y)$ is convex in its first parameter and is $L$-Lipschitz with respect to the $\ell_2$-norm over all $\theta\in\bfR^p$ and for all $\bfx$ and $y$. 
In other words, the $\ell_2$-norm of $\partial_{\theta}\ell$ is upper bounded by $L$.
The objective is to output $\privT$ that approximately solves $\argmin\limits_{\theta\in\bfR^p}\calL(\theta;D)$ while satisfying differential privacy. In the following, we show that the excess empirical risk for $\adpgd$ (Algorithm~\ref{Alg:Frag}) scales approximately as $\widetilde O(\sqrt{\rank}/(\epsilon n))$. Here $\rank$ is the rank of the feature matrix $[\bfx_1 \vert \bfx_2 \vert \cdots \vert \bfx_n]$ formed by stacking the feature vectors as columns.
In Section~\ref{sec:lbdpgd}, we show that this bound is indeed tight.

\begin{thm}
Let $\theta_0=\mathbf{0}^p$ be the initial point of $\adpgd$. Let $\theta^*=\argmin\limits_{\theta\in\bfR^p}\calL(\theta;D)$, and $M$ be the projector to the eigenspace of the matrix $\sum\limits_{i=1}^n \bfx_i\bfx_i^T$. 
Letting $L$ be the gradient $\ell_2$-norm bound. Setting the constraint set $\calC=\boldR^p$ and running $\adpgd$ on $\calL(\theta;D)$ for $T=n^2\epsilon^2$ steps with appropriate learning rate $\eta$, we get
\ifaistats
\begin{align*}
&\mathbb{E}\left[\calL\left(\privT;D\right)\right]-\calL\left (\theta^*;D\right)\\&
\leq\frac{L\snorm{\theta^*}{M}\sqrt{1 + 2\cdot\rank(M)\cdot \log(1/\delta)}}{\epsilon n}.
\end{align*}
\else
\begin{align*}
&\mathbb{E}\left[\calL\left(\privT;D\right)\right]-\calL\left (\theta^*;D\right)
\leq\frac{L\snorm{\theta^*}{M}\sqrt{1 + 2\cdot\rank(M)\cdot \log(1/\delta)}}{\epsilon n}.
\end{align*}
\fi
Here, $\rank(M)\leq n$ (but can be much smaller), and $\snorm{\cdot}{M}$ is the seminorm w.r.t. the projector $M$.
\label{thm:dimIndependence}
\end{thm}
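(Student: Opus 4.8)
The plan is to exploit the generalized-linear structure so that, although $\adpgd$ runs in $\bfR^p$, the value of $\calL$ along its trajectory is governed entirely by what happens inside the subspace $V:=\mathrm{range}(M)$, which is the column span of $[\bfx_1\vert\cdots\vert\bfx_n]$ (equivalently, the eigenspace of $\sum_i\bfx_i\bfx_i^T$ spanned by eigenvectors with nonzero eigenvalue, so $\dim V=\rank(M)\le\min\{n,p\}$). Two elementary facts drive this. First, $\ip{\theta}{\bfx_i}=\ip{M\theta}{\bfx_i}$ since $\bfx_i\in V$, hence $\calL(\theta;D)=\calL(M\theta;D)$, and each $\subg\ell(\ip{\theta_t}{\bfx_i};y_i)$ is a scalar multiple of $\bfx_i$ (chain rule), so it lies in $V$; therefore the noise-free average gradient $\boldg_t:=\frac{1}{n}\sum_i\subg\ell(\ip{\theta_t}{\bfx_i};y_i)$ satisfies $M\boldg_t=\boldg_t$ and $\ltwo{\boldg_t}\le L$. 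Second, because $\calC=\bfR^p$ the update is exactly $\theta_{t+1}=\theta_t-\eta(\boldg_t+\xi_t)$ with $\xi_t\sim\calN(0,\sigma^2 I_p)$, so the projected iterates $\tilde\theta_t:=M\theta_t$ obey $\tilde\theta_{t+1}=\tilde\theta_t-\eta(\boldg_t+M\xi_t)$ with $\tilde\theta_0=M\boldsymbol{0}=\boldsymbol{0}$, where $M\xi_t$ is a centered Gaussian supported on $V$ with $\E\ltwo{M\xi_t}^2=\sigma^2\,\mathrm{tr}(M)=\sigma^2\rank(M)$. Since $\calL(\privT;D)=\calL(M\privT;D)=\calL(\frac{1}{T}\sum_t\tilde\theta_t;D)$, it suffices to analyze $\{\tilde\theta_t\}$, which lives in a $\rank(M)$-dimensional space; this is precisely where we save over Theorem~\ref{thm:dpsgd_convergence}, whose analysis pays $\sqrt p$ because it charges the full noise magnitude $\E\ltwo{\xi_t}^2=\sigma^2 p$.

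Next I would run the textbook noisy-(sub)gradient-descent / online-to-batch analysis inside $V$. Any minimizer $\theta^*$ may be replaced by $M\theta^*\in V$ without changing its loss or its value of $\snorm{\cdot}{M}$, so assume $\theta^*\in V$; then $\ltwo{\tilde\theta_0-\theta^*}=\ltwo{\theta^*}=\snorm{\theta^*}{M}$. Expanding $\ltwo{\tilde\theta_{t+1}-\theta^*}^2$, using that $M\xi_t$ is mean-zero and independent of $\tilde\theta_t$, convexity of $\calL$ (the subgradient inequality at $\theta_t$, together with $\ip{\boldg_t}{\tilde\theta_t-\theta^*}=\ip{\boldg_t}{\theta_t-\theta^*}$ since $M\boldg_t=\boldg_t$), and the bounds $\ltwo{\boldg_t}\le L$ and $\E\ltwo{M\xi_t}^2=\sigma^2\rank(M)$, telescoping over the iterations gives
\begin{align*}
\sum_{t=1}^{T}\E\!\left[\calL(\tilde\theta_t;D)-\calL(\theta^*;D)\right]\ \le\ \frac{\snorm{\theta^*}{M}^2}{2\eta}+\frac{\eta T\left(L^2+\sigma^2\rank(M)\right)}{2}.
\end{align*}
Dividing by $T$ and applying Jensen's inequality to the averaged output $\privT$ yields $\E[\calL(\privT;D)]-\calL(\theta^*;D)\le\frac{\snorm{\theta^*}{M}^2}{2\eta T}+\frac{\eta(L^2+\sigma^2\rank(M))}{2}$.

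Finally I would substitute the parameters. Optimizing the learning rate, $\eta=\snorm{\theta^*}{M}\big/\sqrt{T(L^2+\sigma^2\rank(M))}$, gives $\E[\calL(\privT;D)]-\calL(\theta^*;D)\le\snorm{\theta^*}{M}\sqrt{L^2+\sigma^2\rank(M)}\big/\sqrt{T}$; plugging in the privacy-calibrated variance $\sigma^2=2L^2T\log(1/\delta)/(n\epsilon)^2$ from Theorem~\ref{thm:priv} and then $T=n^2\epsilon^2$ turns $L^2+\sigma^2\rank(M)$ into $L^2\bigl(1+2\rank(M)\log(1/\delta)\bigr)$ and $\sqrt{T}$ into $n\epsilon$, which is exactly the claimed bound; privacy is inherited verbatim from Theorem~\ref{thm:priv} since we run the identical algorithm. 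The only genuinely new step is the reduction in the first paragraph — one must argue carefully that the component of the injected Gaussian orthogonal to $V$ is completely inert for the objective, so that the dimension entering the noise term is $\rank(M)=\rank$ rather than $p$; everything afterward is the standard convergence argument underlying Theorem~\ref{thm:dpsgd_convergence}, re-run in a $\rank(M)$-dimensional subspace.
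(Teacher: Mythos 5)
Your proof is correct and follows essentially the same route as the paper: the paper runs the identical standard noisy-subgradient analysis but phrases it as a potential argument in the seminorm $\snorm{\cdot}{M}$ on the ambient iterates, which is equivalent to your explicit projection $\tilde\theta_t=M\theta_t$, and both hinge on the same two facts — that $\boldg_t$ lies in the range of $M$ and that $\mathbb{E}\snorm{\boldb_t}{M}^2=\sigma^2\cdot\rank(M)$ — followed by the same choice of $\eta$, $\sigma$, and $T=n^2\epsilon^2$.
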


Though our result is for excess empirical risk, it can be translated to excess population risk guarantees via standard stability-based arguments \citep{bassily2020stability}. (We provide the details below.) The crux of our proof technique is to work in the subspace generated by the feature vectors for generalized linear problem. We proved the guarantees for DP-GD that uses full gradient and returns the average of the models over all iterations. Our proof would extend seamlessly (by modifying the proofs of Theorems 1 and 2 in \citet{shamir2013stochastic}) to settings where stochastic gradients over mini-batches are used and the final model is returned as the output.

\begin{proof}[Proof of Theorem~\ref{thm:dimIndependence}]
We prove the theorem via the standard template for analyzing SGD methods \citep{bubeck2015convex}. Recall $\privT=\frac{1}{T}\sum\limits_{t=1}^T\theta_t$, where $\{\theta_1,\ldots,\theta_T\}$ are the models in each iterate of DP-GD. 
Let $\boldg_t$ denote any subgradient in $\partial \calL(\theta_t;D)$.
By convexity and the standard linearization trick in convex optimization \citep{bubeck2015convex}, we have:
\begin{align}
\calL\left(\privT;D\right)-\calL\left(\theta^*;D\right)&\leq \frac{1}{T}\sum\limits_{t=1}^T\ip{\boldg_t}{\theta_t-\theta^*}
\label{eq:12as}
\end{align}
Let $V$ be the eigenbasis of $\sum\limits_{i=1}^n \bfx_i \bfx_i^T$ and let $M=VV^T$. $M$ is a positive semidefinite matrix and it defines a seminorm $\snorm{\cdot}{M}$ (by Definition \ref{def:seminorm}).
Let $\boldb_t$ be the Gaussian noise vector added at time step $t$.
To bound the error in \eqref{eq:12as}, we will use a potential argument w.r.t. the potential function 
\ifaistats
\begin{align*}
\Psi_{t}(\theta)&=\mathbb{E}_{\boldb_1,\ldots,\boldb_t}\left[\snorm{\theta-\theta^*}{M}^2\right]\\
&=\mathbb{E}_{\boldb_1,\ldots,\boldb_{t-1}}\left[\mathbb{E}_{\boldb_t}\left[\left.\snorm{\theta-\theta^*}{M}^2\right\vert \boldb_1,\ldots,\boldb_{t-1}\right]\right].
\end{align*}
\else
\begin{align*}
\Psi_{t}(\theta)&=\mathbb{E}_{\boldb_1,\ldots,\boldb_t}\left[\snorm{\theta-\theta^*}{M}^2\right]=\mathbb{E}_{\boldb_1,\ldots,\boldb_{t-1}}\left[\mathbb{E}_{\boldb_t}\left[\left.\snorm{\theta-\theta^*}{M}^2\right\vert \boldb_1,\ldots,\boldb_{t-1}\right]\right].
\end{align*}
\fi
Recall that the update step in DP-GD is $\theta_{t+1}\leftarrow\theta_t-\eta\left(\boldg_t+\boldb_t\right)$. We get the following by simple algebraic manipulation:
\ifaistats
{\begin{align}
\Psi_{t}(\theta_{t+1})&=\mathbb{E}_{\boldb_1,\dots,\boldb_t}\left[\snorm{\left(\theta_{t}-\theta^*\right)-\eta(\boldg_t+\boldb_t)}{M}^2\right]\nonumber\\
&=\Psi_{t}(\theta_t)-2\eta\mathbb{E}_{\boldb_1,\dots,\boldb_t}\left[\ipn{\boldg_t+\boldb_t}{\theta_t-\theta^*}{M}\right]\nonumber\\
&+\eta^2\mathbb{E}_{\boldb_1,\dots,\boldb_t}\left[\snorm{\boldg_t+\boldb_t}{M}^2\right]
\end{align}
\begin{align}
&=\Psi_{t}(\theta_t)-2\eta\mathbb{E}_{\boldb_1,\dots,\boldb_t}\left[\ip{\boldg_t+\boldb_t}{\theta_t-\theta^*}\right]\nonumber\\
&+\eta^2\mathbb{E}_{\boldb_1,\dots,\boldb_t}\left[\snorm{\boldg_t+\boldb_t}{M}^2\right]\label{eq:abc132}\\
&\leq\Psi_{t}(\theta_t)-2\eta\mathbb{E}_{\boldb_1,\dots,\boldb_{t-1}}\left[\ip{\boldg_t}{\theta_t-\theta^*}\right]\nonumber\\  &+\eta^2\left(L^2+\mathbb{E}_{\boldb_t}\left[\snorm{\boldb_t}{M}^2\right]\right)\nonumber\\
&=\Psi_{{t-1}}(\theta_t)-2\eta\mathbb{E}_{\boldb_1,\dots,\boldb_{t-1}}\left[\ip{\boldg_t}{\theta_t-\theta^*}\right]\nonumber\\  &+\eta^2\left(L^2+\mathbb{E}_{\boldb_t}\left[\snorm{\boldb_t}{M}^2\right]\right)\nonumber\\
&=\Psi_{{t-1}}(\theta_t)-2\eta\mathbb{E}_{\boldb_1,\dots,\boldb_{t-1}}\left[\ip{\boldg_t}{\theta_t-\theta^*}\right]\nonumber\\  &+\eta^2\left(L^2+\rank(M)\cdot \sigma^2\right).
\label{eq:anc143}
\end{align}}
\else
\begin{align}
\Psi_{t}(\theta_{t+1})&=\mathbb{E}_{\boldb_1,\dots,\boldb_t}\left[\snorm{\left(\theta_{t}-\theta^*\right)-\eta(\boldg_t+\boldb_t)}{M}^2\right]\nonumber\\
&=\Psi_{t}(\theta_t)-2\eta\mathbb{E}_{\boldb_1,\dots,\boldb_t}\left[\ipn{\boldg_t+\boldb_t}{\theta_t-\theta^*}{M}\right]+\eta^2\mathbb{E}_{\boldb_1,\dots,\boldb_t}\left[\snorm{\boldg_t+\boldb_t}{M}^2\right] \\
&=\Psi_{t}(\theta_t)-2\eta\mathbb{E}_{\boldb_1,\dots,\boldb_t}\left[\ip{\boldg_t+\boldb_t}{\theta_t-\theta^*}\right]+\eta^2\mathbb{E}_{\boldb_1,\dots,\boldb_t}\left[\snorm{\boldg_t+\boldb_t}{M}^2\right]\label{eq:abc132}\\
&\leq\Psi_{t}(\theta_t)-2\eta\mathbb{E}_{\boldb_1,\dots,\boldb_{t-1}}\left[\ip{\boldg_t}{\theta_t-\theta^*}\right]+\eta^2\left(L^2+\mathbb{E}_{\boldb_t}\left[\snorm{\boldb_t}{M}^2\right]\right)\nonumber\\
&=\Psi_{{t-1}}(\theta_t)-2\eta\mathbb{E}_{\boldb_1,\dots,\boldb_{t-1}}\left[\ip{\boldg_t}{\theta_t-\theta^*}\right]+\eta^2\left(L^2+\mathbb{E}_{\boldb_t}\left[\snorm{\boldb_t}{M}^2\right]\right)\nonumber\\
&=\Psi_{{t-1}}(\theta_t)-2\eta\mathbb{E}_{\boldb_1,\dots,\boldb_{t-1}}\left[\ip{\boldg_t}{\theta_t-\theta^*}\right]+\eta^2\left(L^2+\rank(M)\cdot \sigma^2\right).
\label{eq:anc143}
\end{align}
\fi
where \eqref{eq:abc132} follows because $\boldg_t$ lies in the subspace $M$, and \eqref{eq:anc143} follows because $\boldb_t\sim\mathcal{N}(0,\sigma^2 I_p)$ and thus $\mathbb{E}_{\boldb_t}\left[\snorm{\boldb_t}{M}^2\right]=\rank(M)\cdot\sigma^2$.
Rearranging the terms in \eqref{eq:anc143}, we have the following.
\ifaistats
\begin{align}
\mathbb{E}_{\boldb_1,\dots,\boldb_{t-1}}\left[\ip{\boldg_t}{\theta_t-\theta^*}\right]
&\leq\frac{1}{2\eta}\left(\Psi_{{t-1}}(\theta_t)-\Psi_{{t}}(\theta_{t+1})\right)\nonumber\\
&+\frac{\eta}{2}\left(L^2+\rank(M)\cdot\sigma^2\right).
\label{eq:abc321}
\end{align}
\else
\begin{align}
\mathbb{E}_{\boldb_1,\dots,\boldb_{t-1}}\left[\ip{\boldg_t}{\theta_t-\theta^*}\right]
\leq\frac{1}{2\eta}\left(\Psi_{{t-1}}(\theta_t)-\Psi_{{t}}(\theta_{t+1})\right)+\frac{\eta}{2}\left(L^2+\rank(M)\cdot\sigma^2\right).
\label{eq:abc321}
\end{align}
\fi
Let $\Psi(\theta)=\snorm{\theta-\theta^*}{M}^2$.
Summing up \eqref{eq:abc321} for all $t \in [T]$, averaging over the $T$ iterations, and combining with \eqref{eq:12as}, we get:
\ifaistats
\begin{align}
&\mathbb{E}\left[\calL\left(\privT;D\right)\right]-\calL\left(\theta^*;D\right)
\nonumber\\
&\leq\frac{1}{2T\eta}\Psi(\boldsymbol{0})+\frac{\eta}{2}\left(L^2+\rank(M)\cdot\sigma^2\right)
\label{eq:ls134}
\end{align}
\else
\begin{align}
\mathbb{E}\left[\calL\left(\privT;D\right)\right]-\calL\left(\theta^*;D\right)
\leq\frac{1}{2T\eta}\Psi(\boldsymbol{0})+\frac{\eta}{2}\left(L^2+\rank(M)\cdot\sigma^2\right)
\label{eq:ls134}
\end{align}
\fi
Setting $\eta$ to minimize the RHS, we have
\ifaistats
\begin{align*}
&\mathbb{E}\left[\calL\left(\privT;D\right)\right]-\calL\left(\theta^*;D\right) \\
&\leq \snorm{\theta^*}{M} \sqrt{\frac{L^2+\rank(M)\cdot\sigma^2}{T}} \\
&= \snorm{\theta^*}{M} \sqrt{\frac{L^2}{T}+\frac{2L^2 \log(1/\delta)\cdot\rank(M)}{n^2\eps^2}},
\end{align*}
\else
\begin{align*}
\mathbb{E}\left[\calL\left(\privT;D\right)\right]-\calL\left(\theta^*;D\right)
&\leq \snorm{\theta^*}{M} \sqrt{\frac{L^2+\rank(M)\cdot\sigma^2}{T}} \\
&= \snorm{\theta^*}{M} \sqrt{\frac{L^2}{T}+\frac{2L^2 \log(1/\delta)\cdot\rank(M)}{n^2\eps^2}},
\end{align*}
\fi
where the equality follows by plugging in $\sigma=\frac{L\sqrt{2T\log(1/\delta)}}{n\epsilon}$.
Now, setting $T=n^2\epsilon^2$, we have
\ifaistats
\begin{align*}
&\mathbb{E}\left[\calL\left(\privT;D\right)\right]-\calL\left(\theta^*;D\right)\\
&\leq \frac{L\snorm{\theta^*}{M}\sqrt{{1 + 2\cdot\sf rank}(M)\cdot\log(1/\delta)}}{\epsilon\cdot n}.
\end{align*} 
\else
\begin{align*}
\mathbb{E}\left[\calL\left(\privT;D\right)\right]-\calL\left(\theta^*;D\right)
\leq \frac{L\snorm{\theta^*}{M}\sqrt{1 + 2\cdot\rank(M)\cdot\log(1/\delta)}}{\epsilon\cdot n}.
\end{align*} 
\fi
This completes the proof.
\end{proof}

The lower-bound in \citet{BST14} shows that if one performs constrained optimization with DP, then the excess empirical risk is $\widetilde{\Omega}(\sqrt{p}/(\epsilon n))$. This lower bound holds true for generalized linear problems as well. However, since we consider \emph{unconstrained} optimization here, the lower bound does not apply to our result.
In fact,
the lower bound does not hold even for general convex functions, as long as the underlying optimization problem is unconstrained. Subsequent to our work, \citet{kairouz2020dimension} showed that via adaptive preconditioning methods, one can get dimension independent empirical risk bounds for general convex problems as long as the gradients lie in a low-rank subspace. However, for the specific case of GLMs, their bounds are weaker than ours, and depend on $\max\limits_{t\in T}\|\theta_t-\theta^*\|_2$, rather than $\snorm{\theta^*}{M}$ (as defined in Theorem~\ref{thm:dimIndependence}).

The guarantee in Theorem~\ref{thm:dimIndependence} is of the same flavor as in \citet{jain2014near}, wherein such a result was shown for two different DP algorithms, namely, \emph{output perturbation}, and \emph{objective perturbation} \citep{chaudhuri2011differentially, kifer2012private}. 
Our result via $\adpgd$ improves the state-of-the-art in the following ways.
First, the result in \citet{jain2014near} was only for the worst-case setting where $\rank(M)=n$, whereas our results extend to any rank setting. 
Second, unlike {output perturbation} and {objective perturbation}, $\adpgd$ \emph{does not} require {convexity to ensure DP}. As a result, $\adpgd$ can be applied to non-convex losses and may enjoy the same dimension-independent behavior as in the convex case. (We show evidence of this in Section~\ref{sec:privNonConvex}.)
Third, our results for DP-GD almost seamlessly transfer to the local differential privacy (LDP) setting\footnote{Obtaining LDP guarantee results in scaling up the Gaussian noise in Algorithm~$\adpgd$ by $\sqrt{n}$ factor.}~\citep{Warner,evfimievski2003limiting,KLNRS}. This is the \emph{first} dimension-independent excess risk guarantee in the LDP setting. Output perturbation and objective perturbation are incompatible with LDP, as they require a centralized dataset to operate. 

\mypar{Obtaining optimal excess population risk guarantee} We translate the excess empirical risk guarantee in Theorem~\ref{thm:dimIndependence} to population risk guarantee via the standard approach of uniform stability~\citep{BassilyFTT19,bassily2020stability}. The argument in Theorem~\ref{thm:dimIndPop} is identical to that in~\citep{bassily2020stability}, except that we operate with $\snorm{\cdot}{M}$ (defined in Theorem~\ref{thm:dimIndPop}) instead of the $\ell_2$-norm. Note the bound is asymptotically the same as the optimal excess population risk in \citet{BassilyFTT19}, except that the ambient dimensionality ($p$) is replaced with $\min\{\rank(M),n\}$. We provide the proof of Theorem~\ref{thm:dimIndPop} in Appendix~\ref{app:pop}.

\begin{thm}
Let $\theta_0=\mathbf{0}^p$ be the initial point of $\adpgd$.
Let $D=\{(\boldx_1,y_1),\ldots,(\boldx_n,y_n)\}$ be drawn i.i.d. from a distribution $\tau$.
Let $\thetapop=\argmin\limits_{\theta\in\bfR^p}\mathbb{E}_{(\boldx,y)\sim\tau}\left[\ell(\ip{\boldx}{\theta};y)\right]$, and $M$ be the projector to the eigenspace of the matrix $\mathbb{E}_{\bfx\sim\tau}\left[\bfx\bfx^T\right]$. 
Let $L$ be the gradient $\ell_2$-norm bound, and $k=\min\{\rank(M),n\}$. 
Setting the constraint set $\calC=\boldR^p$ and running $\adpgd$ on $\calL(\theta;D)$ for $T=n^2$ steps with appropriate learning rate $\eta$, we obtain
\ifaistats
\begin{align*}
&\mathbb{E}_{\adpgd,D,(\bfx,y)\sim\tau}\left[\ell(\ip{\privT}{\bfx};y)-\ell(\ip{\thetapop}{\bfx};y)\right]\\&
=\snorm{\thetapop}{M}\cdot O\left(\max\left\{\frac{1}{\sqrt n},\frac{L\sqrt{k\cdot\log(1/\delta)}}{\epsilon n}\right\}\right).
\end{align*}
\else
\begin{align*}
&\mathbb{E}_{\adpgd,D,(\bfx,y)\sim\tau}\left[\ell(\ip{\privT}{\bfx};y)-\ell(\ip{\thetapop}{\bfx};y)\right]=\snorm{\thetapop}{M}\cdot O\left(\max\left\{\frac{1}{\sqrt n},\frac{L\sqrt{k\cdot\log(1/\delta)}}{\epsilon n}\right\}\right).
\end{align*}
\fi
\label{thm:dimIndPop}
\end{thm}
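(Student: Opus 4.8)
The plan is to follow the uniform-stability route of \citet{bassily2020stability} for turning an excess empirical risk bound into an excess population risk bound, carrying out every estimate with the seminorm $\snorm{\cdot}{M}$ in place of the Euclidean norm. Writing $\calP(\theta):=\E_{(\bfx,y)\sim\tau}[\ell(\ip{\theta}{\bfx};y)]$, I would start from the decomposition
\[
\E\!\left[\calP(\privT)\right]-\calP(\thetapop)=\E\!\left[\calP(\privT)-\calL(\privT;D)\right]+\E\!\left[\calL(\privT;D)-\calL(\thetapop;D)\right]+\Big(\E\!\left[\calL(\thetapop;D)\right]-\calP(\thetapop)\Big),
\]
where the expectation is over both $D\sim\tau^n$ and the coins of $\adpgd$. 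The third bracket is exactly $0$ since $\thetapop$ is data-independent and $D$ is i.i.d.; the first is the generalization gap and the second is the optimization error relative to $\thetapop$.

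For the optimization term I would re-run the potential argument from the proof of Theorem~\ref{thm:dimIndependence} essentially verbatim, with two cosmetic changes. First, replace the comparator $\theta^*$ by $\thetapop$: this is legitimate because the linearization step $\calL(\privT;D)-\calL(\thetapop;D)\le\frac1T\sum_t\ip{\boldg_t}{\theta_t-\thetapop}$ only uses convexity and holds for any fixed point. Second, take the projector to be $M_D$, the projector onto $\mathrm{span}\{\bfx_1,\dots,\bfx_n\}$. The key observation is that $\bfx\in\mathrm{range}(M)$ almost surely (the kernel of $\E[\bfx\bfx^T]$ is $\{v:\ip{v}{\bfx}=0\text{ a.s.}\}$), hence a.s.\ $\mathrm{range}(M_D)\subseteq\mathrm{range}(M)$, so that $\snorm{\thetapop}{M_D}\le\snorm{\thetapop}{M}$ and $\rank(M_D)\le\min\{\rank(M),n\}=k$. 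Since the gradients $\boldg_t$ lie in $\mathrm{range}(M_D)$ and $\E\,\snorm{\boldb_t}{M_D}^2=\rank(M_D)\,\sigma^2\le k\sigma^2$, the same algebra yields
\[
\E\!\left[\calL(\privT;D)-\calL(\thetapop;D)\right]\le\frac{\snorm{\thetapop}{M}^2}{2T\eta}+\frac{\eta}{2}\bigl(L^2+k\sigma^2\bigr),\qquad \sigma^2=\frac{2L^2T\log(1/\delta)}{n^2\epsilon^2}.
\]

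For the generalization term I would invoke uniform stability. DP-GD is noisy full-batch gradient descent, and the Gaussian perturbation is data-independent, so it plays no role in a coupling of the two trajectories; stability is controlled entirely by the gradient updates. Since each per-example loss $\ell(\ip{\cdot}{\bfx_i};y_i)$ is convex and $L$-Lipschitz, replacing one of the $n$ summands perturbs the averaged gradient by at most $2L/n$ per step, and following \citet{bassily2020stability} — whose analysis covers the \emph{nonsmooth} case, where gradient steps need not be nonexpansive and an averaging argument is needed — $\adpgd$ run for $T$ steps with constant step size $\eta$ is uniformly stable with parameter $O(\eta L^2 T/n)$, which then bounds $\E[\calP(\privT)-\calL(\privT;D)]$.

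Combining the two displays with $T=n^2$ (so $\sigma^2=2L^2\log(1/\delta)/\epsilon^2$), the excess population risk is $O\!\big(\snorm{\thetapop}{M}^2/(\eta n^2)+\eta L^2+\eta L^2 k\log(1/\delta)/\epsilon^2+\eta L^2 n\big)$; choosing $\eta=\Theta\!\big(\tfrac{\snorm{\thetapop}{M}}{L}\min\{n^{-3/2},\ \epsilon\,n^{-1}(k\log(1/\delta))^{-1/2}\}\big)$ — the first branch being relevant when $k\log(1/\delta)\le\epsilon^2 n$ and the second otherwise — balances all four terms at $\snorm{\thetapop}{M}\cdot O\!\big(\max\{1/\sqrt n,\ L\sqrt{k\log(1/\delta)}/(\epsilon n)\}\big)$. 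Privacy is inherited directly from Theorem~\ref{thm:priv}, since the noise variance is exactly the one prescribed there with $T=n^2$. The only conceptual departure from the $\ell_2$ template of \citet{bassily2020stability} is the move to the seminorm: one must check that the loss, the descent dynamics, and the potential all ``see'' only the projection of the iterates onto $\mathrm{range}(M_D)$, so that the $p$-dimensional noise contributes only $\rank(M_D)\le k$ to the variance measured by $\snorm{\cdot}{M_D}$, while the noise-induced drift of $\theta_t$ orthogonal to $\mathrm{range}(M_D)$ is invisible to both $\calL(\cdot;D)$ and $\snorm{\cdot-\thetapop}{M_D}$. I expect the nonsmooth uniform-stability step, though standard, to be the most delicate ingredient, and I would import it wholesale from \citet{bassily2020stability}; relating $M_D$ to $M$ via the almost-sure containment is the other point requiring care.
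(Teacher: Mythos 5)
Your proposal is correct and follows essentially the same route as the paper's Appendix~\ref{app:pop} proof: decompose into generalization plus empirical optimization error, rerun the Theorem~\ref{thm:dimIndependence} potential argument with comparator $\thetapop$ and the empirical projector $M_D$ (using the a.s.\ containment of $\mathrm{range}(M_D)$ in $\mathrm{range}(M)$ so that $\snorm{\thetapop}{M_D}\le\snorm{\thetapop}{M}$ and $\rank(M_D)\le k$), control generalization via the nonsmooth uniform (argument) stability of noisy full-batch GD in the spirit of \citet{bassily2020stability} (the paper's Lemma~\ref{lem:approxStab}), and balance with the learning rate. The only nit is that the stability parameter you quote, $O(\eta L^2 T/n)$, omits the nonsmooth $\eta L^2\sqrt{T}$ term appearing in the paper's bound $4L\eta(T/n+\sqrt{T})$; since $T=n^2$ makes both terms of order $\eta L^2 n$, which you do carry in your final accounting, the conclusion is unaffected.
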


\subsection{Lower-bound via Fingerprinting Codes}
\label{sec:lbfp}
Next, we prove the optimality of the upper bound in Theorem~\ref{thm:dimIndependence} up to a $O(\sqrt{\log(1/\delta)})$ factor\footnote{This dependence on the $\delta$ parameter can be introduced into the lower bound by a generic group privacy reduction \citep{SteinkeU15}.} by proving a matching lower bound. The lower bound is attained by a simple convex GLM given by $\ell(\ip{\theta}{\bfx};y) = |\langle \theta, \bfx \rangle - y|$; this loss is $L$-Lipschitz for $L=\|\bfx\|_2$ and is always non-negative. 

We emphasize that our lower bound holds in the unconstrained setting where $\calC = \bfR^p$. Prior lower bounds \citep{BST14} heavily relied on $\theta$ being constrained; indeed, if we remove the constraint on $\theta$ in prior lower bounds, either there is no minimum value (i.e.,~$\inf_{\theta \in \bfR^p} \calL(\theta;D) = -\infty$) or, if we avoid this with a regularizer, the loss is no longer Lipschitz.

\begin{thm}
Let $\calA : (\bfR^p \times \bfR)^n \to \bfR^p$ be an arbitrary $(\varepsilon,\delta)$-differentially private algorithm with $\varepsilon \le c$ and $\delta \le c\varepsilon/n$ for some universal constant $c>0$. Let $1 \le d \le p$ be an integer. Let $\ell : \bfR^p \times (\bfR^p \times \bfR) \to \bfR$ be defined by $\ell(\ip{\theta}{\bfx};y) = |\langle \theta, \bfx \rangle - y|$. Then there exists a data set $D = ((\bfx_1,y_1),\cdots,(\bfx_n,y_n)) \in (\bfR^p \times \bfR)^n$ such that the following holds. 
For all $i \in [n]$, $y_i \in \{0,1\}$ and $\|\bfx_i\|_2 \le 1$, $\|\partial_\theta \ell(\theta;(\bfx_i,y_i))\|_2 \le 1$ for all $\theta \in \bfR^p$. Let $\privT = \calA(D)$ and $\theta^*=\argmin\limits_{\theta\in\bfR^p}\calL(\theta;D)$ (breaking ties towards lower $\|\theta\|_2$), where $\calL(\theta;D)=\frac{1}{n} \sum_{i=1}^n \ell(\ip{\theta}{\bfx_i};y_i)$. Then we have $\rank(\sum_{i=1}^n \bfx_i \bfx_i^\top)\le d$ and $\|\theta^*\|_2 \le \sqrt{d}$ and
\ifaistats
\begin{align*}
&\mathbb{E}\left[\calL\left(\privT;D\right)\right]-\calL\left (\theta^*;D\right) \ge  \Omega\left(\min\left\{1,\frac{d}{\varepsilon n}\right\}\right) \\&~~\ge \Omega\left(\min\left\{ 1 , \frac{\|\theta^*\|_2 \cdot \sqrt{\rank\left(\sum_{i=1}^n \bfx_i \bfx_i^\top\right)}}{\varepsilon n}\right\}\right).
\end{align*}
\else
\begin{align*}
&\mathbb{E}\left[\calL\left(\privT;D\right)\right]-\calL\left (\theta^*;D\right) \ge  \Omega\left(\min\left\{1,\frac{d}{\varepsilon n}\right\}\right) \ge \Omega\left(\min\left\{ 1 , \frac{\|\theta^*\|_2 \cdot \sqrt{\rank\left(\sum_{i=1}^n \bfx_i \bfx_i^\top\right)}}{\varepsilon n}\right\}\right).
\end{align*}
\fi
\label{thm:ERM-lb}
\end{thm}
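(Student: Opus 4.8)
The plan is to reduce the GLM instance to a private bit-vector recovery problem and close it with a group-privacy (packing / fingerprinting-style) argument. Fix a small universal constant $c_0$ (say $c_0=\ln(3/2)$), and set the number of \emph{informative} points per coordinate to $m=\min\{\lfloor n/d\rfloor,\lfloor c_0/\varepsilon\rfloor\}$ with all $d$ coordinates active when $n\ge d$, and $m=1$ with only $d'=n$ active coordinates when $n<d$; in all cases $m\cdot(\#\text{active})\le n$ and $\varepsilon m\le c_0$. The hard instance is indexed by a bit string $\mathbf b$ over the active coordinates: for each active $j$ include $m$ copies of the point $(\bfx,y)=(\bfe_j,b_j)$, and fill the remaining $n-m\cdot(\#\text{active})$ slots with ``dummy'' points $(\mathbf 0,0)$. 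Every structural claim is then immediate: $y_i\in\{0,1\}$, $\|\bfx_i\|_2\le 1$, $\|\partial_\theta\ell(\theta;(\bfx_i,y_i))\|_2\le 1$ for all $\theta$; $\sum_i\bfx_i\bfx_i^\top=m\sum_{j\text{ active}}\bfe_j\bfe_j^\top$ has rank $\#\text{active}\le d$; and since the dummies contribute $0$ and each coordinate's loss is $m|\theta_j-b_j|$, the minimizer is $\theta^*=(\mathbf b\text{ on active coords},\,0\text{ elsewhere})$, so $\|\theta^*\|_2\le\sqrt{\#\text{active}}\le\sqrt d$ and $\calL(\theta^*;D)=0$. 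Hence the excess risk of any output equals $\calL(\privT;D)$ itself.

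Next I would turn the excess risk into an $\ell_1$ recovery error. Post-composing $\calA$ with the map that clips each active coordinate to $[0,1]$ and zeroes the inactive ones is $(\varepsilon,\delta)$-DP post-processing and only decreases $\calL(\cdot;D)$, so we may assume $\privT\in[0,1]$ on the active coordinates (and $0$ elsewhere); then $\calL(\privT;D)=\tfrac{m}{n}\sum_{j\text{ active}}|(\privT)_j-b_j|=\tfrac{m}{n}\|\privT-\mathbf b\|_1$. It therefore suffices to find $\mathbf b$ with $\mathbb{E}\|\privT-\mathbf b\|_1=\Omega(\#\text{active})$, since then the excess is $\Omega\big(m\cdot\#\text{active}/n\big)$, and plugging in the definition of $m$ gives $\Omega(\min\{1,d/(\varepsilon n)\})$ throughout ($\Omega(d/(\varepsilon n))$ when $\varepsilon n\gtrsim d$, $\Omega(1)$ when $\varepsilon n\lesssim d$, and $m\cdot\#\text{active}/n=1$ directly when $n<d$).

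For the recovery bound I would take $\mathbf b$ uniform and argue coordinate-by-coordinate by a two-point (Le Cam) comparison. Flipping coordinate $j$ changes exactly $m$ data points, so group privacy makes $\calA(D_{\mathbf b})$ and $\calA(D_{\mathbf b^{\oplus j}})$ $(\varepsilon m,\delta')$-indistinguishable with $\varepsilon m\le c_0$ and $\delta'=\delta\cdot\frac{e^{\varepsilon m}-1}{e^{\varepsilon}-1}\le\frac{c_0}{n}$ (using $\delta\le c\varepsilon/n$). Applying indistinguishability to the bounded statistics $(\privT)_j$ and $1-(\privT)_j$ bounds $\big|\mathbb{E}[(\privT)_j\mid D_{\mathbf b}]-\mathbb{E}[(\privT)_j\mid D_{\mathbf b^{\oplus j}}]\big|\le (e^{\varepsilon m}-1)+\delta'$, and since flipping $b_j$ swaps which of $\{0,1\}$ is ``correct,'' a one-line computation yields $\mathbb{E}_{\mathbf b,\calA}|(\privT)_j-b_j|\ge\tfrac12\big(1-(e^{\varepsilon m}-1)-\delta'\big)\ge\tfrac18$ for $c_0$ small enough. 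Summing over the active coordinates gives $\mathbb{E}_{\mathbf b}\|\privT-\mathbf b\|_1\ge\tfrac18\,\#\text{active}$, so some fixed $\mathbf b$ (hence a fixed $D$) works. The final displayed inequality follows from $\|\theta^*\|_2\cdot\sqrt{\rank(\sum_i\bfx_i\bfx_i^\top)}\le\sqrt d\cdot\sqrt d=d$, so $\min\{1,\|\theta^*\|_2\sqrt{\rank}/(\varepsilon n)\}\le\min\{1,d/(\varepsilon n)\}$.

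The main obstacle, and the only genuinely non-routine part, is the design of the instance: using the $|\cdot|$-loss with $\{0,1\}$-valued responses so the minimizer sits at the corner $\mathbf b$ and the excess risk becomes a scaled $\ell_1$ recovery error, and — crucially — padding with dummy points so that each coordinate carries only $\Theta(\min\{n/d,1/\varepsilon\})$ informative samples. Without the padding (all $n/d$ samples informative per coordinate), a DP algorithm can recover every bit once $\varepsilon n\gg d$ and the best this approach yields is $\sqrt d/(\varepsilon n)$; the padding is exactly what forces each coordinate to be ``$\varepsilon m=O(1)$-hard,'' which buys the extra $\sqrt d$. Verifying $\|\theta^*\|_2\le\sqrt d$, the rank bound, and the separate sample-size regimes is then mechanical. (The paper runs the recovery step through fingerprinting codes; in the stated regime $\delta\le c\varepsilon/n$ the group-privacy argument above already suffices, and fingerprinting is the tool one would need to push $\delta$ larger.)
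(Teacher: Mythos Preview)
Your argument is correct and takes a genuinely different route from the paper's.

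The paper constructs a \emph{distributional} instance: $X$ is a uniformly random standard basis vector with probability $\alpha$ and $\mathbf 0$ otherwise, $Z$ is drawn from a product distribution with a $\mathsf{Beta}(\beta,\beta)$ prior on its mean, and $Y=\langle Z,X\rangle$. It then applies the fingerprinting lemma to lower-bound the \emph{population} risk, and finally transfers this to an empirical-risk bound via a second differential-privacy argument (Lemma~\ref{lem:dp-ex}). Your construction is instead fully deterministic given $\mathbf b$: $m$ identical copies of $(\bfe_j,b_j)$ per active coordinate, padded with zeros. The role of the paper's sparsification parameter $\alpha$ is played by your choice $m\approx\min\{n/d,1/\varepsilon\}$: both devices ensure that each coordinate is supported by only $O(1/\varepsilon)$ informative samples, which is what converts a $\sqrt d/(\varepsilon n)$ packing bound into the full $d/(\varepsilon n)$.

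What you gain is simplicity: you bypass the fingerprinting lemma and the population-to-empirical conversion entirely, replacing them with group privacy and a coordinate-wise two-point (Le~Cam) bound, and you work with the empirical risk directly. What the paper's approach buys is robustness to $\delta$: fingerprinting yields the bound under the milder condition $\delta\lesssim\varepsilon$, whereas your group-privacy step needs $\delta'\le\delta\cdot\frac{e^{\varepsilon m}-1}{e^\varepsilon-1}=O(\delta/\varepsilon)$ to be a small constant, which is exactly the stated assumption $\delta\le c\varepsilon/n$. Your closing parenthetical identifies this trade-off correctly.
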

The parameter $d$ determines the complexity of the lower bound instance and controls the rank of the feature matrix. %

We now sketch the proof of Theorem \ref{thm:ERM-lb}. The full details are in \ifsupp Appendix \ref{sec:mspdim}.
\else the supplementary material. \fi
The proof is based on the powerful fingerprinting technique for proving differential privacy lower bounds \citep{bun2018fingerprinting,SteinkeU17,DworkSSUV15}, which was originally developed in the cryptography literature \citep{BonehS98,Tardos08}.
A fingerprinting code provides a sequence of vectors $\bfz_1,\cdots,\bfz_n \in \{0,1\}^d$, which we use to construct the hard dataset with each $\bfz_i$ corresponding to the data of individual $i$. The guarantee of the fingerprinting code is that it is not possible to privately estimate the average $\frac{1}{n} \sum_{i=1}^n \bfz_i \in [0,1]^d$ to within a certain level of accuracy (depending on the privacy parameters $\varepsilon$ and $\delta$, the dimensionality $d$, and the number of individuals $n$).

We construct the hard dataset by setting $y_i = \langle \bfz_i , \bfx_i \rangle$ for all $i \in [n]$. Intuitively, to obtain low loss $\calL(\privT;D) = \frac{1}{n} \sum_{i=1}^n |\langle \privT , \bfx_i \rangle - y_i| = \frac{1}{n} \sum_{i=1}^n |\langle \privT - \bfz_i , \bfx_i \rangle |$ the algorithm must ensure $\privT \approx \frac{1}{n} \sum_{i=1}^n \bfz_i$, but this is impossible to do privately due to the properties of the fingerprinting code. This is the basis of our lower bound.

The only remaining part the construction is the feature vectors $\bfx_1, \cdots, \bfx_n \in \{0,1\}^d$. These are either standard basis vectors (i.e., one $1$) or all zeros.  These are chosen so that $ \frac{1}{n} \sum_{i=1}^n |\langle \privT - \bfz_i , \bfx_i \rangle | = \Theta(\| \privT - \frac{1}{n} \sum_{i=1}^n \bfz_i \|_1)$, which suffices for the above argument.
\label{sec:lbdpgd}

Independently and subsequent to our work, \citet{CaiWZ20} prove lower bounds for estimating the parameters of a GLM. In contrast, our lower bound is stated in terms of the loss. They also give dimension-dependent upper bounds.
\section{Dimension Independent First-order Convergence for Non-convex GLMs}
\label{sec:privNonConvex}

In this section, we provide an extension to Theorem \ref{thm:dimIndependence} that captures the setting when the loss function 
$\ell(z;\cdot)$ may be non-convex in $z$. Such loss functions appear commonly in robust regression, such as Savage loss~\citep{masnadi2009design}, Tangent loss~\citep{masnadi2010design}, and tempered loss~\citep{amid2019robust}. 
We show that as long as $\ell(z;\cdot)$ is $\beta$-smooth (see Definition~\ref{def:smooth}), 
$\adpgd$ (Algorithm~\ref{Alg:Frag}) approximately reaches a stationary point on the objective function $\calL(\theta;D)$, where $\theta$ is called a stationary point if $\nabla\calL(\theta;D)=0$. Similar to Theorem~\ref{thm:dimIndependence}, the convergence guarantee in this section will have no explicit dependence on the number of dimensions. We use a folklore argument stated in \citet{allen2018natasha} to prove our result. The proof of Theorem~\ref{thm:util31} is in \ifsupp Appendix~\ref{app:nonconvex}. \else the supplementary material. \fi

\begin{thm}
Recall the notation in Theorem \ref{thm:dimIndependence}. Let $\Theta = [\theta_0,\ldots,\theta_T]$ be the list of models output by $\adpgd$. Let $t_{\sf priv}\leftarrow \argmin\limits_{t\in \{0,\dots,T-1\}}\snorm{\nabla\calL(\theta_t;D)}{M}+{\sf Lap}\left(\frac{4L}{n}\right)$. Then, the algorithm that outputs the set $\Theta$ in conjunction with $t_{\sf priv}$ is $(2\epsilon,\delta)$-differentially private. Furthermore, as long as $T\geq \frac{\beta n^2\epsilon^2\cdot\calL(\mathbf{0}^p;D)}{2L^2\log\left(\frac{1}{\delta}\right)}$, we have with probability at least $1-\gamma$,
\ifaistats
\begin{align*}
&  \snorm{\nabla\calL(\theta_{t_{\sf priv}};D)}{2} 
= \snorm{\nabla\calL(\theta_{t_{\sf priv}};D)}{M} \\
&= O\left(\frac{L}{\epsilon n}\cdot\sqrt{\rank(M)\cdot\log\left(\frac{1}{\delta}\right)\log\left(\frac{T}{\gamma}\right)}\right).    
\end{align*}
\else
\begin{align*}
  \snorm{\nabla\calL(\theta_{t_{\sf priv}};D)}{2} 
= \snorm{\nabla\calL(\theta_{t_{\sf priv}};D)}{M}
= O\left(\frac{L}{\epsilon n}\cdot\sqrt{\rank(M)\cdot\log\left(\frac{1}{\delta}\right)\log\left(\frac{T}{\gamma}\right)}\right).    
\end{align*}
\fi
Here, $L$ is the Lipschitz constant, $\beta$ is the smoothness constant of $\calL(\theta;D)$. We set a constant learning rate in Algorithm \ref{Alg:Frag} as $\eta=\frac{1}{\beta}$, and $\theta_0=\boldsymbol{0}^p$. 
Notice that $\rank(M)\leq n$ always holds but $\rank(M)$ can be much smaller than $n$.
\label{thm:util31}
\end{thm}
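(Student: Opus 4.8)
The plan is to pair a composition‑based privacy argument with a utility bound that adapts the folklore first‑order convergence of gradient descent on $\beta$‑smooth nonconvex objectives (the argument of \citet{allen2018natasha}) to the noisy GLM setting, exploiting that for a GLM every gradient — and hence the only part of the injected noise that matters — lies in $\mathrm{range}(M)$.

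\emph{Privacy.} The list $\Theta=[\theta_0,\ldots,\theta_T]$ is exactly the trajectory of $\adpgd$, which by Theorem~\ref{thm:priv} is $(\epsilon,\delta)$‑DP with noise variance $\sigma^2=\frac{2L^2T\log(1/\delta)}{(n\epsilon)^2}$. To charge the release of $t_{\sf priv}$, condition on a fixed trajectory $\Theta$: each query $q_t(D):=\snorm{\nabla\calL(\theta_t;D)}{M}$ has $\ell_2$‑sensitivity at most $2L/n$ (replacing one sample changes $\nabla\calL$ by at most $2L/n$ in $\ell_2$‑norm, hence by at most that much in the seminorm $\snorm{\cdot}{M}$), so $t_{\sf priv}=\argmin_t\big(q_t(D)+{\sf Lap}(4L/n)\big)$ is an instance of the report‑noisy‑min mechanism with Laplace noise calibrated to this sensitivity, and is $\epsilon$‑DP. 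Adaptive composition then gives $(2\epsilon,\delta)$‑DP for the joint release $(\Theta,t_{\sf priv})$.

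\emph{Utility.} For each $i$, $\nabla_\theta\ell(\ip{\theta}{\bfx_i};y_i)=\ell'(\ip{\theta}{\bfx_i};y_i)\,\bfx_i\in\mathrm{range}(M)$, so $\nabla\calL(\theta;D)\in\mathrm{range}(M)$ for all $\theta$, giving $\snorm{\nabla\calL(\theta;D)}{2}=\snorm{\nabla\calL(\theta;D)}{M}$ and $\calL(\theta;D)=\calL(M\theta;D)$. Write $\boldg_t=\nabla\calL(\theta_t;D)$ and let $\boldb_t\sim\calN(0,\sigma^2 I_p)$ be the step‑$t$ noise, so $\theta_{t+1}=\theta_t-\frac1\beta(\boldg_t+\boldb_t)$ with $\eta=1/\beta$. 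Applying $\beta$‑smoothness of $\calL$ at the points $M\theta_{t+1},M\theta_t$, using $M\boldg_t=\boldg_t$, and completing the square (the cross terms $\ip{\boldg_t}{M\boldb_t}$ cancel exactly) yields the noisy descent inequality
\begin{align*}
\calL(\theta_{t+1};D)\ \le\ \calL(\theta_t;D)-\frac{1}{2\beta}\snorm{\boldg_t}{M}^2+\frac{1}{2\beta}\snorm{\boldb_t}{M}^2 .
\end{align*}
Telescoping over $t=0,\ldots,T-1$ with $\theta_0=\mathbf{0}^p$ and $\calL\ge 0$ gives $\sum_{t}\snorm{\boldg_t}{M}^2\le 2\beta\,\calL(\mathbf{0}^p;D)+\sum_{t}\snorm{\boldb_t}{M}^2$. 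Since $M$ is a rank‑$\rank(M)$ projector, $\sum_t\snorm{\boldb_t}{M}^2\sim\sigma^2\chi^2_{T\cdot\rank(M)}$, which by the standard $\chi^2$ tail bound concentrates around $\sigma^2T\rank(M)$ up to a $\mathrm{polylog}(1/\gamma)$ factor, and $\max_t|{\sf Lap}_t(4L/n)|=O\!\big(\frac{L}{n}\log(T/\gamma)\big)$ with high probability. Hence $\min_t\snorm{\boldg_t}{M}^2\le\frac{2\beta\calL(\mathbf{0}^p;D)}{T}+O(\sigma^2\rank(M)\log(1/\gamma))$, and the report‑noisy‑min accuracy guarantee gives $\snorm{\nabla\calL(\theta_{t_{\sf priv}};D)}{M}\le\min_t\snorm{\boldg_t}{M}+O\!\big(\frac{L}{n}\log(T/\gamma)\big)$. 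Substituting $\sigma^2=\frac{2L^2T\log(1/\delta)}{(n\epsilon)^2}$ and taking $T$ at the threshold in the statement — which makes the optimization term $\frac{\beta\calL(\mathbf{0}^p;D)}{T}$ of order $\frac{L^2\log(1/\delta)}{n^2\epsilon^2}$ and balances it against the noise term — produces the claimed bound of order $\frac{L}{\epsilon n}\sqrt{\rank(M)\log(1/\delta)\log(T/\gamma)}$.

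The step I expect to be the main obstacle is the trade‑off governing the choice of $T$: the optimization error $\frac{\beta\calL(\mathbf{0}^p;D)}{T}$ decreases in $T$, while the effective per‑step noise floor grows with $T$ (since $\sigma^2\propto T$ is forced by the privacy budget), so $T$ must be pinned at the prescribed value and the accumulated Gaussian noise $\sum_t\snorm{\boldb_t}{M}^2$ carefully controlled via concentration. The dimension‑independence is then a direct consequence of the subspace observation ($M\boldb_t$ being only $\rank(M)$‑dimensional), and the remaining ingredients reduce to the folklore smooth‑nonconvex GD argument, routine Gaussian/Laplace tail bounds, and the report‑noisy‑min guarantee.
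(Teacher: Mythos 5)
Your overall architecture is exactly the paper's proof in Appendix~\ref{app:nonconvex}: the trajectory is $(\epsilon,\delta)$-DP by Theorem~\ref{thm:priv}; conditioned on it, the queries $q_t(D)=\snorm{\nabla\calL(\theta_t;D)}{M}$ have sensitivity $2L/n$, so report-noisy-min plus composition gives $(2\epsilon,\delta)$; smoothness with $\eta=1/\beta$ in the $M$-seminorm gives $\calL(\theta_{t+1};D)\le\calL(\theta_t;D)-\frac{1}{2\beta}\snorm{\boldg_t}{M}^2+\frac{1}{2\beta}\snorm{\boldb_t}{M}^2$ with exact cross-term cancellation; then telescoping, Gaussian/Laplace concentration, and selection of $t_{\sf priv}$.

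The gap is in your last quantitative step. You correctly state that $\frac1T\sum_t\snorm{\boldb_t}{M}^2$ concentrates around $\sigma^2\rank(M)$ and correctly use the privacy calibration $\sigma^2=\frac{2L^2T\log(1/\delta)}{(n\epsilon)^2}$; but then the noise floor in your bound is $\sigma^2\rank(M)=\frac{2L^2T\,\rank(M)\log(1/\delta)}{n^2\epsilon^2}$, which \emph{grows linearly in $T$}. At the prescribed threshold $T=\frac{\beta n^2\epsilon^2\calL(\mathbf{0}^p;D)}{2L^2\log(1/\delta)}$ this term equals $\beta\,\calL(\mathbf{0}^p;D)\,\rank(M)$, which does not vanish with $n$, so "substituting $\sigma^2$ and taking $T$ at the threshold" does not produce the claimed $O\!\left(\frac{L^2\rank(M)\log(1/\delta)\log(T/\gamma)}{n^2\epsilon^2}\right)$ bound on $\snorm{\nabla\calL(\theta_{t_{\sf priv}};D)}{M}^2$. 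Nor does any choice of $T$ rescue the argument as you set it up: increasing $T$ only worsens the noise floor, and optimizing the trade-off $\frac{2\beta\calL(\mathbf{0}^p;D)}{T}+\Theta(\sigma^2\rank(M))$ over $T$ yields a gradient-norm bound of order $\sqrt{L}\,\big(\beta\calL(\mathbf{0}^p;D)\big)^{1/4}\big(\rank(M)\log(1/\delta)\big)^{1/4}/\sqrt{n\epsilon}$, not $L\sqrt{\rank(M)\log(1/\delta)}/(\epsilon n)$. The paper's proof arrives at the stated rate only because its concentration step \eqref{eq:asc6} bounds $\frac1T\sum_t\snorm{\boldb_t}{M}^2$ by $\frac{8L^2\rank(M)\log(1/\delta)\log(T/\gamma)}{n^2\epsilon^2}$ with no factor of $T$, i.e., it effectively treats the per-step variance as $\frac{2L^2\log(1/\delta)}{n^2\epsilon^2}$; your more careful accounting is incompatible with that step, and without either that $T$-free noise bound or a different noise calibration, your final "balancing" claim is where the derivation fails.
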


The algorithm in Theorem~\ref{thm:util31} is a modification of $\adpgd$ (Algorithm~\ref{Alg:Frag}), where from the list of models output by $\adpgd$ $\theta_0,\ldots,\theta_T$, we select the one with approximately minimum gradient norm for the loss  $\calL(\theta;D)$ via report-noisy-max (a.k.a. the exponential mechanism)~\citep{mcsherry2007mechanism,dwork2014algorithmic}.

Theorem~\ref{thm:util31} does not immediately imply convergence to a local minima or a bound on the population risk. However, it demonstrates that convergence of DP-GD  can be dimension-independent even in the case of non-convex losses. It is perceivable that this line of argument be extended for convergence to a local minimum using techniques similar to those in \citet{jin2017escape}. However, that would require an additional assumption beyond smoothness, i.e., \emph{Lipschitz continuity} of the Hessian.

\section{Dimension-Independent Convergence of Clipped Private Gradient Descent}
\label{sec:clipping}

In Sections~\ref{sec:ucconvex} and~\ref{sec:privNonConvex}, we have seen that $\adpgd$ (Algorithm~\ref{Alg:Frag}) has dimension-independent convergence for unconstrained GLMs (convex or non-convex). Moreover, the bound is tight in the convex case. 
An issue with $\adpgd$ is that it requires an apriori bound on the $\ell_2$-norm of the gradients for all values of $\theta$ and $d$. In practice~\citep{DP-DL,papernot2020tempered}, when such a bound is unknown or non-existent, a modified version of $\adpgd$, called \emph{clipped} DP-GD, is used. 
In this algorithm, the gradients on individual data points are projected to ensure that they conform to a pre-defined $\ell_2$-norm bound (denoted by \emph{clipping norm} $\cn$). This is done by  replacing Line 3 in Algorithm~\ref{Alg:Frag} by
\begin{align}
\gpriv_t \leftarrow &\frac{1}{n}\sum\limits_{i=1}^n \clip\left(\subg \ell(\theta_t;d_i)\right) + \calN\left(0,\sigma^2\right)
\label{eq:clipGD}
\end{align}
where $\clip(\boldv)=\boldv\cdot\min\left\{1,\frac{\cn}{\|\boldv\|_2}\right\}.$ We denote this algorithm as $\adpgdc$.

Now, we show that $\adpgdc$ enjoys the same dimension-independent convergence as $\adpgd$ as long as the underlying problem is \emph{convex} GLM. For all values of {clipping norm} $\cn>0$, the convergence is always to the minimum of a well-defined convex objective function. Based on the value of $\cn$ (i.e., whether $\cn \leq L$), the convergence may be to the minimum of a different (well-specified) objective function, rather than the original objective $\calL(\theta;D)$.

\subsection{Characterizing Clipping in Convex GLMs}
\label{sec:clipeqvhuber}

First, we provide an analytical tool in Lemma \ref{lem:huberized} to precisely quantify the objective function that $\adpgdc$ optimizes when the underlying loss function is a convex GLM.

\begin{lem}\label{lem:truncation}\label{lem:huberized}
Let $f:\bfR\to\bfR$ be any convex function and $\cn \in \bfR_+$ be any positive value.
For any $\bfx \neq \mathbf{0}$, let
$Y_1 = \bbracket{y: u < -\frac{\cn}{\snorm{\boldx}{2}} \ \forall u \in \partial f(y)}$ and 
$Y_2 = \bbracket{y: u > \frac{\cn}{\snorm{\boldx}{2}} \ \forall u \in \partial f(y)}$.
If $Y_1$ is non-empty, let $y_1 = \sup Y_1$; otherwise  $y_1 = -\infty$.
If $Y_2$ is non-empty, let $y_2 = \inf Y_2$; otherwise  $y_2 = \infty$.
Let $g_{\boldx}:\bfR\to\bfR$ be
\begin{align*}
g_{\boldx}(y) = 
\begin{cases}
-\frac{\cn}{\|\bfx\|_2} (y - y_1) + f(y_1) & \text{for } y \in (-\infty, y_1) \\
f(y) & \text{for } y\in [y_1, y_2] \cap \bfR \\
\frac{\cn}{\|\bfx\|_2} (y - y_2) + f(y_2) & \text{for } y \in (y_2,\infty)
\end{cases}.
\end{align*}
Then the following holds.
\begin{enumerate}[leftmargin=12pt]
\item $g_{\boldx}$ is convex. 
\item 
Let $\ell_f:\bfR^p\times\bfR^p\to\bfR$ be $\ell_f(\theta;\bfx)=f\left(\ip{\theta}{\bfx}\right)$ for any $\theta,\bfx$. 
Let $\ell_g:\bfR^p\times\bfR^p\to\bfR$ be $\ell_g(\theta; \bfx) = g_{\boldx}(\ip{\theta}{\bfx})$ for any $\theta,\bfx$.
Then, for any $\theta$, $\boldx$, we have
\begin{align*}
\partial_{\theta}\ell_g(\theta;\bfx) = 
\bbracket{\min\left\{1,\frac{\cn}{\snorm{\boldu}{2}}\right\}\cdot \boldu : \boldu \in \partial_{\theta}\ell_f(\theta;\bfx)}.
\end{align*}
\end{enumerate}
\end{lem}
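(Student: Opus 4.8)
The plan is to reduce both claims to a single one-dimensional statement relating $g_{\boldx}$ and $f$ at the threshold $c := \cn/\|\bfx\|_2$, and then to prove that statement with the standard monotone-subdifferential calculus for convex functions of one real variable. Concretely: fix $\boldx\neq\boldzero$ and write $z=\ip{\theta}{\boldx}$. Since $\ell_f(\theta;\boldx)=f(z)$ is the composition of the finite convex function $f$ with the linear map $\theta\mapsto\ip{\theta}{\boldx}$, the subdifferential chain rule (no constraint qualification is needed, as finite convex functions are continuous) gives $\partial_\theta\ell_f(\theta;\boldx)=\{u\boldx:u\in\partial f(z)\}$; likewise, \emph{once we know $g_{\boldx}$ is convex}, $\partial_\theta\ell_g(\theta;\boldx)=\{v\boldx:v\in\partial g_{\boldx}(z)\}$. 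Using $\|u\boldx\|_2=|u|\,\|\bfx\|_2$ one checks $\min\{1,\cn/\|u\boldx\|_2\}\cdot u\boldx = T_c(u)\,\boldx$, where $T_c(u)=\max\{-c,\min\{c,u\}\}$ is the clip-to-$[-c,c]$ map. Since $\boldx\neq\boldzero$ makes $v\mapsto v\boldx$ injective, claim (2) is then equivalent to the scalar identity $\partial g_{\boldx}(z)=\{T_c(u):u\in\partial f(z)\}$ for every $z\in\bfR$, and claim (1) is just the convexity of $g_{\boldx}$.

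Next I would record the structural facts. Because $f$ is convex on $\bfR$, $\partial f(y)=[f'_-(y),f'_+(y)]$ with $f'_-,f'_+$ nondecreasing, $f'_-$ left-continuous, $f'_+$ right-continuous, and $f'_+(y)\le f'_-(y')$ whenever $y<y'$. Hence $Y_1=\{y:f'_+(y)<-c\}$ is an open ray $(-\infty,y_1)$ and $Y_2=\{y:f'_-(y)>c\}$ an open ray $(y_2,\infty)$; taking one-sided limits of $f'_\pm$ at finite endpoints places $-c\in\partial f(y_1)$ and $c\in\partial f(y_2)$, and since $-c<c$ one verifies $y_1\le y_2$. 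I would also note two consequences: (i) for $z$ in the open interval $(y_1,y_2)$ one has $\partial f(z)\subseteq[-c,c]$ — pick $w\in(y_1,z)$ so $f'_-(z)\ge f'_+(w)\ge-c$ and $w'\in(z,y_2)$ so $f'_+(z)\le f'_-(w')\le c$; and (ii) when $y_1<y_2$, $f'_+(y_1),f'_-(y_2)\in[-c,c]$ by the same trick, whereas if $y_1=y_2$ then $[-c,c]\subseteq\partial f(y_1)$. (The cases $Y_1=\bfR$ or $Y_2=\bfR$, where $g_{\boldx}\equiv-\infty$, are vacuous and do not arise when $f$ is bounded below, the situation of interest.)

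For convexity I would observe that $g_{\boldx}$ is continuous (the affine arms were chosen to match $f$ at $y_1$ and $y_2$) and that its right derivative equals $-c$ on $(-\infty,y_1)$, equals $f'_+$ on $[y_1,y_2)$ — nondecreasing and contained in $[-c,c]$ there — and equals $c$ on $[y_2,\infty)$; this is a nondecreasing function, so $g_{\boldx}$ is convex. (Equivalently, $g_{\boldx}$ is the infimal convolution of $f$ with $y\mapsto c|y|$, hence convex, with optimal inner point $z^*(y)=y$ on $[y_1,y_2]$, $=y_1$ for $y<y_1$, $=y_2$ for $y>y_2$; this also re-derives the scalar identity.) Then I would prove the scalar identity by cases on $z$: if $z\in(y_1,y_2)$, $g_{\boldx}=f$ near $z$ so $\partial g_{\boldx}(z)=\partial f(z)$, and $T_c$ fixes it by (i); if $z<y_1$ then $z\in Y_1$ so $\partial f(z)\subseteq(-\infty,-c)$, while $g_{\boldx}$ is affine with slope $-c$ near $z$, giving $\partial g_{\boldx}(z)=\{-c\}=T_c(\partial f(z))$, and symmetrically for $z>y_2$; if $z=y_1<y_2$ then $\partial g_{\boldx}(y_1)=[-c,f'_+(y_1)]$ while $\partial f(y_1)=[f'_-(y_1),f'_+(y_1)]$ with $f'_-(y_1)\le-c\le f'_+(y_1)\le c$, so $T_c(\partial f(y_1))=[-c,f'_+(y_1)]$, and symmetrically for $z=y_2$ with $y_1<y_2$; finally if $y_1=y_2=z$ then $\partial g_{\boldx}(z)=[-c,c]=T_c(\partial f(z))$ since $\partial f(z)\supseteq[-c,c]$. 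Multiplying through by $\boldx$ recovers the stated vector identity.

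The main obstacle is not depth but careful bookkeeping: one must consistently use that ``$u<-c$ for all $u\in\partial f(y)$'' is governed by the right end $f'_+(y)$ (and ``$u>c$ for all $u$'' by the left end $f'_-(y)$), treat the kink points $y_1,y_2$ and the degenerate coincidence $y_1=y_2$ by hand, and invoke the one-sided limits/continuity of $f'_\pm$ precisely in order to put $\mp c$ into $\partial f(y_1),\partial f(y_2)$. The only genuinely substantive step is establishing $\partial f(z)\subseteq[-c,c]$ on the open middle interval, which is exactly what makes $T_c$ act as the identity there; everything else is a finite, routine case check.
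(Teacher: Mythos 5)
Your proposal is correct and follows essentially the same route as the paper's own proof: you establish the same scalar facts ($-\frac{\cn}{\snorm{\boldx}{2}}\in\partial f(y_1)$, $\frac{\cn}{\snorm{\boldx}{2}}\in\partial f(y_2)$, $\partial f\subseteq[-\frac{\cn}{\snorm{\boldx}{2}},\frac{\cn}{\snorm{\boldx}{2}}]$ on the middle interval), prove convexity of $g_{\boldx}$ and the clipped-subdifferential identity by the same case analysis over $(-\infty,y_1)$, $(y_1,y_2)$, the endpoints, and the coincidence $y_1=y_2$, and then lift to $\ell_g$ via the chain rule and the scaling identity $\min\{1,\cn/\snorm{u\boldx}{2}\}\,u\boldx=\trunc{\cn/\snorm{\boldx}{2}}{u}\,\boldx$; the only cosmetic difference is that you certify convexity through the nondecreasing right derivative (equivalently the inf-convolution with $y\mapsto \frac{\cn}{\snorm{\boldx}{2}}|y|$) rather than the Lipschitz-extension lemma of \citet{BST14} that the paper invokes. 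The one point to fix in a full write-up is the degenerate case $Y_1=\bfR$ or $Y_2=\bfR$ (i.e., $y_1=\infty$ or $y_2=-\infty$), which you set aside with an inaccurate remark ($g_{\boldx}$ is not $\equiv-\infty$ there; the paper reads it as the linear function of slope $\mp\frac{\cn}{\snorm{\boldx}{2}}$ and checks the identity directly), since the lemma is stated for arbitrary convex $f$.
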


\mypar{Note} Lemma \ref{lem:huberized} is a generic tool for understanding the effect of clipping. In fact, it can be used to justify the use of standard private convex optimization analysis in \citep{BST14,FMTT18,feldman2019private,bassily2019private} to $\adpgdc$ on convex GLMs.

The proof (see \ifsupp Appendix~\ref{app:undclip}\else the supplementary material\fi) is based on the fact that clipping does not affect the monotonicity property of the derivative of one-dimensional convex function. A pictorial representation of Lemma~\ref{lem:huberized} is given in Figure~\ref{fig:clipnormlr}.

\begin{figure}[h]
\centering
\includegraphics[width=0.45\textwidth]{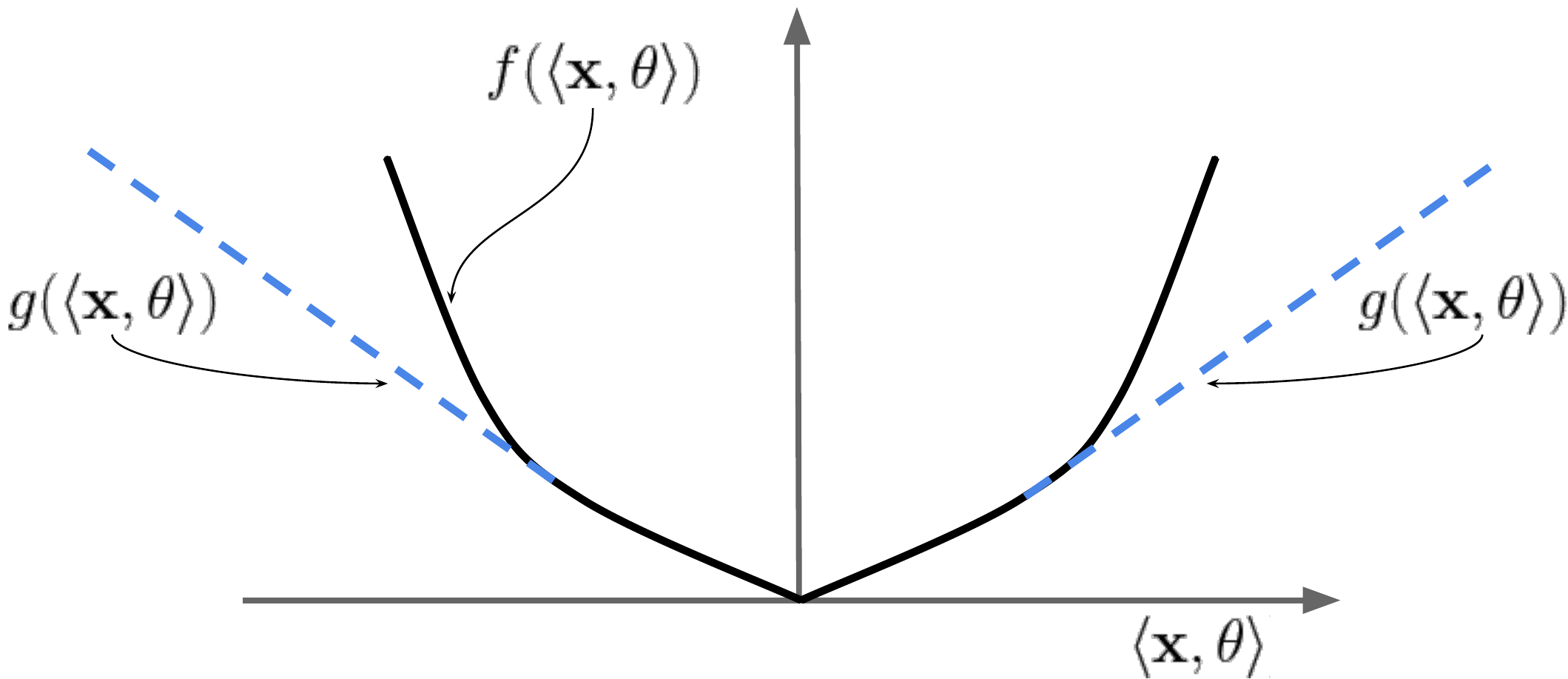}
\caption{Representing the clipping operation in~\eqref{eq:clipGD} for Algorithm $\adpgdc$ via Lemma~\ref{lem:truncation}.}
\label{fig:clipnormlr}
\end{figure}

\subsection{Rank-based Convergence of Algorithm $\adpgdc$}

Now, we provide the rank-based convergence of Algorithm $\adpgdc$. Recall that the original loss function is given by $\calL(\theta;D)=\frac{1}{n}\sum\limits_{i=1}^n\ell(\ip{\theta}{\boldx_i};y_i)$. Since, by Lemma~\ref{lem:huberized} we know that for any clipping level $\cn>0$, one can replace each loss function $\ell(\ip{\theta}{\boldx_i};y_i)$ with that obtained from Lemma~\ref{lem:huberized}. We denote these clipped losses as $\ellc(\ip{\theta}{\boldx_i};y_i)$, and the corresponding objective function as $\calLc(\theta;D) =\frac{1}{n}\sum\limits_{i=1}^n\ellc(\ip{\theta}{\boldx_i};y_i)$. We immediately have the following corollary from Theorem~\ref{thm:dimIndependence}.

\begin{cor}
Let $\theta_0=\mathbf{0}^p$ be the initial point for $\adpgdc$, and let $\cn>0$ be the corresponding clipping norm. 
Let $\thetaH = \argmin\limits_{\theta\in\bfR^p}\calLc(\theta;D)$, and $M$ be the projector to the eigenspace of matrix $\sum\limits_{i=1}^n \bfx_i\bfx_i^T$. For constraint set $\calC=\boldR^p$, clipping norm $L$, and running $\adpgdc$ on $\calL(\theta;D)$ for $T=n^2\epsilon^2$ steps with appropriate learning rate $\eta$, we get
\ifaistats
\begin{align*}
& \mathbb{E}\left[\calLc\left(\privT;D\right)\right]-\calLc\left (\thetaH;D\right)\\
& \qquad \leq\frac{L\snorm{\thetaH}{M}\sqrt{1 + 2\cdot\rank(M)\cdot \log(1/\delta)}}{\epsilon n}.
\end{align*}
\else
\begin{align*}
\mathbb{E}\left[\calLc\left(\privT;D\right)\right]-\calLc\left (\thetaH;D\right)
\leq\frac{L\snorm{\thetaH}{M}\sqrt{1 + 2\cdot\rank(M)\cdot \log(1/\delta)}}{\epsilon n}.
\end{align*}
\fi
Here, $\rank(M)\leq n$ (but can be much smaller), and $\snorm{\cdot}{M}$ is the seminorm w.r.t. the projector $M$.
\label{cor:dimIndependencec}
\end{cor}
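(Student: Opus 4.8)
The plan is to derive this from Theorem~\ref{thm:dimIndependence} almost for free, by establishing that running $\adpgdc$ on $\calL(\theta;D)$ with clipping norm $\cn$ is identical (as a randomized process) to running $\adpgd$ on the surrogate objective $\calLc(\theta;D)$, which is itself a convex GLM over the \emph{same} feature vectors and is $\cn$-Lipschitz. First I would, for each $i\in[n]$, take the per-example loss $f=\ell(\cdot\,;y_i)$ as a function of its convex first argument and invoke Lemma~\ref{lem:huberized} with this $f$ and clipping level $\cn$; this yields the convex function $g_{\bfx_i}$, and I set $\ellc(\ip{\theta}{\bfx_i};y_i):=g_{\bfx_i}(\ip{\theta}{\bfx_i})$, so that $\calLc(\theta;D)=\tfrac1n\sum_{i=1}^n\ellc(\ip{\theta}{\bfx_i};y_i)$. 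Part~1 of Lemma~\ref{lem:huberized} says each $g_{\bfx_i}$ is convex, hence each $\ellc(\cdot\,;(\bfx_i,y_i))$ is convex in its first parameter and $\calLc$ is a convex GLM; its feature vectors are still $\bfx_1,\dots,\bfx_n$, so the projector $M$ onto the eigenspace of $\sum_i\bfx_i\bfx_i^T$, and in particular $\rank(M)$, is unchanged.

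Next I would verify the two remaining hypotheses of Theorem~\ref{thm:dimIndependence} for $\calLc$. Lipschitzness: Part~2 of Lemma~\ref{lem:huberized} says every subgradient in $\theta$ of $\ellc(\ip{\theta}{\bfx_i};y_i)$ has the form $\min\{1,\cn/\ltwo{\boldu}\}\cdot\boldu$ for some $\boldu\in\subg\ell(\ip{\theta}{\bfx_i};y_i)$, and any such vector has $\ell_2$-norm $\min\{\ltwo{\boldu},\cn\}\le\cn$; thus $\ellc$ is $\cn$-Lipschitz w.r.t.\ $\ell_2$ \emph{on all of} $\bfR^p$, uniformly over the data, which is exactly the gradient bound required (with the role of $L$ played by $\cn$). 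Algorithmic equivalence: the clipped noisy gradient of~\eqref{eq:clipGD}, $\tfrac1n\sum_i\clip(\subg\ell(\theta_t;d_i))+\calN(0,\sigma^2)$ with $\clip(\boldv)=\boldv\min\{1,\cn/\ltwo{\boldv}\}$, is by the same Part~2 exactly of the form $\tfrac1n\sum_i\boldg_{t,i}+\calN(0,\sigma^2)$ where each $\boldg_{t,i}$ is a valid subgradient in $\theta$ of $\ellc(\ip{\theta}{\bfx_i};y_i)$ at $\theta_t$ — i.e.\ a legitimate noisy-subgradient-oracle response for $\adpgd$ run on $\calLc$. Since both runs start at $\theta_0=\boldsymbol 0^p$, use the same $\sigma$, the same learning rate, and the trivial projection $\Pi_{\bfR^p}=\mathrm{id}$, the whole trajectory $\theta_0,\dots,\theta_T$ and hence $\privT=\tfrac1T\sum_t\theta_t$ have the same distribution. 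Feeding the convex GLM $\calLc$, Lipschitz constant $\cn$, initial point $\boldsymbol 0^p$, and $T=n^2\epsilon^2$ into Theorem~\ref{thm:dimIndependence}, with $\thetaH=\argmin_{\theta\in\bfR^p}\calLc(\theta;D)$ as the minimizer, then produces exactly the claimed inequality.

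The main obstacle is not a computation: essentially all the content is in Lemma~\ref{lem:huberized}, which is what guarantees that clipping the per-example gradient of a convex GLM still produces the gradient field of an honest convex GLM, and which is proved separately. What remains here is bookkeeping — choosing a consistent subgradient selection at the two kink points $y_1,y_2$ of each $g_{\bfx_i}$ so that the clipped update is \emph{literally} a subgradient step for $\calLc$, and checking that $\calLc$ inherits $\cn$-Lipschitzness on the unconstrained space $\bfR^p$ (not just on the bounded feature set $\calX$), so that Theorem~\ref{thm:dimIndependence} applies verbatim rather than only on a ball.
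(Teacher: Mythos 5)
Your proposal is correct and follows essentially the same route as the paper: the paper also obtains this corollary by invoking Lemma~\ref{lem:huberized} to replace each per-example loss by its clipped (Huberized) counterpart $\ellc$, observing that the resulting $\calLc$ is a convex GLM over the same feature vectors with gradients bounded by the clipping norm, and then applying Theorem~\ref{thm:dimIndependence} verbatim, since clipped DP-GD on $\calL$ is exactly DP-GD on $\calLc$. Your extra care about subgradient selection at the kinks and Lipschitzness over all of $\bfR^p$ is sound bookkeeping that the paper leaves implicit.
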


From Corollary~\ref{cor:dimIndependencec}, it is immediate that as long as the clipping norm $\cn \geq L$ (where $L$ is the $\ell_2$-Lipschitz constant of the loss functions $\ell$), $\adpgdc$ optimizes the original loss $\calL$. 

\subsection{Interlude: Adverse Effects of Aggressive Clipping}
\label{sec:negclip}

Corollary~\ref{cor:dimIndependencec} eludes two natural questions: 
i) Is the output of Algorithm $\adpgdc$ guaranteed to be good in terms of the empirical loss on the original loss $\calL(\theta;D)$ when $\cn < L$, and 
ii) When the underlying loss  \emph{does not} satisfy the convex GLM property, is the objective function that gets optimized by $\adpgdc$ still well-defined as for convex GLMs?
Our answers to both questions are \emph{negative}.

To answer the first question, we construct an instance of logistic regression $\calL(\theta;D)$ with $L$, the $\ell_2$-Lipschitz constant of the individual loss functions, being $1$. We show that if the clipping norm $\cn < 1/4$, then the excess empirical risk of Algorithm $\adpgdc$ on $\calL(\theta;D)$ is $\Omega(1)$. Whereas Algorithm $\adpgd$ with $L=1$ has an excess empirical risk of $O(1/n)$, where $n$ is the number of training examples. A formal description of this bound is in \ifsupp Appendix~\ref{sec:lower bound logistic regression}. \else the supplementary material. \fi

To answer the second question, we show that even for simple convex problems like multi-class logistic regression (that does not belong to the traditional GLM class), clipping in $\adpgdc$ can generate a sequence of ``clipped gradient vectors'' that do not conform to a gradient field of any ``natural'' convex function, 
though the output of $\adpgdc$ might still  minimize the excess empirical risk on $\calL(\theta;D)$ practically. 
We leave this exploration for future work.

Formally, consider a $K$-class classification problem for $K \geq 3$.
Given a sample $(x,y)$ with $x\in\bfR^p$ and $y\in [K]$, the cross-entropy loss $\ell: \bfR^{p\times K} \times \bfR^p\times [K] \to \bfR$ is, for $\bt = [\theta^{(1)}, \dots, \theta^{(K)}]$,
\begin{align*}
\ell\bracket{\bt; (x,y)} = \sum_{k=1}^K \indi{y=k} \log \frac{\expo{\theta^{(k)}\cdot x}}{\sum_{k'=1}^K \expo{\theta^{(k')}\cdot x}}.
\end{align*}

\begin{thm}
Consider any sample $(x, y)$ with $x\in \bfR^p\backslash{\{\mathbf{0}\}}$, $y\in [K]$ (for $K \geq 3$) and any $\cn > 0$ such that $\bT = \bbracket{\bt: \|\nabla_{\bt} \ell(\bt;(x,y))\|_2 > \cn}$ is non-empty.
Let $G(\theta)$ be the clipped gradient of $\ell(\bt;(x,y))$ as defined above. 
Consider any function $f:\calC \to \bfR$, $\calC\subseteq \bfR^{p\times K}$ such that $\interior{\calC}\cap \bT\neq \emptyset$, where $\interior{\calC}$ is the interior of set $\calC$.
If $f$ is differentiable everywhere except for a set $\calC_N\subseteq \calC$
s.t. $\calC_N$ is a closed set on $\calC$ and has zero Lebesgue measure, then it is not possible for $\nabla_{\bt} f(\bt) = G(\bt)$ to hold for all $\bt \in \interior{\calC}\backslash\calC_N$.
\label{thm:clipped_softmax}
\end{thm}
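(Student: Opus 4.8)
The plan is to derive a contradiction from symmetry of second partial derivatives: any twice–differentiable potential has a symmetric Hessian, so a gradient field must have a symmetric Jacobian, whereas the clipped softmax gradient $G$ does not. The delicate point — and the main obstacle — is that the hypothesis only gives differentiability of $f$ away from a closed, Lebesgue‑null set $\calC_N$, not $C^1$‑ness, so the ``symmetric Hessian'' argument cannot be invoked verbatim; most of the work is arranging things so that it nonetheless applies.

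First I would write $G$ explicitly on the region where clipping actually bites. On $\bT = \bbracket{\bt : \ltwo{\nabla_{\bt}\ell(\bt;(x,y))} > \cn}$ we have $G(\bt) = \cn\,\nabla_{\bt}\ell(\bt;(x,y))/\ltwo{\nabla_{\bt}\ell(\bt;(x,y))}$. A one‑line computation gives $\nabla_{\theta^{(j)}}\ell(\bt;(x,y)) = (\indi{y=j}-p_j(\bt))\,x$, where $p_j(\bt)$ are the softmax probabilities; hence, writing $v(\bt) = (\indi{y=j}-p_j(\bt))_{j\in[K]}$ and $\hat v = v/\ltwo{v}$, we get $\ltwo{\nabla_{\bt}\ell} = \ltwo{v}\cdot\ltwo{x}$ and the $(j,a)$ coordinate of $G$ equals $G_{(j,a)}(\bt) = \tfrac{\cn}{\ltwo{x}}\,\hat v_j(\bt)\,x_a$. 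Two facts will be used repeatedly: $G$ is $C^\infty$ on $\bT$ (there $\nabla_{\bt}\ell$ is smooth and nonzero, and $\clip$ reduces to the smooth rescaling $\boldv\mapsto \cn\boldv/\ltwo{\boldv}$), and $G$ depends on $\bt$ only through the $K$ scores $s_k := \theta^{(k)}\cdot x$, which are linear in $\bt$, so $\partial s_k/\partial\theta^{(i)}_b = \delta_{ik}x_b$.

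Next, suppose for contradiction that $\nabla_{\bt}f = G$ on $\interior{\calC}\setminus\calC_N$, and set $U := \interior{\calC}\cap\bT$, which is open and nonempty by hypothesis. Since $\calC_N$ is closed, $U\setminus\calC_N$ is open, so every point of it has a whole neighborhood inside $U\setminus\calC_N$; on such a neighborhood $f$ is differentiable with gradient the $C^\infty$ map $G$, hence $f$ is $C^\infty$ there and its Hessian — i.e.\ the Jacobian $DG$ — is symmetric. Because $\calC_N$ is null it is nowhere dense, so $U\setminus\calC_N$ is dense in $U$; as $DG$ is continuous on $U$ and symmetric on a dense subset, it is symmetric on all of $U$. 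Using $G_{(j,a)} = \tfrac{\cn}{\ltwo{x}}\hat v_j(s)\,x_a$ and the chain rule, $\partial_{\theta^{(i)}_b}G_{(j,a)} = \tfrac{\cn}{\ltwo{x}}x_a x_b\,\partial_{s_i}\hat v_j$, so (choosing coordinates $a,b$ with $x_a,x_b\neq 0$, which exist since $x\neq\mathbf 0$) symmetry of $DG$ forces $\partial_{s_i}\hat v_j = \partial_{s_j}\hat v_i$ for all class indices $i,j$.

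Finally I would exhibit a point of $U$ where this fails. Pick two distinct class indices $i,j$ both different from $y$ — possible since $K\geq 3$ — so that $v_i = -p_i$, $v_j = -p_j$. Differentiating $\hat v = v/\ltwo{v}$ using $\partial_{s_i}p_j = p_j(\delta_{ij}-p_i)$, the cross terms collapse and one gets $\partial_{s_i}\hat v_j - \partial_{s_j}\hat v_i = p_i p_j\,(p_i-p_j)/\ltwo{v}^{3}$, which is nonzero exactly when $p_i\neq p_j$. Since $s_i - s_j = (\theta^{(i)}-\theta^{(j)})\cdot x$ is a non‑constant linear functional of $\bt$ (again using $x\neq\mathbf 0$), it cannot vanish identically on the open set $U$, so there is a $\bt\in U$ with $p_i(\bt)\neq p_j(\bt)$; at that point $DG$ is not symmetric, contradicting the previous paragraph. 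Besides this, the only care needed is the bookkeeping highlighted above — using ``closed'' to obtain honest $C^\infty$ neighborhoods and ``null'' to obtain density, so that the closed condition ``$DG$ symmetric'' propagates from $U\setminus\calC_N$ to all of $U$ — together with the choice of the two compared classes away from $y$, which is what makes the curl of $\hat v$ collapse to the clean nonzero expression above.
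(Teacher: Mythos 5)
Your proof is correct and takes essentially the same route as the paper's: both derive a contradiction with Schwarz's theorem on symmetry of second derivatives by showing that the clipped softmax gradient field has a non-symmetric Jacobian on the open set $\interior{\calC}\cap\bT$, while the relative closedness and null Lebesgue measure of $\calC_N$ guarantee a nonempty open set on which $f$ is $C^2$ with Hessian equal to that Jacobian. The differences are only cosmetic --- you witness the asymmetry with a pair of non-target classes via the curl $p_i p_j(p_i-p_j)/\|v\|_2^3$ and propagate symmetry to all of $U=\interior{\calC}\cap\bT$ by continuity and density before locating a single bad point (a small caveat: nullness alone does not give nowhere-density, but the density of $U\setminus\calC_N$ you actually need follows from nullness, and you use closedness separately for openness, so the step stands), whereas the paper compares the target-class block $G^{(1)}$ (which alone depends on $\theta^{(1)}$) against the other blocks and applies Schwarz directly on the positive-measure ``bad'' set intersected with $\interior{\calC}\setminus\calC_N$.
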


As convexity implies differentiability almost everywhere~\cite[Theorem 25.5]{rockafellar1970convex}, if $f$ is convex, we only need $\calC_N$ to be a closed set. 
The $\ell_1$ regularizer $\snorm{\bt}{1}$ is non-differentiable on a closed set, and so is the hinge loss $\ell_{\sf hinge}(\theta; (x,y))=\max\bracket{0, 1-y\ip{\bt}{x}}$.
Theorem~\ref{thm:clipped_softmax} essentially rules out the possibility that the field of clipped gradients corresponds to any single objective function in convex models like softmax regression and SVMs with $\ell_1/\ell_2$ regularization, and in non-convex models like neural networks with popular activation functions like logistic sigmoid, tanh, ReLu.
We defer the proof of Theorem~\ref{thm:clipped_softmax} to \ifsupp Appendix~\ref{app:softMax}.
\else the supplementary material. \fi
One might further ask if the problem above could be resolved by ``per-class'' clipping, i.e., clipping $\nabla_{\theta^{(k)}} \ell$ individually for each $k$? The answer is still negative. 
We provide more details in \ifsupp Appendix~\ref{sec:per-class_clip}. \else the supplementary material. \fi

\section{Experiments}
\label{sec:expts}

We conduct experiments to demonstrate the dimension-independence of DP-GD with Gaussian noise. The setup mainly follows that from \citet[Figure 1(c)]{jain2014near}. 
We use a normalized version of the Cod-RNA dataset~\citep{codrna}, and use logistic regression to solve for binary classification.
Following~\citet{jain2014near}, we append zero-valued features to the data samples, such that the accuracy of a non-private classifier does not change. 
To solve the problem privately, we consider DP-SGD with mini-batch gradient and Gaussian noise. 
For comparison, we consider DP-SGD with noise drawn from a Gamma distribution. 
We fix $\epsilon \approx 5.0$ and $\delta=10^{-5}$.
Additional details on the setup can be found in Appendix~\ref{app:expts}.

In Figure~\ref{fig:expt}, we report the final test accuracy under different data dimensionality after tuning the learning rate. 
It is clear that DP-SGD with Gaussian noise is dimension-independent, as is predicted by Theorem~\ref{thm:dimIndependence}.
On the other hand, the accuracy of DP-SGD with Gamma noise decreases as data dimensionality increases.

\begin{figure}[ht]
\centering
\includegraphics[width=0.39\textwidth]{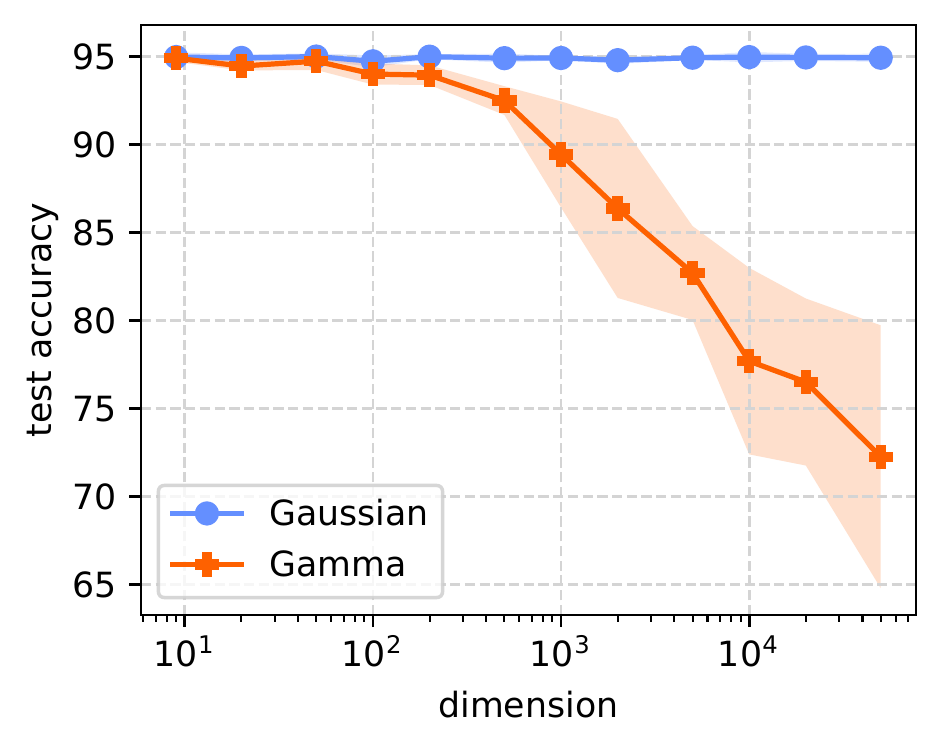}
\caption{Test accuracy vs. data dimensionality for DP-SGD with i) Gaussian noise, and ii) Gamma noise. 
}
\label{fig:expt}
\end{figure}

\section*{Acknowledgements}
\label{sec:acks}
The authors would like to thank Vitaly Feldman, Vivek Kulkarni, Monica Ribero, Shadab Shaikh, Adam Smith, Kunal Talwar, Andreas Terzis, and Jalaj Upadhyay for their helpful comments towards improving the paper.

\clearpage

\bibliographystyle{abbrvnat}
\bibliography{reference}
\ifsupp
\onecolumn
\appendix
\section{Proof of Excess Population Risk in Theorem~\ref{thm:dimIndPop}}
\label{app:pop}

The proof will closely follow the arguments in Theorem 6.2 in~\citep{bassily2020stability}. We first provide Definition~\ref{def:uas} and Lemma~\ref{lem:approxStab} which are crucial to the proof.

\begin{defn}[Uniform argument stability (UAS)~\citep{bassily2020stability}] Given a (randomized) algorithm $\calA$, and two data sets $D$ and $D'$ differing in exactly one row, we define uniform argument stability (w.r.t. a semi-norm $\snorm{\cdot}{*}$) as $\gamma_\calA\left(D,D',\snorm{\cdot}{*}\right)=\snorm{\calA(D)-\calA(D')}{*}$. Here the random variables $\calA(D)$ and $\calA(D')$ have the same choice of random coins.
\label{def:uas}
\end{defn}

\begin{lem} Consider any data set $D=\left\{(\bfx_1,y_1),\ldots,(\bfx_n,y_n)\right\}$ and its neighbor $D'$, where for some $j\in[n]$, the $j$-th sample is replaced by some $(\bfx'_j,y'_j)$ from the domain. Consider any GLM loss on the data set defined as $\calL(\theta;D)=\frac{1}{n}\sum\limits_{i=1}^n\ell(\ip{\theta}{\bfx_i};y_i)$, 
such that the $\ell_2$-norm of the gradient of $\ell(\cdot;\cdot)$ (w.r.t. $\theta$) is upper bounded by $L$.
Let $M$ be any projector that spans all the feature vectors $\left\{\bfx_i\right\}_{i=1}^n\cup\{\bfx'_j\}$.
Then, Algotithm~\ref{Alg:Frag} (Algorithm $\adpgd$) has UAS of
$$
\gamma_{\adpgd}\left(D,D',\snorm{\cdot}{M}\right)
\leq 4L\eta\left(\frac{T}{n}+\sqrt{T}\right).$$
\label{lem:approxStab}
\end{lem}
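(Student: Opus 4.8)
The plan is to follow the standard uniform-stability template for (sub)gradient methods on convex Lipschitz losses (as in \citet{bassily2020stability}), specialized to the seminorm $\snorm{\cdot}{M}$, and to use the GLM structure so that the injected Gaussian noise never enters the analysis. Concretely, I would run $\adpgd$ on $D$ and on $D'$ under the coupling that uses the \emph{same} noise vectors $\boldb_1,\dots,\boldb_T$ in both executions, and track the iterate gap $\boldd_t := \theta_t - \theta'_t$. Since $\calC=\bfR^p$ the projection step is the identity, and since the noise is shared it cancels in the difference, so $\boldd_{t+1} = \boldd_t - \eta(\boldg_t - \boldg'_t)$, where $\boldg_t$ and $\boldg'_t$ are the \emph{noiseless} average subgradients on $D$ and $D'$. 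The first observation is that for a GLM loss every element of $\partial_\theta\ell(\ip{\theta}{\bfx_i};y_i)$ lies in $\mathrm{span}\{\bfx_i\}$, which is contained in $\mathrm{range}(M)$ because $M$ is chosen to span $\{\bfx_i\}_{i=1}^n\cup\{\bfx'_j\}$; hence $\boldg_t,\boldg'_t\in\mathrm{range}(M)$, and together with $\boldd_0=\mathbf{0}$ an induction gives $\boldd_t\in\mathrm{range}(M)$ for all $t$, so $\snorm{\boldd_t}{M}=\ltwo{\boldd_t}$. This reduces the claim to an $\ell_2$ stability bound, and crucially it needs no control over the noisy iterates themselves (which do leave $\mathrm{range}(M)$).

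Next I would expand $\ltwo{\boldd_{t+1}}^2 = \ltwo{\boldd_t}^2 - 2\eta\ip{\boldg_t-\boldg'_t}{\boldd_t} + \eta^2\ltwo{\boldg_t-\boldg'_t}^2$ and split $\boldg_t - \boldg'_t = \boldh_t + \bolde_t$, where $\boldh_t := \frac1n\sum_{i\neq j}(\boldv_{t,i}-\boldv'_{t,i})$ with $\boldv_{t,i}\in\partial_\theta\ell(\ip{\theta_t}{\bfx_i};y_i)$ and $\boldv'_{t,i}\in\partial_\theta\ell(\ip{\theta'_t}{\bfx_i};y_i)$ collects the $n-1$ shared examples, and $\bolde_t$ is the contribution of the single differing example $j$. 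Monotonicity of the subgradient of the convex map $\theta\mapsto\ell(\ip{\theta}{\bfx_i};y_i)$ gives $\ip{\boldv_{t,i}-\boldv'_{t,i}}{\boldd_t}\ge 0$ term by term, so $\ip{\boldh_t}{\boldd_t}\ge 0$ and that term is dropped; this is the only place convexity is used. The differing example is controlled crudely: $\ltwo{\bolde_t}\le 2L/n$, hence $|\ip{\bolde_t}{\boldd_t}|\le \frac{2L}{n}\ltwo{\boldd_t}$ by Cauchy--Schwarz, while $\ltwo{\boldg_t-\boldg'_t}\le\ltwo{\boldg_t}+\ltwo{\boldg'_t}\le 2L$. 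Writing $a_t:=\ltwo{\boldd_t}$, this yields the scalar recursion $a_{t+1}^2\le a_t^2 + \frac{4L\eta}{n}a_t + 4L^2\eta^2$ with $a_0=0$.

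Finally I would solve this recursion: setting $A := \frac{4L\eta T}{n} + 2L\eta\sqrt{T}$, I claim $a_t\le A$ for all $t\le T$, by induction --- assuming $a_s\le A$ for $s\le t$ and telescoping, $a_{t+1}^2\le\sum_{s=0}^{t}\big(\frac{4L\eta}{n}a_s+4L^2\eta^2\big)\le T\big(\frac{4L\eta}{n}A + 4L^2\eta^2\big)$, and a one-line check shows the right-hand side is $\le A^2$ for this $A$. Since the output is $\privT=\frac1T\sum_{t=1}^T\theta_t$ (resp. $\frac1T\sum_{t=1}^T\theta'_t$ on $D'$), the triangle inequality for the seminorm gives $\gamma_{\adpgd}(D,D',\snorm{\cdot}{M}) = \snorm{\frac1T\sum_{t=1}^T(\theta_t-\theta'_t)}{M}\le\frac1T\sum_{t=1}^T a_t\le A\le 4L\eta\big(\tfrac{T}{n}+\sqrt{T}\big)$, the stated bound; note it holds deterministically under the coupling, since the noise cancelled. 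The only genuinely delicate points are (i) keeping the GLM/$\mathrm{range}(M)$ argument on the \emph{difference} of iterates rather than the iterates, so the full-dimensional Gaussian noise plays no role, and (ii) the elementary but slightly fiddly recursion-solving step; the remainder is the textbook convex-Lipschitz stability argument carried over to $\snorm{\cdot}{M}$.
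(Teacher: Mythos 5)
Your proof is correct and follows essentially the same route as the paper's: couple the two runs on shared noise so it cancels, use that the GLM subgradients (hence the iterate difference) lie in $\mathrm{range}(M)$, expand the squared (semi)norm, drop the monotone terms, bound the differing example's contribution by $2L/n$ and the gradient difference by $2L$, solve the resulting recursion, and finish by the triangle inequality on the averaged iterates. The only cosmetic differences are that you split the gradient difference per example and solve the recursion by an explicit induction, whereas the paper splits at the dataset level (same point, two datasets, plus same dataset, two points) and invokes the inductive argument of Lemma 3.1 in \citet{bassily2020stability}.
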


\begin{proof}
Let $\{\theta_1,\ldots,\theta_T\}$ be the outputs at each step of Algorithm $\adpgd$ on data set $D$, and $\{\theta'_1,\ldots,\theta'_T\}$ be that on data set $D'$. We first show that for all $t\in[T]$,
\begin{equation}
\snorm{\theta_t-\theta'_t}{M}\leq 4L\eta\sqrt{t}+ \frac{4\eta Lt}{n}\label{eq:8hu}.
\end{equation}
Let $\mu_t=\snorm{\theta_t-\theta'_t}{M}$. By definition of the update step in Algorithm $\adpgd$, we have
\begin{align}
\mu^2_{t+1}&=\snorm{\left(\theta_t-\eta\nabla\calL(\theta_t;D)\right)-\left(\theta'_t-\eta\nabla\calL(\theta'_t;D')\right)}{M}^2\nonumber\\
&=\snorm{\left(\theta_t-\theta'_t\right)-\eta\left(\nabla\calL(\theta_t;D)-\nabla\calL(\theta'_t;D')\right)}{M}^2\nonumber\\
&=\snorm{\theta_t-\theta_t'}{M}^2+\eta^2\snorm{\nabla\calL(\theta_t;D)-\nabla\calL(\theta'_t;D')}{M}^2-2\eta\ip{\theta_t-\theta_t'}{\nabla\calL(\theta_t;D)-\nabla\calL(\theta'_t;D')}\nonumber\\
&=\snorm{\theta_t-\theta_t'}{M}^2+\eta^2\snorm{\nabla\calL(\theta_t;D)-\nabla\calL(\theta'_t;D')}{M}^2\nonumber\\
&\quad -2\eta\ip{\theta_t-\theta_t'}{\nabla\calL(\theta_t;D)-\nabla\calL(\theta_t;D')}-2\eta\ip{\theta_t-\theta_t'}{\nabla\calL(\theta_t;D')-\nabla\calL(\theta'_t;D')}\nonumber\\
&\leq\mu^2_t+4\eta^2L^2+ \frac{4\eta L\mu_t}{n} \nonumber\\
&=4\eta^2L^2 (t+1) + \frac{4\eta L}{n}\sum\limits_{i=1}^t\mu_i.
\label{eq:lkf43}
\end{align}
Here, we used the fact that both $\nabla\calL(\theta;D)$ and $\nabla\calL(\theta;D')$ lie in the subspace spanned by $M$. In the last inequality, we used the monotonicity property of subgradient, i.e., $\ip{\theta_t-\theta_t'}{\nabla\calL(\theta_t;D')-\nabla\calL(\theta'_t;D')}\geq 0$. Now,~\eqref{eq:8hu} follows from the same inductive argument in Lemma 3.1 in~\citep{bassily2020stability}. 

By triangle inequality, we have $\snorm{\frac{1}{T}\sum\limits_{t=1}^T\left(\theta_t-\theta_t'\right)}{M}\leq \frac{1}{T}\sum\limits_{t=1}^T\snorm{\theta_t-\theta_t'}{M}$. Plugging in the bound from~\eqref{eq:8hu} completes the proof.
\end{proof}

With Lemma~\ref{lem:approxStab} in hand we are ready to prove Theorem~\ref{thm:dimIndPop}.

\begin{proof}[Proof of Theorem~\ref{thm:dimIndPop}]
Consider the learning rate $\eta=\frac{\snorm{\thetapop}{M}}{Ln\cdot\max\left\{\sqrt n,\frac{\sqrt{k\log(1/\delta)}}{\epsilon}\right\}}$. Let $M_D$ be the projector to the eigenspace of the matrix $\sum\limits_{i=1}^n\bfx_i\bfx_i^T$. From~\eqref{eq:ls134} in the proof of Theorem~\ref{thm:dimIndependence} and $\sigma = L\sqrt{2\log(1/\delta)} / \eps$, we have 
\begin{align}
\opt=\mathbb{E}_{\adpgd}\left[\calL\left(\privT;D\right)\right]-\calL\left (\thetapop;D\right)&=\frac{\snorm{\thetapop}{M_D}^2}{2n^2\eta}+\frac{\eta L^2}{2}+\frac{\eta L^2\rank(M_D)\ln(1/\delta)}{\epsilon^2}\nonumber\\
&\leq \frac{\snorm{\thetapop}{M}^2}{2n^2\eta}+\frac{\eta L^2}{2}+\frac{\eta L^2k\ln(1/\delta)}{\epsilon^2},
\label{eq:erma}
\end{align}
where the inequality follows from the fact that $M_D$ spans a strict subspace of that spanned by $M$, and $\rank(M_D)\leq \min\{n,\rank(M)\}=k$. Plugging in the learning rate as above, we have $\opt=L\snorm{\thetapop}{M}\cdot O\left(\frac{1}{\sqrt n},\frac{\sqrt{k\log(1/\delta)}}{\epsilon n}\right)$.

We now have the following from Theorem 2.2. in~\citep{hardt2016train} and Lemma~\ref{lem:approxStab} above. Since $M$ spans the complete domain of data elements, one can use $\snorm{\cdot}{M}$ in Lemma~\ref{lem:approxStab}. Setting $T=n^2$, we have
\begin{align}
\gen=\mathbb{E}_{\adpgd,D}\left[\mathbb{E}_{(\bfx,y)\sim\tau}\left[\ell(\ip{\bfx}{\privT};y)\right]-\calL(\privT;D)\right]\leq 8Ln\eta
=\snorm{\thetapop}{M}\cdot O\left(\frac{1}{\sqrt n}\right).\label{eq:absc1}
\end{align}
Notice that $\mathbb{E}_D\left[\calL(\thetapop;D)\right]-\mathbb{E}_{(\bfx,y)\sim\tau}\left[\ell(\ip{\bfx}{\thetapop};y)\right]= 0$. Therefore, combining~\eqref{eq:erma} and~\eqref{eq:absc1}, we have 
\begin{align*}
\mathbb{E}_{\adpgd,D,(\bfx,y)\sim\tau}\left[\ell(\ip{\privT}{\bfx};y)-\ell(\ip{\thetapop}{\bfx};y)\right]&\leq \gen+\mathbb{E}_{D}\left[\opt\right]=\snorm{\thetapop}{M}\cdot O\left(\max\left\{\frac{1}{\sqrt n},\frac{L\sqrt{k\log(1/\delta)}}{\epsilon}\right\}\right).
\end{align*}
This completes the proof.
\end{proof}

\section{Proof of Lower Bound from Section~\ref{sec:lbfp}}
\label{sec:mspdim}
We prove our lower bound in Theorem~\ref{thm:ERM-lb}. We start with a distributional form of the result.

\newcommand{\ex}[2]{{\ifx&#1& \mathbb{E} \else
\underset{#1}{\mathbb{E}} \fi \left[#2\right]}}
\newcommand{\pr}[2]{{\ifx&#1& \mathbb{P} \else
\underset{#1}{\mathbb{P}} \fi \left[#2\right]}}
\newcommand{\var}[2]{{\ifx&#1& \mathsf{Var} \else
\underset{#1}{\mathsf{Var}} \fi \left[#2\right]}}
\newcommand{\dr}[3]{\mathrm{D}_{#1}\left(#2\middle\|#3\right)}
\newcommand{\Reals}{\bfR}

\begin{thm} \label{thm:lb-core}
For $p \in [0,1]^d$ and $\alpha \in [0,1]$, define a distribution $\mathcal{D}_{p,\alpha}$ on $\{0,1\}^d \times \{0,1\}$ as follows. First, with probability $1-\alpha$ we set $X=\mathbf{0} \in \{0,1\}^d$. Otherwise (i.e., with probability $\alpha$) $X \in \{(1,0,\cdots,0),(0,1,0,\cdots,0),\cdots,(0,\cdots,0,1)\} \subset \{0,1\}^d$ is a uniformly random standard basis vector. Independently $Z \in \{0,1\}^d$ is drawn from a product distribution with $\ex{}{Z}=p$. Then $Y=\langle Z, X \rangle \in \{0,1\}$. The pair $(X,Y)$ constitutes the sample from $\mathcal{D}_{p,\alpha}$.
Let $M : \left( \Reals^d \times \Reals \right)^n \to \Reals^d$ be an $(\varepsilon,\delta)$-differentially private algorithm and let $\alpha\in(0,1]$. Then there exists some $p \in [0,1]^d$ such that the following holds. 
Let $(X_0,Y_0), (X_1,Y_1),\cdots,(X_n,Y_n)$ be independent draws from $\mathcal{D}_{p,\alpha}$ and set $D=((X_1,Y_1),\cdots,(X_n,Y_n))$. If $n \le \frac{d}{10(e^\varepsilon-1)\alpha}$ and $\delta \le \frac{e^\varepsilon-1}{1000}$, then \[\ex{}{|\langle M(D), X_0 \rangle - Y_0|} - \ex{}{|\langle p , X_0 \rangle - Y_0|} \ge \frac{\alpha}{10}.\]
\end{thm}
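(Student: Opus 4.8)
The plan is to reduce the problem of estimating $p = \E[Z]$ under differential privacy to the claimed loss lower bound, and then invoke a standard fingerprinting-code (or fingerprinting-lemma) argument. First I would unpack the loss on a fresh sample $(X_0,Y_0)$. Since $X_0$ is either $\mathbf 0$ or a uniformly random basis vector $e_j$, and since $Y_0 = \langle Z_0, X_0\rangle$ with $Z_0$ an independent $\mathrm{Bernoulli}(p)$ product vector, I would compute
\begin{align*}
\E\left[|\langle M(D),X_0\rangle - Y_0|\right]
&= \alpha \cdot \frac1d \sum_{j=1}^d \E\left[ |\, M(D)_j - (Z_0)_j \,| \right],
\end{align*}
and similarly $\E[|\langle p,X_0\rangle - Y_0|] = \alpha\cdot\frac1d\sum_j \E[|p_j - (Z_0)_j|]$. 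Using $(Z_0)_j \in \{0,1\}$ with mean $p_j$, and the fact that $|a - b|$ for $b\in\{0,1\}$ is minimized in expectation over $b\sim\mathrm{Bernoulli}(p_j)$ precisely at $a = p_j$ among the interval $[0,1]$ — more carefully, $\E_{(Z_0)_j}[|a-(Z_0)_j|] = p_j + a - 2 a p_j$ for $a\in[0,1]$, which is $\E[|p_j - (Z_0)_j|] + (1-2p_j)(a - p_j)$ — so the excess loss contributed by coordinate $j$ is an affine function of $M(D)_j$ (after clipping $M(D)_j$ to $[0,1]$, which only helps). This lets me conclude
\begin{align*}
\E\left[|\langle M(D),X_0\rangle - Y_0|\right] - \E\left[|\langle p,X_0\rangle - Y_0|\right]
\;\gtrsim\; \frac{\alpha}{d}\, \E\left[ \left\| \widehat M(D) - p \right\|_1 \right],
\end{align*}
where $\widehat M$ is $M$ with each coordinate clamped to $[0,1]$; equivalently, if the excess loss is small then $\widehat M(D)$ is an accurate $\ell_1$ estimate of $p$.

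Second, I would invoke the fingerprinting lower bound for one-way marginal estimation. The standard statement (Bun--Ullman--Vadhan / Steinke--Ullman, as cited) says: for the product-of-Bernoullis model with $n$ samples, any $(\varepsilon,\delta)$-DP algorithm with $\delta \lesssim \varepsilon/n$ (or a suitable absolute bound) cannot output $\widehat M(D)$ with $\E\|\widehat M(D) - p\|_1 \ll d$ for all $p$ unless $n \gtrsim \sqrt d/\varepsilon$ — and in the relevant regime one gets $\E\|\widehat M(D)-p\|_1 \geq \Omega(d)$ for some $p$ when $n$ is below the threshold. Matching the hypotheses $n \le \frac{d}{10(e^\varepsilon-1)\alpha}$ and $\delta \le \frac{e^\varepsilon-1}{1000}$, I would run the fingerprinting lemma coordinatewise: sample each $p_j$ from a suitable prior (e.g.\ a Beta distribution), show that the correlation $\sum_{j} \E[(\widehat M(D)_j - p_j)\sum_{i}((Z_i)_j - p_j)]$ must be large if the estimator is accurate, yet must be $O(n\delta)$ small by the DP ``secrecy of the sample'' argument; the $\alpha$ enters because only an $\alpha$-fraction of the $n$ samples carry information about any particular coordinate, effectively reducing the per-coordinate sample size from $n$ to roughly $\alpha n$. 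Choosing the prior and the thresholds so that the contradiction kicks in exactly when $n \le \frac{d}{10(e^\varepsilon-1)\alpha}$ gives some $p$ with $\E\|\widehat M(D) - p\|_1 \geq d/10$ (up to constants), hence $\E[\text{excess loss}] \geq \alpha/10$.

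The main obstacle I anticipate is bookkeeping the $\alpha$ factor cleanly through the fingerprinting argument: because each coordinate $j$ is ``active'' (i.e.\ $X_i = e_j$) only with probability $\alpha/d$, the effective signal per coordinate is weak, and I need the per-coordinate fingerprinting lemma to still apply with an effective sample count of order $\alpha n$ rather than $n$, while the privacy bound $O(n\delta)$ on the spurious correlation is unchanged. Concretely I would condition on the realized pattern of which samples are active in coordinate $j$ and apply the fingerprinting lemma to that sub-dataset, then take expectations; the delicate point is that the DP guarantee must be used \emph{before} conditioning (since conditioning on $X_i$ is fine — it is public — but one must be careful that changing $Z_i$ for an active sample is a single-record change). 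A second, more routine obstacle is the clamping step: I must verify that replacing $M(D)$ by its coordinatewise projection $\widehat M(D)$ onto $[0,1]$ only decreases the loss and preserves $(\varepsilon,\delta)$-DP (post-processing), which is immediate. Everything else is the standard template, so I would cite the fingerprinting lemma in the form stated in \citet{bun2018fingerprinting,SteinkeU17} and plug in the parameters.
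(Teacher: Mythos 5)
There is a genuine gap in your first step, and it is exactly the step where the real difficulty of this theorem lives. You claim that since $\mathbb{E}_{(Z_0)_j}\left[|a-(Z_0)_j|\right]=p_j+a(1-2p_j)$, the excess loss is bounded below by (a constant times) $\frac{\alpha}{d}\,\mathbb{E}\left[\|\widehat M(D)-p\|_1\right]$, i.e.\ that small excess loss forces an accurate $\ell_1$ estimate of $p$. This is false pointwise in $p$: the expected absolute loss is \emph{affine} in $a$ with slope $1-2p_j$, so it is minimized at the \emph{median} of the Bernoulli (at $a=0$ or $a=1$), not at the mean $a=p_j$, and your parenthetical claim that it is minimized at $a=p_j$ is incorrect. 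In particular, if many coordinates have $p_j$ near $1/2$, the slope $1-2p_j$ vanishes and an algorithm can have essentially zero excess loss while $\|\widehat M(D)-p\|_1=\Omega(d)$; so the reduction ``loss lower bound $\Leftarrow$ $\ell_1$-estimation lower bound'' does not go through for an arbitrary (or worst-case) $p$, and invoking the one-way-marginal fingerprinting bound as a black box cannot close this.

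The paper's proof handles precisely this issue, and it is not a bookkeeping matter: $p$ is drawn coordinatewise from $\mathsf{Beta}(\beta,\beta)$ with $\beta=1/80$ \emph{small}, so each $P_j$ concentrates near $\{0,1\}$ and the mean/median discrepancy is quantitatively controlled. Concretely, the paper relates loss to estimation error via the triangle inequality through $Z_0$, giving $\mathbb{E}\left[\|\theta-P\|_1\right]\le \frac{d}{\alpha}\,\mathbb{E}\left[|\langle M(D),X_0\rangle-Y_0|\right]+d\cdot\frac{\beta}{1+2\beta}$ and $\mathbb{E}\left[|\langle P,X_0\rangle-Y_0|\right]=\frac{\alpha\beta}{1+2\beta}$, so the excess loss dominates $\frac{\alpha}{d}\mathbb{E}\left[\|\theta-P\|_1\right]$ only up to an additive $\frac{2\alpha\beta}{1+2\beta}$ which must be beaten by choosing $\beta$ small; the fingerprinting lemma is then applied in this same Bayesian setup (not to a fixed worst-case $p$), and a fixed $p$ satisfying the theorem is extracted at the end by averaging. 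A secondary divergence: your plan to recover the $\alpha$-dependence by conditioning on which samples are ``active'' in coordinate $j$ and running the fingerprinting lemma with effective sample size $\alpha n$ is not how the paper proceeds; there, the lemma is applied to all $n$ rows of the $Z$-matrix, and $\alpha$ enters only on the privacy side, via $\mathbb{E}\left[|\theta_{J_0}(Z_{0,J_0}-P_{J_0})|\right]\le\frac{\alpha\beta}{1+2\beta}$ in the $(e^{\varepsilon}-1)$ error term, which is both simpler and avoids the conditioning subtleties you flag. Your clamping observation (post-processing plus the fact that projecting onto $[0,1]^d$ only reduces the loss against $Z_0\in\{0,1\}^d$) is fine and matches the paper.
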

Note that $\inf_{\theta \in \Reals^d} \ex{}{|\langle \theta , X_0 \rangle - Y_0|} \le \ex{}{|\langle p , X_0 \rangle - Y_0|}$ and we can always set $\alpha = \min \left\{ 1 , \frac{d}{10(e^\varepsilon-1)n}\right\}$ to obtain the bound \[\ex{}{|\langle M(D), X_0 \rangle - Y_0|} - \inf_{\theta \in \Reals^d} \ex{}{|\langle \theta , X_0 \rangle - Y_0|} \ge  \min \left\{ \frac{1}{10} , \frac{d}{100(e^\varepsilon-1)n}\right\}.\]
Note that the lower bound does not grow as $\delta \to 0$. Such a dependence can be introduced via a group privacy reduction \citep{SteinkeU15}.
\begin{proof}
This proof heavily builds on the lower bounds of Steinke and Ullman \citep{SteinkeU17}.

Let $\mathcal{D}_{p,\alpha}$ be as in the theorem statement. To show that there exists some $p \in [0,1]^d$ satisfying the theorem, we show that for a random $p$ the theorem holds.  It follows that some specific $p$ must satisfy the theorem, but that specific $p$ will depend on $M$ (even though the distribution does not).

We now specify the distribution. Let $\beta>0$ be a parameter to be determined later. Let $P_1, \cdots, P_d$ be independent draws from $\mathsf{Beta}(\beta,\beta)$. We will show that, with this random $P$, the distribution $\mathcal{D}_{P,\alpha}$ satisfies the criteria of the theorem.

Specifically, we first sample $P$ as above. Then we sample $(X_0,Y_0), (X_1,Y_1),\cdots,(X_n,Y_n)$ independently (conditioned on $P$) from $\mathcal{D}_{P,\alpha}$. We set $D=((X_1,Y_1),\cdots,(X_n,Y_n))$ and $\theta= \mathsf{Proj}_{[0,1]^d}(M(D))$. That is, $\theta$ is the output of the algorithm $M$ on the dataset $D$ truncated to $[0,1]^d$. We let $Z_0,Z_1,\cdots,Z_n \in\{0,1\}^d$ be the corresponding variables hidden in $\mathcal{D}_{P,\alpha}$, so that $Y_i=\langle Z_i , X_i \rangle$ for all $i \in \{0,1,\cdots,n\}$. Note that $(X_0,Y_0)$ is \emph{not} included in $D$.

We have
\begin{align*}
    \ex{}{\|\theta-P\|_1} &\le \ex{}{\|M(D)-Z_0\|_1} + \ex{}{\|Z_0-P\|_1}\\
    &= \sum_j^d \ex{}{|M(D)_j-Z_{0,j}|}  + \ex{}{|Z_{0,j}-P_j|}\\
    &= \sum_j^d \ex{}{|\langle M(D) - Z_0 , X_0 \rangle| \mid X_{0,j}=1} + \ex{}{P_j|1-P_j| + (1-P_j)|0-P_j|}\\
    &= \frac{d}{\alpha} \cdot \ex{}{|\langle M(D) - Z_0 , X_0 \rangle|} + \sum_j^d 2 \ex{}{P_j(1-P_j)}\\
    &= \frac{d}{\alpha} \cdot \ex{}{|\langle M(D) , X_0 \rangle - Y_0|} + d \cdot \frac{\beta}{1+2\beta}
\end{align*}
and
\begin{align*}
    \inf_{\theta \in \Reals^d} \ex{}{|\langle \theta , X_0 \rangle - Y_0|} &\le \ex{}{|\langle P , X_0 \rangle - Y_0|} \\
    &= \ex{}{|\langle P - Z_0 , X_0 \rangle|}\\
    &= \frac{\alpha}{d} \ex{}{\| P - Z_0 \|_1}\\
    &= \frac{\alpha}{d} \sum_j^d \ex{}{2P_j(1-P_j)}\\
    &= \frac{\alpha\beta}{1+2\beta}.
\end{align*}
Hence \[\ex{}{|\langle M(D) , X_0 \rangle - Y_0|} - \inf_{\theta \in \Reals^d} \ex{}{|\langle \theta , X_0 \rangle - Y_0|} \ge \ex{}{|\langle M(D) , X_0 \rangle - Y_0|} - \ex{}{|\langle P , X_0 \rangle - Y_0|} \ge \frac{\alpha}{d} \ex{}{\|\theta-P\|_1} - \frac{2\alpha\beta}{1+2\beta}.\]
Since $\theta,P \in [0,1]^d$, we also have $\|\theta-P\|_2^2 \le \|\theta-P\|_1$.
 Now that we have established these inequalities the proof more closely resembles a standard one-way marginals lower bound.

We invoke a form of the fingerprinting lemma:
\begin{lem}[{\cite[Lem.~10]{SteinkeU17}}]
Let $f: \{0,1\}^n \to \Reals$. Let $P \leftarrow \mathsf{Beta}(\beta,\beta)$ and, conditioned on $P$, let $Z_1, \cdots, Z_n \in \{0,1\}$ be independent with expectation $P$. Then 
\begin{align*}
    \ex{}{f(Z) \cdot \sum_i^n (Z_i-P)} &= 2\beta \cdot \ex{}{f(Z) \cdot \left(P-\frac12\right)} \\
    &= \beta \cdot \ex{}{ \left(f(Z) - \frac12 \right)^2 + \left( P - \frac12 \right)^2 - \left(f(Z) - P\right)^2}.
\end{align*}
\end{lem}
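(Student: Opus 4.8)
The plan is to prove the two claimed equalities separately, treating the rewriting in terms of squares (the second equality) as routine algebra and concentrating the real work on the fingerprinting identity (the first equality), which I would establish by conditioning on $P$ and integrating by parts against the Beta density. I would dispatch the second equality first, since it uses nothing about the joint law beyond the symmetry of $\mathsf{Beta}(\beta,\beta)$. Expanding the integrand pointwise gives
\begin{align*}
\left(f(Z)-\tfrac12\right)^2 + \left(P-\tfrac12\right)^2 - \left(f(Z)-P\right)^2 = 2f(Z)\left(P-\tfrac12\right) - \left(P-\tfrac12\right).
\end{align*}
Taking expectations and using that $P\sim\mathsf{Beta}(\beta,\beta)$ is symmetric about $\tfrac12$, so $\ex{}{P-\tfrac12}=0$, the stray term vanishes and I recover $2\beta\,\ex{}{f(Z)(P-\tfrac12)}$, matching the middle expression.

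For the first equality I would condition on $P=p$ and write $g(p) := \ex{Z\sim\mathrm{Bern}(p)^n}{f(Z)} = \sum_z f(z)\mu_p(z)$, where $\mu_p(z)=p^{\sum_i z_i}(1-p)^{n-\sum_i z_i}$ is the product Bernoulli mass; note $g$ is a polynomial in $p$. The conceptual crux is the score identity
\begin{align*}
\frac{d}{dp}\mu_p(z) = \mu_p(z)\cdot\frac{\sum_i (z_i - p)}{p(1-p)},
\end{align*}
which follows by differentiating $\log\mu_p(z)=\sum_i z_i\log p + (n-\sum_i z_i)\log(1-p)$ and simplifying $\frac{\sum_i z_i}{p}-\frac{n-\sum_i z_i}{1-p}=\frac{\sum_i(z_i-p)}{p(1-p)}$. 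Multiplying by $f(z)$, summing over $z$, and using linearity gives the conditional expectation $\ex{}{f(Z)\sum_i(Z_i-P)\mid P=p} = p(1-p)\,g'(p)$; this is precisely the step that converts the combinatorial quantity $\sum_i(Z_i-P)$ into a derivative in the parameter.

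It then remains to take the expectation over $P\sim\mathsf{Beta}(\beta,\beta)$, whose density is $p^{\beta-1}(1-p)^{\beta-1}/B(\beta,\beta)$. The factor $p(1-p)$ cancels the singularities, leaving $\frac{1}{B(\beta,\beta)}\int_0^1 g'(p)\,p^\beta(1-p)^\beta\,dp$. I would integrate by parts: since $\frac{d}{dp}\left[p^\beta(1-p)^\beta\right]=\beta\,p^{\beta-1}(1-p)^{\beta-1}(1-2p)$, and the boundary terms $g(p)p^\beta(1-p)^\beta$ vanish at $0$ and $1$ because $\beta>0$ and $g$ is bounded, the derivative transfers onto the weight and produces the factor $(2p-1)$. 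Reassembling yields $2\beta\,\ex{P}{g(P)(P-\tfrac12)} = 2\beta\,\ex{}{f(Z)(P-\tfrac12)}$, where the final step just undoes the conditioning via the tower property.

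The main obstacle is the clean execution of this integration by parts: one must verify that the boundary terms genuinely vanish (which is where $\beta>0$ is used) and recognize that differentiating the Beta weight $p^\beta(1-p)^\beta$ is exactly what manufactures the $(P-\tfrac12)$ term that appears in the statement. Everything else is either the algebraic identity of the first paragraph or the score computation, after which the result is a single integration by parts.
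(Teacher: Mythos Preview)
Your proof is correct. Note, however, that the paper does not actually prove this lemma: it is quoted verbatim from \cite[Lem.~10]{SteinkeU17} and invoked without argument, so there is no in-paper proof to compare against. That said, your approach---conditioning on $P$, using the score identity $\partial_p \log\mu_p(z)=\sum_i(z_i-p)/(p(1-p))$ to convert $\ex{}{f(Z)\sum_i(Z_i-P)\mid P=p}$ into $p(1-p)g'(p)$, and then integrating by parts against the $\mathsf{Beta}(\beta,\beta)$ density so that the derivative lands on $p^\beta(1-p)^\beta$ and produces the factor $(2p-1)$---is exactly the standard derivation of this fingerprinting identity and matches how it is established in the cited source. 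The algebraic rewriting for the second equality and the use of the symmetry of $\mathsf{Beta}(\beta,\beta)$ to kill the $\ex{}{P-\tfrac12}$ term are both fine.
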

We apply this lemma to each coordinate $j$ with $\theta_j \mapsto f(Z)$ and $Z_j \mapsto Z$ and $P_j \mapsto P$. Summing over the $d$ coordinates gives
\begin{align*}
    \sum_i^n \ex{}{\langle \theta , Z_i-P \rangle} &= \beta \cdot \ex{}{\left\|\theta - \left(\frac12,\cdots,\frac12\right)\right\|_2^2 + \left\|P - \left(\frac12,\cdots,\frac12\right)\right\|_2^2 - \|\theta-P\|_2^2 }\\
    &\ge \beta \cdot \left( 0 + \sum_j^d \var{}{P_j} - \|\theta-P\|_2^2\right)\\
    &= \beta \cdot \left( \frac{d}{4(1+2\beta)} - \|\theta-P\|_2^2\right).%
\end{align*}

For the next part of the proof we think of $P$ has fixed, so that $(X_0,Y_0,Z_0), (X_1,Y_1,Z_2), \cdots, (X_n,Y_n,Z_n)$ are independent. We will also fix an arbitrary $i \in [n]$.

Since the algorithm $M$ only ``sees'' $Z_i$ via $Y_i=\langle Z_i , X_i \rangle$ (and $Z_i$ and $X_i$ are independent), $\theta$ is independent from $Z_{i,j}$ whenever $X_{i,j}=0$. For $i \in \{0\}\cup [n]$, let $J_i \in [d] \cup \{\bot\}$ denote the index of the coordinate such that $X_{i,J_i}=1$ and $X_{i,j}=0$ for all $j \ne J_i$ and $J_i=\bot$ if $X_i=\mathbf{0}$. Thus $\ex{}{\langle \theta, Z_i-P \rangle} = \ex{}{\theta_{J_i}\cdot(Z_{i,J_i}-P_{J_i})}$, where we define $\theta_\bot=P_\bot=Z_{i,\bot}=0$. Note that $J_i$ is independent from $Z_i$ and $P$.

Now it's time to use differential privacy.  Since $\theta$ is an $(\varepsilon,\delta)$-differentially private function of $D = ((X_1,Y_1),\cdots,(X_n,Y_n))$, we have \[\theta_{J_i}\cdot(Z_{i,J_i}-P_{J_i}) \approx_{\varepsilon,\delta} \theta_{J_0}\cdot(Z_{0,J_0}-P_{J_0}).\] That is, the distributions must be similar, since it is equivalent to changing only one sample. Namely, this is equivalent to replacing $(X_i,Y_i)$ with $(X_0,Y_0)$ in $D$. The latter distribution is easy for us to reason about because $\theta$ and $(Z_0,J_0)$ are independent (conditioned on $P$).

We will use the following lemma relating similarity of distributions to expectations.

\begin{lem}[{\cite[Lem.~A.1]{FeldmanS17}}]\label{lem:dp-ex}
 Let $U$ and $V$ be random variables supported on $[\mu-\Delta,\mu+\Delta]$. Suppose $U \approx_{\varepsilon,\delta} V$ -- that is, $\pr{}{U \in E} \le e^\varepsilon \pr{}{V \in E} + \delta$ and $\pr{}{V \in E} \le e^\varepsilon \pr{}{U \in E} + \delta$ for all events $E$. Then \[\left| \ex{}{U} - \ex{}{V} \right| \le \left( e^\varepsilon - 1 \right) \ex{}{|U-\mu|} + 2\delta\Delta. \]
\end{lem}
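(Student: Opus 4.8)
\medskip
\noindent\emph{Proof plan for Lemma~\ref{lem:dp-ex}.} The plan is to reduce the two expectations to one-dimensional tail probabilities and then invoke the two-sided closeness hypothesis level by level. First I would reduce to the case $\mu=0$: replacing $(U,V)$ by $(U-\mu,V-\mu)$ leaves the hypothesis $U\approx_{\varepsilon,\delta}V$ intact and shifts both sides of the claimed inequality by the same amount, so it suffices to prove $|\ex{}{U}-\ex{}{V}|\le(e^{\varepsilon}-1)\ex{}{|U|}+2\delta\Delta$ for $U,V$ supported on $[-\Delta,\Delta]$. For such a variable $W$, the layer-cake identity gives $\ex{}{W}=\ex{}{W+\Delta}-\Delta=\int_0^{2\Delta}\pr{}{W>t-\Delta}\,dt-\Delta$, and hence, after the substitution $s=t-\Delta$,
\[
\ex{}{U}-\ex{}{V}=\int_{-\Delta}^{\Delta}\Big(\pr{}{U>s}-\pr{}{V>s}\Big)\,ds .
\]

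Next I would split this integral at $s=0$ and bound the two halves so that the surviving probabilities all refer to $U$. On $s\in[0,\Delta]$ I use $\pr{}{U>s}\le e^{\varepsilon}\pr{}{V>s}+\delta$, which rearranges to $\pr{}{U>s}-\pr{}{V>s}\le(1-e^{-\varepsilon})\pr{}{U>s}+\delta\le(e^{\varepsilon}-1)\pr{}{U>s}+\delta$, using the elementary fact $e^{\varepsilon}-1-(1-e^{-\varepsilon})=(e^{\varepsilon/2}-e^{-\varepsilon/2})^{2}\ge 0$. On $s\in[-\Delta,0)$ I rewrite the integrand as $\pr{}{V\le s}-\pr{}{U\le s}$ and bound it by $(e^{\varepsilon}-1)\pr{}{U\le s}+\delta$ via $\pr{}{V\le s}\le e^{\varepsilon}\pr{}{U\le s}+\delta$. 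Integrating, and recognizing that $\int_0^{\Delta}\pr{}{U>s}\,ds=\ex{}{\max\{U,0\}}$ and $\int_{-\Delta}^0\pr{}{U\le s}\,ds=\ex{}{\max\{-U,0\}}$ (the tail integrals terminate at $\pm\Delta$ because $|U|\le\Delta$, and $\le$ versus $<$ at the level sets changes nothing since it only matters on a Lebesgue-null set of levels), the two pieces add up to $(e^{\varepsilon}-1)\ex{}{|U|}+2\delta\Delta$. Thus $\ex{}{U}-\ex{}{V}\le(e^{\varepsilon}-1)\ex{}{|U|}+2\delta\Delta$.

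The reverse bound $\ex{}{V}-\ex{}{U}\le(e^{\varepsilon}-1)\ex{}{|U|}+2\delta\Delta$ is obtained by the same split, but choosing the orientation of the closeness inequality so that $U$ (not $V$) again survives: on $[0,\Delta]$ use $\pr{}{V>s}\le e^{\varepsilon}\pr{}{U>s}+\delta$ directly, and on $[-\Delta,0)$ use the ``lower'' form $\pr{}{V\le s}\ge e^{-\varepsilon}\big(\pr{}{U\le s}-\delta\big)$ together with $1-e^{-\varepsilon}\le e^{\varepsilon}-1$ once more. Combining the two one-sided bounds yields $|\ex{}{U}-\ex{}{V}|\le(e^{\varepsilon}-1)\ex{}{|U|}+2\delta\Delta$, and undoing the initial shift replaces $\ex{}{|U|}$ by $\ex{}{|U-\mu|}$, as desired.

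I expect the only real subtlety to be exactly this bookkeeping: the target bound is asymmetric in $U$ and $V$ (it features $\ex{}{|U-\mu|}$, not $\ex{}{|V-\mu|}$), so in each of the four tail-integral pieces one must deliberately pick the ``$e^{\varepsilon}$-upper'' or the ``$e^{-\varepsilon}$-lower'' form of $U\approx_{\varepsilon,\delta}V$ in order to keep the remaining probability a statement about $U$, paying the harmless slack $1-e^{-\varepsilon}\le e^{\varepsilon}-1$ whenever the lower form is used. Everything else — the layer-cake identity, truncation of the tail integrals at $\pm\Delta$, and the $<$/$\le$ endpoint issues — is routine.
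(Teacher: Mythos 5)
Your proof is correct. Note that the paper does not actually prove this lemma — it imports it by citation from Feldman and Steinke (Lem.~A.1) and uses it as a black box in the lower-bound argument — so there is no in-paper proof to compare against; your argument is, however, essentially the standard one for this statement. The chain of steps all checks out: the shift to $\mu=0$ preserves the $\approx_{\varepsilon,\delta}$ hypothesis; the layer-cake identity gives $\E[U]-\E[V]=\int_{-\Delta}^{\Delta}\bigl(\Pr[U>s]-\Pr[V>s]\bigr)\,ds$ for variables supported in $[-\Delta,\Delta]$; and on each half of the split you correctly choose the orientation of the closeness inequality (paying the harmless slack $1-e^{-\varepsilon}\le e^{\varepsilon}-1$, justified by $e^{\varepsilon}+e^{-\varepsilon}-2\ge 0$) so that the surviving tail probabilities refer to $U$, which after integration yield $(e^{\varepsilon}-1)\E[\max\{U,0\}]$ and $(e^{\varepsilon}-1)\E[\max\{-U,0\}]$ plus a $\delta\Delta$ term each; the reverse direction is handled symmetrically, again keeping $U$ in the bound. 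The endpoint and $<$ versus $\le$ issues you flag are indeed immaterial, since they affect the integrand only on a Lebesgue-null set of levels.
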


Thus, we have 
\begin{align*}
    \ex{}{\theta_{J_i}\cdot(Z_{i,J_i}-P_{J_i})} &\le \ex{}{\theta_{J_0}\cdot(Z_{0,J_0}-P_{J_0})} + (e^\varepsilon-1) \ex{}{|\theta_{J_0}\cdot(Z_{0,J_0}-P_{J_0})|} + 2\delta\\
    &= 0 + (e^\varepsilon-1) \ex{}{\theta_{J_0}\cdot|Z_{0,J_0}-P_{J_0}|} + 2\delta\\
    &\le (e^\varepsilon-1) \ex{}{|Z_{0,J_0}-P_{J_0}|} + 2\delta\\
    &= (e^\varepsilon-1) \ex{}{2P_{J_0}(1-P_{J_0})} + 2\delta\\
    &= (e^\varepsilon-1) \frac{\alpha\beta}{1+2\beta} + 2\delta.
\end{align*}
Combining inequalities and summing over $i \in [n]$ yields 
\[n \cdot \left( (e^\varepsilon-1) \frac{\alpha\beta}{1+2\beta} + 2\delta \right) \ge \sum_i^n \ex{}{\langle \theta , Z_i-P \rangle} \ge \beta \cdot \left( \frac{d}{4(1+2\beta)} - \ex{}{\|\theta-P\|_2^2} \right),\]
which rearranges to
\[ \ex{}{\|\theta-P\|_2^2} \ge \frac{d - 4n(e^\varepsilon-1)\alpha}{4+8\beta}  - \frac{2\delta n}{\beta}.\]
Now
\begin{align*}
\ex{}{|\langle M(D) , X_0 \rangle - Y_0|} -\ex{}{|\langle P , X_0 \rangle - Y_0|} &\ge \frac{\alpha}{d} \ex{}{\|\theta-P\|_1} - \frac{2\alpha\beta}{1+2\beta}\\
    &\ge \frac{\alpha}{d} \ex{}{\|\theta-P\|_2^2} - \frac{2\alpha\beta}{1+2\beta} \\
    &\ge  \alpha \frac{1 - 4n(e^\varepsilon-1)\alpha/d}{4+8\beta}  - \frac{2\alpha\delta n}{\beta d} - \frac{2\alpha\beta}{1+2\beta} \\
    &=  \alpha \frac{1 - 8\beta - 4n(e^\varepsilon-1)\alpha/d}{4+8\beta}  - \frac{2\delta \alpha n}{\beta d}.
\end{align*}
We set $\beta=1/80$ to obtain \[ \ex{}{|\langle M(D) , X_0 \rangle - Y_0|}  - \ex{}{|\langle P , X_0 \rangle - Y_0|}  \ge  \alpha \frac{0.9 - 4n(e^\varepsilon-1)/d}{4.1}  - \frac{160\delta \alpha n}{d}.\]
Finally, we make use of the assumptions $n \le \frac{d}{10(e^\varepsilon-1)\alpha}$ and $\delta \le \frac{e^\varepsilon-1}{1000}$ to conclude \[ \ex{}{|\langle M(D) , X_0 \rangle - Y_0|}  -  \ex{}{|\langle P , X_0 \rangle - Y_0|}  \ge  \alpha \frac{0.9 - 0.4}{4.1}  - \frac{16\delta \alpha}{e^\varepsilon-1} \ge \frac{\alpha}{10}.\]
\end{proof}

Theorem \ref{thm:lb-core} contains all the technical ingredients of our lower bound. The only thing that is missing is that it is in terms of population risk, rather than empirical risk. We address this next.

\begin{cor}
Let $M : (\bfR^d \times \bfR)^n \to \bfR^d$ be a $(\varepsilon,\delta)$-differentially private algorithm with $\varepsilon \le \frac{1}{21}$ and $\delta \le \frac{\varepsilon}{1000}$. Then there exists $D = ((\bfx_1,y_1),\cdots,(\bfx_n,y_n)) \in (\bfR^d \times \bfR)^n$ with $\|\bfx_i\|_2\le1$ and $y_i \in \{0,1\}$ for all $i \in [n]$ such that \[\ex{}{\frac1n \sum_{i=1}^n |\langle M(D), \bfx_i \rangle - y_i| - |\langle \theta^* , \bfx_i \rangle - y_i|} \ge  \min \left\{ \frac{1}{20} , \frac{d}{210 \varepsilon n}\right\} - 2\delta,\] where $\theta^* := \argmin_{\theta \in \Reals^d} \sum_{i=1}^n |\langle \theta^* , \bfx_i \rangle - y_i|$ satisfies $\theta^* \in [0,1]^d$ (assuming we break ties in the argmin towards lower-norm vectors).
\end{cor}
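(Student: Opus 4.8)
The plan is to reduce the empirical-risk statement to the population-risk statement of Theorem~\ref{thm:lb-core} via the fact that differential privacy implies generalization. Write $\theta := \mathsf{Proj}_{[0,1]^d}(M(D))$ for the coordinatewise truncation of $M(D)$ to $[0,1]^d$; this is a post-processing of $M(D)$, hence $(\varepsilon,\delta)$-differentially private, and on the hard instance (every $\bfx_i$ a standard basis vector or $\mathbf{0}$, every $y_i\in\{0,1\}$) truncation can only decrease each $|\langle\theta,\bfx_i\rangle-y_i|$, so $\calL(M(D);D)\ge\calL(\theta;D)$ pointwise and every per-example loss of $\theta$ lies in $[0,1]$. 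First I would note that the proof of Theorem~\ref{thm:lb-core} already works with this truncated $\theta$: it sets $\theta=\mathsf{Proj}_{[0,1]^d}(M(D))$ and, by the triangle inequality $\|\theta-Z_0\|_1\ge\|\theta-P\|_1-\|Z_0-P\|_1$, yields $\mathbb{E}[|\langle\theta,X_0\rangle-Y_0|]-\mathbb{E}[|\langle P,X_0\rangle-Y_0|]\ge\tfrac{\alpha}{d}\mathbb{E}[\|\theta-P\|_1]-\tfrac{2\alpha\beta}{1+2\beta}$, so the rest of that proof carries over unchanged. Setting $\alpha=\min\{1,\tfrac{d}{10(e^\varepsilon-1)n}\}$ and extracting $p^*\in[0,1]^d$ from the averaging over $P$ gives a fixed product distribution $\mathcal{D}_{p^*,\alpha}$ with, for $D\sim\mathcal{D}_{p^*,\alpha}^{\,n}$, \[\mathbb{E}\big[\mathrm{pop}(\theta)\big]-\mathrm{pop}(p^*)\;\ge\;\min\Big\{\tfrac{1}{10},\tfrac{d}{100(e^\varepsilon-1)n}\Big\}\;=\;\tfrac{\alpha}{10},\] where $\mathrm{pop}(\vartheta):=\mathbb{E}_{(X_0,Y_0)\sim\mathcal{D}_{p^*,\alpha}}[\,|\langle\vartheta,X_0\rangle-Y_0|\,]\le\alpha$ for any $\vartheta\in[0,1]^d$, since the loss vanishes on the probability-$(1-\alpha)$ event $X_0=\mathbf{0}$.

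Second, I would transfer this to empirical risk. Fix $i\in[n]$ and replace $(X_i,Y_i)$ in $D$ by an independent fresh draw $(X_0,Y_0)$, obtaining a neighboring dataset; conditioning on $(X_i,Y_i)$ the map $\vartheta\mapsto\ell(\vartheta;(X_i,Y_i))$ is a fixed $[0,1]$-valued function, so integrating the pointwise $(\varepsilon,\delta)$-guarantee of $M$ (equivalently, Lemma~\ref{lem:dp-ex}) over the neighboring pair, and using that $M$ run on the freshened dataset is independent of $(X_i,Y_i)$ together with exchangeability, gives $\mathbb{E}[\mathrm{pop}(\theta)]\le e^\varepsilon\,\mathbb{E}[\ell(\theta;(X_i,Y_i))]+\delta$. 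Averaging over $i$ yields $\mathbb{E}[\calL(\theta;D)]\ge e^{-\varepsilon}\big(\mathbb{E}[\mathrm{pop}(\theta)]-\delta\big)\ge(1-\varepsilon)\mathbb{E}[\mathrm{pop}(\theta)]-\delta$. On the other side, since $\theta^*$ minimizes $\calL(\cdot;D)$ we have $\calL(\theta^*;D)\le\calL(p^*;D)$ pointwise with $\mathbb{E}[\calL(p^*;D)]=\mathrm{pop}(p^*)$, and on this instance $\theta^*$ is coordinatewise the majority value of $\{Z_{i,j}: \bfx_i=e_j\}$ with ties broken to $0$, hence $\theta^*\in\{0,1\}^d\subseteq[0,1]^d$ as the corollary requires. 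Therefore \[\mathbb{E}[\calL(M(D);D)]-\mathbb{E}[\calL(\theta^*;D)]\ge\mathbb{E}[\calL(\theta;D)]-\mathrm{pop}(p^*)\ge(1-\varepsilon)\big(\mathbb{E}[\mathrm{pop}(\theta)]-\mathrm{pop}(p^*)\big)-\varepsilon\,\mathrm{pop}(p^*)-\delta.\] Plugging in $\mathbb{E}[\mathrm{pop}(\theta)]-\mathrm{pop}(p^*)\ge\tfrac{\alpha}{10}$ and $\mathrm{pop}(p^*)\le\tfrac{9}{10}\alpha$ (since $\mathrm{pop}(p^*)\le\mathbb{E}[\mathrm{pop}(\theta)]-\tfrac\alpha{10}\le\alpha-\tfrac\alpha{10}$), this is $\ge\tfrac{\alpha}{10}(1-10\varepsilon)-\delta\ge\tfrac{11\alpha}{210}-\delta$ when $\varepsilon\le\tfrac1{21}$; using $e^\varepsilon-1\le\tfrac{22}{21}\varepsilon$ and $\delta\le\tfrac{\varepsilon}{1000}$ one checks this is $\ge\min\{\tfrac1{20},\tfrac{d}{210\varepsilon n}\}-2\delta$. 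Finally, an averaging argument over the randomness of $D$ produces a deterministic $D\in(\bfR^d\times\bfR)^n$ attaining the bound; since $D$ lies in the support of $\mathcal{D}_{p^*,\alpha}^{\,n}$ it automatically satisfies $\|\bfx_i\|_2\le1$, $y_i\in\{0,1\}$, and $\theta^*\in[0,1]^d$.

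The main obstacle is the population-to-empirical transfer. Invoking ``DP implies generalization'' for a $[0,1]$-bounded loss in the standard additive form loses $\Theta(e^\varepsilon-1)=\Theta(\varepsilon)$, which would overwhelm the target signal $\Theta(\tfrac{d}{\varepsilon n})$ once $n$ is large. What rescues the argument is using the multiplicative form $\mathbb{E}[\calL(\theta;D)]\ge(1-\varepsilon)\mathbb{E}[\mathrm{pop}(\theta)]-\delta$ together with the fact that $\mathbb{E}[\mathrm{pop}(\theta)]\le\alpha$ (the loss is identically zero on the $(1-\alpha)$-probability event $X_0=\mathbf{0}$), so the generalization error costs only $O(\varepsilon\alpha)$, which is of the same order as the target $\alpha/10$; getting the resulting arithmetic to close with the exact constants $1/21$, $1/20$, $210$ is the only delicate point, everything else being routine bookkeeping on top of the fingerprinting machinery already inside Theorem~\ref{thm:lb-core} and the elementary structure of the empirical minimizer on the hard instance.
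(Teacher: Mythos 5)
Your proposal is correct and follows essentially the same route as the paper: invoke Theorem~\ref{thm:lb-core} for the population-risk bound, transfer it to empirical risk by exploiting that DP makes the swapped-sample expectations close (with the crucial observation that the relevant losses are $O(\alpha)$ so the transfer costs only $O(\varepsilon\alpha)$), compare $\theta^*$ against the fixed $p$, and fix a dataset by averaging. The only differences are cosmetic: you apply the DP-expectation inequality in multiplicative form to the truncated, $[0,1]$-bounded loss (which, as a side benefit, makes explicit the boundedness that the paper's application of Lemma~\ref{lem:dp-ex} with $\Delta=1$ implicitly relies on), whereas the paper applies the additive form to the loss difference directly, and your constants close with a little more slack ($-\delta$ instead of $-2\delta$).
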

To obtain Theorem \ref{thm:ERM-lb} from this corollary we simply need to pad the features in the dataset $D$ with zeros to attain dimensionality $p \ge d$. Note that $\rank(\sum_i \bfx_i \bfx_i^T) \le d$ and $\|\theta^*\|_2 \le \sqrt{d}$ and $\|\bfx_i\|_2\le 1$ even after this padding.
\begin{proof}
We set $\alpha = \min \left\{ 1 , \frac{d}{10(e^\varepsilon-1)n}\right\}$ and invoke Theorem \ref{thm:lb-core} to pick some $p \in [0,1]^d$ such that the following holds. Let the distribution $\mathcal{D}_{p,\alpha}$ on $\{0,1\}^d \times \{0,1\}$ be as in Theorem \ref{thm:lb-core}. Let $(X_0,Y_0), (X_1,Y_1),\cdots,(X_n,Y_n)$ be independent draws from $\mathcal{D}_{p,\alpha}$ and set $D=((X_1,Y_1),\cdots,(X_n,Y_n))$. Then \[\ex{}{|\langle M(D), X_0 \rangle - Y_0| - |\langle p , X_0 \rangle - Y_0|} \ge \frac{\alpha}{10} = \min \left\{ \frac{1}{10} , \frac{d}{100(e^\varepsilon-1)n}\right\}.\]
We use Lemma \ref{lem:dp-ex} and the differential privacy guarantee to relate the population value above with the emprical value: For all $i \in [n]$, we have
\begin{align*}
    &\left| \ex{}{|\langle M(D), X_i \rangle - Y_i| - |\langle p , X_i \rangle - Y_i|} - \ex{}{|\langle M(D), X_0 \rangle - Y_0| - |\langle p , X_0 \rangle - Y_0|}  \right| \\
    &~~\le (e^\varepsilon-1) \ex{}{\left| |\langle M(D), X_0 \rangle - Y_0| - |\langle p , X_0 \rangle - Y_0| \right|} + 2\delta\\
    &~~\le (e^\varepsilon-1) \pr{}{X_0 \ne \mathbf{0}} \cdot 1 + 2\delta\\
    &~~= (e^\varepsilon-1) \cdot \alpha + 2\delta.
\end{align*}
Averaging over $i \in [n]$ gives
\[\frac1n \sum_{i=1}^n \ex{}{|\langle M(D), X_i \rangle - Y_i| - |\langle p , X_i \rangle - Y_i|} \ge \frac{\alpha}{10} - (e^\varepsilon-1) \cdot \alpha - 2\delta.\]
Since $\varepsilon \le \frac{1}{21}$, we have $e^\varepsilon-1 \le \frac{21}{20} \varepsilon \le \frac{1}{20}$ and
\[\ex{}{\frac1n \sum_{i=1}^n |\langle M(D), X_i \rangle - Y_i| - \inf_{\theta \in \bfR^d} |\langle \theta , X_i \rangle - Y_i|} \ge \frac1n \sum_{i=1}^n \ex{}{|\langle M(D), X_i \rangle - Y_i| - |\langle p , X_i \rangle - Y_i|} \ge  \frac{\alpha}{20} - 2\delta.\]
Here $D$ is still a random dataset. Depending on $M$, we can pick some fixed dataset from the support such that the inequality holds. The only remaining randomness is that of the algorithm $M$.

Finally, we have \[\theta^* = \argmin_{\theta \in \Reals^d} \sum_{i=1}^n |\langle \theta^* , X_i \rangle - Y_i| = \argmin_{\theta \in \Reals^d} \sum_{i=1}^n |\langle \theta^* - Z_i , X_i \rangle |.\]
Since $Z_i \in \{0,1\}^d$ for all $i \in [n]$, we conclude by convexity that $\theta^* \in [0,1]^d$. In fact, it can be shown that $\theta^* \in \{0,1\}^d$ is the coordinate-wise majority of the $Z_i$s.
\end{proof}

\section{Proofs for Section~\ref{sec:privNonConvex}}
\label{app:nonconvex}
\begin{proof}[Proof of Theorem~\ref{thm:util31}]
Recall that $M$ is the projector to the eigenspace of the matrix $\sum\limits_{i=1}^n \bfx_i\bfx_i^T$, and $\snorm{\cdot}{M}$ being the corresponding seminorm. Let $\theta_1,\ldots,\theta_T$ be the sequence of models generated in Line 4 of Algorithm \ref{Alg:Frag}, and let the constraint set $\calC=\bfR^p$. 
Also, let $\boldb_t$ be the Gaussian noise added in the $t$-th iteration.
By the smoothness property of $\ell(z;\cdot)$, we have the following:
\begin{align}
    \calL(\theta_{t+1};D)&\leq \calL(\theta_t;D)+\ip{\nabla \calL(\theta_t;D)}{\theta_{t+1}-\theta_t}_M+\frac{\beta}{2}\snorm{\theta_{t+1}-\theta_t}{M}^2\nonumber\\
    &=\calL(\theta_t;D)-\frac{1}{\beta}\ip{\nabla\calL(\theta_t;D)}{\nabla\calL(\theta_t;D)+\boldb_t}_M+\frac{1}{2\beta}\snorm{\nabla\calL(\theta_t;D)+\boldb_t}{M}^2\nonumber\\
    &=\calL(\theta_t;D)-\frac{1}{2\beta}\snorm{\nabla\calL(\theta_t;D)}{M}^2+\frac{\snorm{\boldb_t}{M}^2}{2\beta}\nonumber\\
    \Leftrightarrow\quad \snorm{\nabla\calL(\theta_t;D)}{M}^2&\leq 2\beta\left(\calL(\theta_t;D)-\calL(\theta_{t+1};D)\right)+\snorm{\boldb_t}{M}^2.\label{eq:asc4}
\end{align}
Therefore, averaging over all the $t\in\{0,\dots,T-1\}$, we have the following:
\begin{align}
\frac{1}{T}\sum\limits_{t=0}^{T-1} \snorm{\nabla\calL(\theta_t;D)}{M}^2 &\leq \frac{2\beta}{T}\left(\calL(\boldsymbol{0};D)-\calL(\theta_{T};D)\right)+\frac{1}{T}\sum\limits_{t=0}^{T}\snorm{\boldb_t}{M}^2\nonumber\\&\leq \frac{2\beta}{T}\left(\calL(\boldsymbol{0};D)-\calL(\theta^*
;D)\right)+\frac{1}{T}\sum\limits_{t=1}^{T}\snorm{\boldb_t}{M}^2.\label{eq:asc5}
\end{align}
Using standard Gaussian concentration, w.p. at least $1-\gamma$ over the randomness of $\{\boldb_1,\ldots,\boldb_T\}$ in \eqref{eq:asc5}, we have the following.
\begin{equation}
\frac{1}{T}\sum\limits_{t=0}^{T-1} \snorm{\nabla\calL(\theta_t;D)}{M}^2\leq \frac{2\beta}{T}\left(\calL(\boldsymbol{0};D)-\calL(\theta^*;D)\right)+\frac{8L^2\rank(M)\cdot\log(1/\delta)\log(T/\gamma)}{n^2\epsilon^2} \label{eq:asc6}
\end{equation}
By an averaging argument, we know there exists $\hat{t}\in\{0,\dots,T-1\}$ s.t.  
$$\snorm{\nabla\calL(\theta_{\hat{t}};D)}{M}^2\leq \frac{2\beta}{T}\left(\calL(\boldsymbol{0};D)-\calL(\theta^*;D)\right)+\frac{8L^2\rank(M)\cdot\log(1/\delta)\log(T/\gamma)}{n^2\epsilon^2}.$$
As long as $T\geq \frac{\beta n^2\epsilon^2\cdot\calL(\mathbf{0};D)}{2L^2\log(1/\delta)}$, we have $\snorm{\nabla\calL(\theta_{\hat{t}};D)}{M}\leq \frac{4L\sqrt{\rank(M)\cdot\log(1/\delta)\log(T/\gamma)}}{\epsilon n}$. Now, notice that the $\ell_2$-sensitivity \citep{dwork2014algorithmic} of $\snorm{\nabla\calL(\theta_{\hat{t}};D)}{M}$ is at most $\frac{2L}{n}$. Therefore, releasing $t_{\sf priv}\leftarrow \argmin\limits_{t\in\{0,\dots,T-1\}}\snorm{\nabla\calL(\theta_t;D)}{M}+{\sf Lap}\left(\frac{4L}{n}\right)$ conditioned on $\theta_0,\ldots,\theta_T$ satisfies $\epsilon$-differential privacy (by the analysis of the report-noisy-max algorithm \citep{dwork2014algorithmic}). Therefore, the whole algorithm is $(2\epsilon,\delta)$-differentially private.

As for utility, we have w.p. at least $1-\gamma$,
$$\snorm{\nabla\calL(\theta_{t_{\sf priv}};D)}{2}=\snorm{\nabla\calL(\theta_{t_{\sf priv}};D)}{M}=O\left(\frac{L\sqrt{\rank(M)\cdot\log(1/\delta)\log(T/\gamma)}}{\epsilon n}\right).$$
Here, we have used the standard concentration property of Laplace random variable. This completes the proof.
\end{proof}

\section{Proofs and More Details for Section \ref{sec:clipping}}
\label{app:biasControl}

\subsection{Generic Tool for Understanding Clipping}
\label{app:undclip}
We first define some notations.
\begin{itemize}
\item For any vector $v$ and positive scalar $I$, let $\trunc{I}{v}$ denote $\min\left\{\frac{I}{\|v\|_2}, 1\right\} \cdot x$, i.e., $x$ projected onto the $\ell_2$-ball of radius $I$. 
If $v$ is a scalar, then $\trunc{I}{v} = \max\{\min\{v,I\},-I\}$.
Also, for scalar, we use $\trunc{I^+}{v}$ to denote $\min\{v,I\}$, and $\trunc{I^-}{v}$ to denote $\max\{v,-I\}$.
\item For a set $S$ of scalar or vector, let $\trunc{I}{S}$ denote $\bbracket{\trunc{I}{v}:v\in S}$. 
For a set $S$ of scalar, let $\trunc{I^+}{S} = \bbracket{\trunc{I^+}{v}:v\in S}$ and $\trunc{I^-}{S} = \bbracket{\trunc{I^-}{v}:v\in S}$.
\item For a set $S$ of scalar, we write $S > I$ if $\forall u \in S$, $u > I$; similar for $<$, $\geq$ and $\leq$.
\end{itemize}

\begin{proof}[Proof of Lemma~\ref{lem:truncation}]
We consider any fixed $\boldx$, and for simplicity we use $g$ to denote $g_{\boldx}$.
Let $\cn' = \cn / \|\boldx\|_2$.
We first show $g$ is convex and $\partial g(y) = \trunc{\cn'}{\partial f(y)}$ for $y \in \bfR$ using the following claims. And then apply that to $\ell_f$ and $\ell_g$ to prove the theorem.

\begin{claim}\label{claim:basic}
The following holds.
\begin{enumerate}
\item 
If $y_1 \in \bfR$, then $-\cn' \in \partial f(y_1)$, and thus $\partial f(y) \geq -\cn'$ for all $y > y_1$, $\partial f(y) \leq -\cn'$ for all $y < y_1$.
If $y_2 \in \bfR$, then $\cn' \in \partial f(y_2)$, and thus $\partial f(y) \leq \cn'$ for all $y < y_2$, $\partial f(y) \geq \cn'$ for all $y > y_2$.
\item
If $y_1 = -\infty$, then $\partial f(y) \geq -\cn'$ for all $y\in \bfR$.
If $y_2 = \infty$, then $\partial f(y) \leq \cn'$ for all $y\in \bfR$.
\item
If $y_1 = \infty$, then $\partial f(y) < -\cn'$ for all $y\in \bfR$, and thus $y_2 = \infty$.
If $y_2 = -\infty$, then $\partial f(y) > \cn'$ for all $y\in \bfR$, and thus $y_1 = -\infty$.
\end{enumerate}
\end{claim}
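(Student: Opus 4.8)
The plan is to deduce Claim~\ref{claim:basic} entirely from the standard structure theory of subdifferentials of one-dimensional convex functions, applied to $f$. Recall that for convex $f:\bfR\to\bfR$ each subdifferential $\partial f(y)=[f'_-(y),f'_+(y)]$ is a nonempty compact interval bounded by the left and right derivatives, that $f'_-\le f'_+$ are both nondecreasing, that they are monotone across points ($y<y'\Rightarrow f'_+(y)\le f'_-(y')$), and that $\partial f$ has a closed graph (if $y_n\to y$, $u_n\in\partial f(y_n)$, and $u_n\to u$, then $u\in\partial f(y)$, by passing to the limit in the subgradient inequality). I will write $\cn'=\cn/\snorm{\boldx}{2}>0$, so that $Y_1=\{y:f'_+(y)<-\cn'\}$ and $Y_2=\{y:f'_-(y)>\cn'\}$. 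The first step is to observe that $Y_1$ is downward closed and $Y_2$ upward closed: if $y\in Y_1$ and $y'<y$ then $f'_+(y')\le f'_-(y)\le f'_+(y)<-\cn'$ by monotonicity, so $y'\in Y_1$, and symmetrically for $Y_2$. Hence $y_1=\sup Y_1$ falls into exactly one of three mutually exclusive and exhaustive cases --- $Y_1=\emptyset$ with $y_1=-\infty$; $Y_1$ a proper nonempty interval with $y_1\in\bfR$; or $Y_1=\bfR$ with $y_1=\infty$ --- matching items~2,~1,~3 respectively (and symmetrically for $y_2=\inf Y_2$). Moreover, every statement about $y_2$ follows from the corresponding statement about $y_1$ by applying it to the convex function $y\mapsto f(-y)$, which swaps the roles of $Y_1$ and $-Y_2$ (and of the left and right derivatives); so it suffices to prove the $y_1$ halves.

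For item~1, suppose $y_1\in\bfR$. For any $y<y_1$, downward-closedness gives $y\in Y_1$ (since $y<\sup Y_1$ there is $y''\in Y_1$ with $y''>y$), so $f'_+(y)<-\cn'$; as $y\uparrow y_1$ the increasing, bounded values $f'_+(y)$ tend to a limit which lies in $\partial f(y_1)$ by the closed-graph property, hence $f'_-(y_1)\le-\cn'$. For any $z>y_1$ we have $z\notin Y_1$, so $f'_+(z)\ge-\cn'$; as $z\downarrow y_1$ the same argument gives $f'_+(y_1)\ge-\cn'$. Therefore $-\cn'\in[f'_-(y_1),f'_+(y_1)]=\partial f(y_1)$. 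The two ``thus'' consequences then follow from monotonicity: for $y>y_1$, every element of $\partial f(y)$ is $\ge f'_-(y)\ge f'_+(y_1)\ge-\cn'$; for $y<y_1$, $y\in Y_1$ already gives $\partial f(y)<-\cn'$.

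For item~2, $y_1=-\infty$ means $Y_1=\emptyset$, i.e.\ $f'_+(y)\ge-\cn'$ for every $y$; since $f'_-(y)\ge f'_+(z)$ for any $z<y$, we also get $f'_-(y)\ge-\cn'$, so $\partial f(y)\ge-\cn'$ for all $y$. For item~3, $y_1=\infty$ means $Y_1=\bfR$, which is exactly the assertion $\partial f(y)<-\cn'$ for all $y$; since $\cn'>0>-\cn'$, this forces $f'_-(y)\le f'_+(y)<-\cn'<\cn'$ for every $y$, so no point can lie in $Y_2$, i.e.\ $Y_2=\emptyset$ and $y_2=\infty$.

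I do not anticipate a real obstacle: the claim is bookkeeping in one-dimensional convex analysis once the interval structure, monotonicity, and closed-graph facts are in place. The points that need care are (a) keeping strict versus non-strict inequalities straight when translating between ``$y\in Y_1$'' and ``$\partial f(y)\le-\cn'$''; (b) using the correct one-sided limit as $y\to y_1$ (the left limit of $f'_+$ lands at $f'_-(y_1)$, the right limit at $f'_+(y_1)$), for which invoking the closed graph of $\partial f$ is the cleanest route; and (c) checking that the finite$/+\infty/-\infty$ trichotomy for $y_1$ and $y_2$ is exhaustive, which is exactly where downward/upward closedness of $Y_1$ and $Y_2$ enters.
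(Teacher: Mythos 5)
Your proof is correct and follows essentially the same route as the paper's: monotonicity of the one-sided derivatives, limits from each side of the finite boundary point combined with closedness of the subdifferential, and the interval structure of $\partial f$ to place $-\cn'$ in $\partial f(y_1)$, with the infinite cases handled by monotonicity alone. The only differences are organizational: you make the trichotomy and downward/upward closedness of $Y_1,Y_2$ explicit and obtain the $y_2$ statements via the symmetry $y\mapsto f(-y)$, whereas the paper repeats the limiting argument for each endpoint.
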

\begin{proof}
We consider the three cases separately.
\begin{enumerate}
\item 
By definition of $y_2$ and monotonicity of subdifferential, for $y > y_2$, we have $\partial f(y) > \cn'$, and for $y < y_2$, $\min \partial f(y) \leq \cn'$ (as subdifferential is closed).

We can find a sequence $y^{(k)} \to y_0^+$, and a sequence $g^{(k)}$ with $g^{(k)} \in \partial f(y^{(k)})$. As subdifferential is monotone, we know $g^{(k)}$ is decreasing. Since $g^{(k)}$ is lower bounded by $I$, the sequence $g^{(k)}$ converges to a value $\geq \cn'$. 
Similarly, we can find a sequence $y^{(k')} \to y_2^-$, and a sequence $g^{(k')}$ with $g^{(k')} = \min \partial f(y^{(k')})$. Since $g^{(k')}$ is increasing and upper bounded by $\cn'$, the sequence $g^{(k')}$ converges to a value $\leq \cn'$. 

Recall that subdifferential is continuous. So both $\lim_{k\to\infty}g^{(k)} \geq \cn'$ and $\lim_{k'\to\infty} g^{(k')} \leq \cn'$ are contained in $\partial f(y_0)$, and we have $\cn' \in \partial f(y_2)$ by convexity of subdifferential. 
Then by monotonicity, for any $y > y_2$, we have $\partial f(y) \geq \cn'$; for any $y < y_2$, we have $\partial f(y) \leq \cn'$.
Similar argument can be applied to $y_1$.
\item
If $Y_1$ is non-empty and $y_1 = \infty$, by monotonicity of subdifferential, we have $\partial f(y) < -\cn'$ for all $y\in \bfR$. Therefore, $Y_2 = \emptyset$ and we have $y_2 = \infty$. Similar holds for $y_2$.
\item
If $Y_1$ is empty and $y_1 = -\infty$, then for any $y$, $\max \partial f(y) \geq -\cn'$ (as subdifferential is closed). By monotonicity of subdifferential, $\partial f(y) \geq -\cn'$ for all $y\in \bfR$. 
\end{enumerate}
\end{proof}

By monotonicity of subdifferential and the definition of $y_1$ and $y_2$, we know that $y_1 \leq y_2$ always holds and thus $g$ is well-defined. 
Let $\calC = [y_1, y_2]\cap \bfR$.

\begin{claim}
We have that $g$ is a convex function when $y_1\neq\infty$ and $y_2\neq-\infty$.
\end{claim}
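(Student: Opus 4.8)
The plan is to deduce convexity of $g_{\boldx}$ (written $g$, as in the proof) from the elementary ``gluing'' principle: a continuous function on $\bfR$ that is convex on each interval of a finite partition of $\bfR$ into closed intervals, and whose left derivative does not exceed its right derivative at every interior breakpoint, is convex on all of $\bfR$ (this is standard: chaining the inequalities $f'_+(x)\le f'_-(y)$ valid on each piece with the breakpoint inequalities shows the right derivative is globally non-decreasing). Here $g$ is affine on $(-\infty,y_1]$ with slope $-\cn'$, equal to $f$ on $[y_1,y_2]\cap\bfR$, and affine on $[y_2,\infty)$ with slope $\cn'$, where $\cn'=\cn/\snorm{\boldx}{2}>0$ since $\cn>0$ and $\boldx\neq\mathbf{0}$; in particular $-\cn'<\cn'$. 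Because $f:\bfR\to\bfR$ is convex and finite-valued it is continuous, with left and right derivatives $f'_-(y)=\min\partial f(y)$ and $f'_+(y)=\max\partial f(y)$ everywhere, and we have already recorded that $y_1\le y_2$; together with the hypotheses $y_1\neq\infty$ and $y_2\neq-\infty$, this makes $g$ well-defined with at most the two breakpoints $y_1$ and $y_2$.

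First I would verify continuity. On $(-\infty,y_1)$ and on $(y_2,\infty)$, $g$ is affine; on $[y_1,y_2]\cap\bfR$ it equals the continuous function $f$; and at a finite breakpoint $y_1$ the left piece takes the value $-\cn'(y_1-y_1)+f(y_1)=f(y_1)$, matching the middle piece, and symmetrically at a finite $y_2$. Then I would check the two jump inequalities. Each affine piece is convex and the middle piece $f$ is convex by hypothesis, so it only remains to compare one-sided derivatives at the breakpoints. At $y_1\in\bfR$ with $y_1<y_2$, the left derivative of $g$ is the slope $-\cn'$ and the right derivative of $g$ equals $f'_+(y_1)=\max\partial f(y_1)$; by Claim~\ref{claim:basic}(1), $-\cn'\in\partial f(y_1)$, hence $f'_+(y_1)\ge-\cn'$, giving left derivative $\le$ right derivative. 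At $y_2\in\bfR$ with $y_1<y_2$, the left derivative of $g$ equals $f'_-(y_2)=\min\partial f(y_2)$ and the right derivative of $g$ is the slope $\cn'$; again Claim~\ref{claim:basic}(1) gives $\cn'\in\partial f(y_2)$, hence $f'_-(y_2)\le\cn'$, giving the required inequality. Invoking the gluing principle then yields that $g$ is convex.

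The only delicate point is the bookkeeping of the degenerate configurations, which I would handle case by case: if $y_1=-\infty$ then $g=f$ on $(-\infty,y_2]$ and there is no left breakpoint, so only the estimate at $y_2$ is needed (and if moreover $y_2=\infty$, then $g\equiv f$ is trivially convex); symmetrically if $y_2=\infty$; and if $y_1=y_2=:y_0\in\bfR$, the middle ``interval'' is the single point $y_0$, so $g$ has the one breakpoint $y_0$ with left derivative $-\cn'$ and right derivative $\cn'$, and $-\cn'<\cn'$ suffices. In every case the same one-sided-derivative comparison applies verbatim, so this is a matter of care rather than genuine difficulty; the substantive work was already done in Claim~\ref{claim:basic}.
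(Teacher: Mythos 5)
Your proof is correct, and it takes a genuinely different route from the paper's. The paper does not glue pieces directly: it first uses Claim~\ref{claim:basic} to conclude that $\partial f \subseteq [-\cn',\cn']$ on $[y_1,y_2]\cap\bfR$, so that $f$ is $\cn'$-Lipschitz and convex there, then invokes the Lipschitz-extension lemma of \citet{BST14} (their Lemma 6.3), which says that $\hat g(y)=\min_{y'\in[y_1,y_2]\cap\bfR}\{f(y')+\cn'|y-y'|\}$ is convex and $\cn'$-Lipschitz, and finally checks pointwise that $\hat g=g$ (the inf-convolution is attained at $y'=y$ inside the interval and at the relevant endpoint outside it). You instead argue from first principles: continuity of the three-piece definition, convexity of each piece, and the one-sided-derivative inequalities at the finite breakpoints, which you extract from the membership statements $-\cn'\in\partial f(y_1)$ and $\cn'\in\partial f(y_2)$ of Claim~\ref{claim:basic} (the paper uses the same claim, but for the Lipschitz bound rather than for these memberships), then close with the standard monotone-right-derivative criterion. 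Your route is more self-contained (no external extension lemma, no verification that the extension coincides with $g$) and makes the degenerate configurations $y_1=-\infty$, $y_2=\infty$, and $y_1=y_2$ explicit, at the cost of carrying the elementary gluing principle and the breakpoint bookkeeping yourself; the paper's route is shorter by citation and yields the $\cn'$-Lipschitzness of $g$ outside $[y_1,y_2]$ as a byproduct, which is implicit in your version since the outer pieces are affine with slopes $\mp\cn'$. Both arguments establish exactly the claimed statement.
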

\begin{proof}
By Claim~\ref{claim:basic}, for any $y \in (y_1, y_2)$, $-\cn' \leq \partial f(y) \leq \cn'$. Therefore, $f$ is $\cn'$-Lipschitz on $[y_1, y_2]\cap \bfR$. It is obviously also convex on this set.
Consider $f$ restricted to $\calC$. According to~\cite[Lemma 6.3]{BST14}, the Lipschitz extension of this function, $\hat{g}:\bfR\to\bfR$ with $\hat{g}(y) = \min_{y'\in\calC} \bbracket{f(y') + \cn'\abs{y-y'}}$, is also convex and $\cn'$-Lipschitz.

Then we show $g = \hat{g}$.
For any $y \in \calC$, we have $f(y) \leq f(y') + \cn'\abs{y-y'}$ by Lipschizness; so $\hat{g}(y)=f(y)=g(y)$ on $\calC$.
If $y_1\neq -\infty$, for any $y < y_1$ and any $y' \in \calC$, we have $f(y_1) - f(y') \leq \cn'(y'-y_1)$ by Lipschitzness and $y' \geq y_1$. This translates to $f(y_1) - \cn'(y-y_1) \leq f(y') - \cn'(y-y')$, and thus $\hat{g}(y) = f(y_1) - \cn'(y-y_1) = g(y)$ for $y < y_1$.
Similar holds for $y > y_2$ when $y_2 \neq \infty$.
\end{proof}

\begin{claim}
We have $\partial g(y) = \trunc{\cn'}{\partial f(y)}$ for $y \in \bfR$.
\end{claim}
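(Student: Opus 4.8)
The plan is to reduce the claimed set identity to two scalar identities about one-sided derivatives, and then verify those by a brief case analysis keyed to the position of $y$ relative to $y_1$ and $y_2$. First I would dispose of the degenerate configurations: by the third part of Claim~\ref{claim:basic}, if $y_1 = \infty$ then also $y_2 = \infty$ and $\partial f(y) < -\cn'$ for every $y$, in which case $g$ is to be read as an affine map of slope $-\cn'$, so $\partial g(y) = \{-\cn'\} = \trunc{\cn'}{\partial f(y)}$; the case $y_2 = -\infty$ is symmetric with slope $+\cn'$. From here on I assume $y_1 \neq \infty$ and $y_2 \neq -\infty$, so that $y_1 \le y_2$, the set $\calC = [y_1,y_2]\cap\bfR$ is nonempty, and — by the previous claim, whose proof in fact identifies $g$ with the $\cn'$-Lipschitz extension of $f$ restricted to $\calC$ — the function $g$ is convex, $\cn'$-Lipschitz, and agrees with $f$ on $\calC$.

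Next I would invoke the standard structure theory of finite convex functions on $\bfR$ \citep{rockafellar1970convex}: such a function $h$ has nondecreasing one-sided derivatives $h'_-\le h'_+$ with $\partial h(y) = [h'_-(y), h'_+(y)]$; in particular $\partial f(y) = [f'_-(y), f'_+(y)]$, and since $g$ is $\cn'$-Lipschitz, $\partial g(y) \subseteq [-\cn',\cn']$. Because the scalar clipping map $t \mapsto \trunc{\cn'}{t}$ is continuous and nondecreasing, it carries intervals to intervals, so $\trunc{\cn'}{\partial f(y)} = [\,\trunc{\cn'}{f'_-(y)}, \trunc{\cn'}{f'_+(y)}\,]$. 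Hence it suffices to prove
\[
g'_-(y) = \trunc{\cn'}{f'_-(y)} \qquad\text{and}\qquad g'_+(y) = \trunc{\cn'}{f'_+(y)} \qquad\text{for all } y .
\]

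For the case analysis I would first read off the one-sided derivatives of $g$ from its piecewise definition, using that $g'_\pm(y)$ depends only on the behaviour of $g$ on one side of $y$: $g'_+(y) = -\cn'$ for $y < y_1$, $\;g'_+(y) = f'_+(y)$ for $y_1 \le y < y_2$, $\;g'_+(y) = \cn'$ for $y \ge y_2$ (the first clause vacuous if $y_1 = -\infty$, the third vacuous if $y_2 = \infty$, the middle vacuous if $y_1 = y_2$), and symmetrically $g'_-(y) = -\cn'$ for $y \le y_1$, $\;g'_-(y) = f'_-(y)$ for $y_1 < y \le y_2$, $\;g'_-(y) = \cn'$ for $y > y_2$. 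Matching each clause against $\trunc{\cn'}{f'_\pm(y)}$ is then exactly what Claim~\ref{claim:basic} delivers: for $y < y_1$ it gives $f'_+(y) \le -\cn'$ so the truncation is $-\cn'$; for $y$ strictly between $y_1$ and $y_2$ it gives $-\cn' \le f'_-(y) \le f'_+(y) \le \cn'$ so truncation acts as the identity; for $y > y_2$ it gives $f'_-(y) \ge \cn'$ so the truncation is $\cn'$; and at an endpoint, say $y = y_1$ finite with $y_1 < y_2$, one uses $-\cn' \in \partial f(y_1)$ together with $\partial f(z) \le \cn'$ for $z < y_2$ to get $f'_+(y_1) \in [-\cn',\cn']$, hence $\trunc{\cn'}{f'_+(y_1)} = f'_+(y_1) = g'_+(y_1)$, while $f'_-(y_1) = \min\partial f(y_1) \le -\cn'$, hence $\trunc{\cn'}{f'_-(y_1)} = -\cn' = g'_-(y_1)$. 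The endpoint $y = y_2$, and the knife-edge $y_1 = y_2$ (where $\partial g(y_1) = [-\cn',\cn']$ and $\{-\cn',\cn'\}\subseteq\partial f(y_1)$), are handled the same way.

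The mathematical content here is light — only monotonicity of $\partial f$ and the definition of $\trunc{\cn'}{\cdot}$ — so the real obstacle I expect is bookkeeping: making sure every combination of ``$y_1$ finite vs.\ $y_1 = -\infty$'' with ``$y_2$ finite vs.\ $y_2 = \infty$'', plus the degenerate $y_1 = \infty$ / $y_2 = -\infty$ cases and the knife-edge $y_1 = y_2$, is covered, and in particular checking that at a kink $y = y_1 < y_2$ the right derivative of $g$ is genuinely $f'_+(y_1)$ and lies in $[-\cn',\cn']$ — which is the one place where both of the first two parts of Claim~\ref{claim:basic} about $y_1$ get invoked.
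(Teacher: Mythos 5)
Your proposal is correct and follows essentially the same route as the paper's proof: dispose of the degenerate cases $y_1=\infty$ / $y_2=-\infty$ via the third part of Claim~\ref{claim:basic}, then do a piecewise case analysis on $(-\infty,y_1)$, $(y_1,y_2)$, $(y_2,\infty)$ and the finite endpoints (including $y_1=y_2$), using the characterization $\partial h(y)=[h'_-(y),h'_+(y)]$ for scalar convex functions and the bounds on $\partial f$ from Claim~\ref{claim:basic}. Your explicit reduction to the two one-sided-derivative identities via monotonicity of the clipping map is just a cleaner packaging of the same argument, so no substantive difference.
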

\begin{proof}
If $y_1 = \infty$, then $g$ is a linear function with coefficient $-\cn'$, and is obviously convex. By Claim~\ref{claim:basic}, we have $\partial f(y) < -\cn'$ on $\bfR$, and thus $\trunc{\cn'}{\partial f(y)} = \{-\cn'\} = \partial g(y)$ for all $y \in \bfR$. Similar holds for $y_2 = -\infty$.

Now we consider the case where $y_1\neq\infty$ and $y_2\neq-\infty$.
\begin{itemize}
\item 
On $(-\infty, y_1)$, $g$ is linear and thus differentiable, so $\partial g(y) = \{-\cn'\}$. Also, we know from Claim~\ref{claim:basic} that $\partial f(y) \leq -\cn'$ for $y\in (-\infty, y_1)$; so $\trunc{\cn'}{\partial f(y)} = \{-\cn'\} = \partial g(y)$. Similar holds for $(y_2, \infty)$.

\item
On $(y_1, y_2)$, we have $g = f$ and both are convex.
Any convex function $f$ on an open subset of $\bfR$ is semi-differentiable and the subdifferential at point $y$ is of the form $[\partial f_-(y), \partial f_+(y)]$ where $\partial f_-(y)$ is the left derivative and $\partial f_+(y)$ is the right derivative.
Therefore, as the left and right derivative of $f$ and $g$ are the same in $(y_1, y_2)$, we have $\partial g(y) = \partial f(y)$. By Claim~\ref{claim:basic}, $-\cn' \leq \partial f(y) \leq \cn'$ on this range, we have $\partial g(y) = \trunc{\cn'}{\partial f(y)}$. 

\item
At $y_1$ (if finite), the left derivative is $\partial g_-(y) = -\cn'$, and the right derivative $\partial g_+(y)$ is $\partial f_+(y)$ if $y_2 > y_1$ and is $\cn'$ if $y_2 = y_1$. 
For $y_2 > y_1$, as $-\cn' \in \partial f(y_1)$, we have $\partial g(y) = \trunc{\cn'}{\partial f(y)}$.
For $y_2 = y_1$, we have $[-\cn', \cn'] \subseteq \partial f(y_1)$ and thus $\partial g(y) = \trunc{\cn'}{\partial f(y)}$.
Similar holds for $y_2$.
\end{itemize}
\end{proof}

Then we consider $\ell_g$.
For a set $U$ of scalar, we use $U\cdot \boldx$ to denote $\bbracket{u\boldx:u\in U}$.
We have $\trunc{\cn}{U\cdot \boldx} = \bbracket{\trunc{\cn}{u \boldx} : u\in U}$ $= \bbracket{\min\bbracket{\frac{\cn}{\snorm{\boldx}{2} u}, 1} u \boldx: u\in U} = \trunc{\cn/\snorm{\boldx}{2}}{U} \cdot \boldx$.
Recall $\cn' = \cn / \snorm{\boldx}{2}$.
Therefore, $\partial_{\theta} \ell_g(\theta;\boldx) = \partial g(\ip{\theta}{\boldx}) \cdot \boldx = \trunc{\cn/\snorm{\boldx}{2}}{\partial f(\ip{\theta}{\boldx})} \cdot \boldx = \trunc{\cn}{\partial f(\ip{\theta}{\boldx}) \cdot \boldx} = \trunc{\cn}{\partial_{\theta} \ell_f(\ip{\theta}{\boldx})}$, which completes the proof.
\end{proof}

\subsection{Lower Bound on Bias for Binary Logistic Regression}
\label{sec:lower bound logistic regression}

Using Lemma \ref{lem:huberized}, in Theorem \ref{thm:LR_lower}, we show that running \dpgdclip\ with aggressive clipping can result in a constant excess empirical risk for logistic regression, in contrast to the best achievable excess empirical risk of $\bigO{1/n}$.  

For a dataset $D = \{(x_1, y_1),\dots, (x_n, y_n)\}$ where $x_i\in \bfR^p$ is the feature and $y_i \in \{+1, -1\}$ is the label, and for a convex set $\calC$, logistic regression is defined as solving for
$\theta^* \vcentcolon= \argmin_{\theta \in \calC} \calL(\theta; D)$
where
$\calL(\theta; D) = \frac{1}{n}\sum_{i=1}^n \ell(\theta; (x_i,y_i))$ with $\ell(\theta; (x,y)) = \log \bracket{1 + e^{-y\ip{\theta}{x}}}$.

\begin{thm}\label{thm:LR_lower}
Consider the objective function $\calL(\theta, D)$ for logistic regression as defined above.
Let $\privT$ be the output of \dpgdclip\ on $\calL(\theta, D)$ with clipping norm $\cn$.
For any $\cn < 1/4$, there exists a positive integer $n_0(\cn)$ such that for any $n \geq n_0(\cn)$, there exists a dataset $D = \{(x_i,y_i)\}_{i=1}^n$ with $x_i\in \{x\in\bfR^p:\|x\|_2\leq 1\}$ and $y_i \in \{+1, -1\}$, such that
\begin{align*}
\expt{}{\calL(\privT; D)} - \min_{\theta\in \bfR^p}\calL(\theta; D) = \bigOmega{\log(1/\cn)}.
\end{align*}
\end{thm}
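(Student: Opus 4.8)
I would prove Theorem~\ref{thm:LR_lower} by exhibiting a heavily imbalanced one-dimensional instance on which aggressive clipping distorts the objective so badly that its minimizer sits where the true logistic loss is $\Omega(\log(1/\cn))$, and then use Lemma~\ref{lem:huberized} and Corollary~\ref{cor:dimIndependencec} to argue that $\adpgdc$ is driven there. Fix a small absolute constant $\rho\in(0,\tfrac12)$ (e.g.\ $\rho=\tfrac1{2e}$) and let $D$ consist of $(1-\rho)n$ copies of $(\mathbf{e}_1,+1)$ and $\rho n$ copies of $(\mathbf{e}_1,-1)$ (rounding $\rho n$, harmless for $n\ge n_0$), with $\mathbf{e}_1\in\bfR^p$ the first basis vector; then $\|\bfx_i\|_2=1$, $y_i\in\{\pm1\}$, $\rank(\sum_i\bfx_i\bfx_i^\top)=1$, and both $\calL$ and the clipped objective depend on $\theta$ only through $\theta_1:=\langle\theta,\mathbf{e}_1\rangle$. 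One checks that the minimizer of $\calL$ has $\theta_1=\log\frac{1-\rho}{\rho}$ and $\min_\theta\calL(\theta;D)=(1-\rho)\log\frac1{1-\rho}+\rho\log\frac1\rho=:c_\rho$, an absolute constant.

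By Lemma~\ref{lem:huberized}, running $\adpgdc$ with clipping norm $\cn$ is running $\adpgd$ on $\calLc(\theta;D)=\frac1n\sum_i\ellc(\langle\theta,\bfx_i\rangle;y_i)$, where each $\ellc$ is the Huberization of the logistic loss at threshold $\cn$, has gradient $\ell_2$-norm $\le\cn$, and the average of the per-example clipped gradients equals $\nabla\calLc$ exactly. With $c:=\log\frac{1-\cn}{\cn}=\Theta(\log(1/\cn))$, the Huberized positive-label loss is linear of slope $-\cn$ on $(-\infty,c)$ and equals $\log(1+e^{-z})$ on $[c,\infty)$, and the negative-label version is its reflection. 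Hence $\calLc(\cdot;D)$ is convex in $\theta_1$, grows at $\pm\infty$, and has a unique minimizer $\theta_1=\theta_\cn^*=\log\frac{1-\rho-\rho\cn}{\rho\cn}$ (the balance point of the clipped gradients), which for $\rho$ a suitable small constant satisfies $\theta_\cn^*\ge c+1$ and $\theta_\cn^*=O(\log(1/\cn))$; in particular $\snorm{\thetaH}{M}=|\theta_\cn^*|=O(\log(1/\cn))$ for the projector $M$ onto $\mathrm{span}(\mathbf{e}_1)$.

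The heart of the argument is a \emph{rigidity} estimate. Since $\calLc(\cdot;D)$ is convex in $\theta_1$ with minimizer $\theta_\cn^*\ge c$, it is decreasing on $(-\infty,c]$, so $\min_{\theta_1\le c}\calLc(\theta;D)=\calLc(c\,\mathbf{e}_1;D)$; on $[c,c+1]\subseteq[c,\theta_\cn^*]$ we have $-\calLc'(\theta_1)=\tfrac{1-\rho}{1+e^{\theta_1}}-\rho\cn\ge\big(\tfrac{1-\rho}{e}-\rho\big)\cn=:c_1\cn>0$ for $\rho$ small, whence $\calLc(c\,\mathbf{e}_1;D)-\calLc(\theta_\cn^*\mathbf{e}_1;D)\ge c_1\cn$. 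Conversely, for every $\theta$ with $\theta_1>c$ the \emph{original} loss satisfies $\calL(\theta;D)\ge\rho\log(1+e^{\theta_1})\ge\rho\log(1+e^{c})=\rho\log(1/\cn)$. Now Corollary~\ref{cor:dimIndependencec} (with $\rank(M)=1$) gives $\mathbb{E}[\calLc(\privT;D)]-\calLc(\theta_\cn^*\mathbf{e}_1;D)\le\cn|\theta_\cn^*|\sqrt{1+2\log(1/\delta)}/(\epsilon n)=O\!\big(\cn\log(1/\cn)\sqrt{\log(1/\delta)}/(\epsilon n)\big)$, which is $\le c_1\cn/4$ once $n\ge n_0(\cn)$; Markov then forces $\calLc(\privT;D)-\calLc(\theta_\cn^*\mathbf{e}_1;D)\le c_1\cn$ with probability $\ge\tfrac34$, hence $(\privT)_1>c$ on that event, hence $\calL(\privT;D)\ge\rho\log(1/\cn)$. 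As $\calL\ge0$ always, $\mathbb{E}[\calL(\privT;D)]\ge\tfrac34\rho\log(1/\cn)$, so $\mathbb{E}[\calL(\privT;D)]-\min_\theta\calL(\theta;D)\ge\tfrac34\rho\log(1/\cn)-c_\rho=\Omega(\log(1/\cn))$ once $\cn$ is below an absolute constant (for $\cn$ bounded away from $0$, $\log(1/\cn)$ is itself bounded and the claim reduces to an $\Omega(1)$ bias, obtained from the same family with $\rho$ adjusted).

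I expect the rigidity estimate, coupled to the convergence bound, to be the main obstacle: one must show that retreating from the distorted minimizer $\theta_\cn^*$ toward $\{\theta_1\le c\}$ — where the true loss is only $O(1)$ — costs $\Omega(\cn)$ in $\calLc$-value, while the error from Corollary~\ref{cor:dimIndependencec}, which carries the factor $\cn\cdot\snorm{\thetaH}{M}=\tilde\Theta(\cn\log(1/\cn))$, is driven to $o(\cn)$; this is precisely what the hypothesis $n\ge n_0(\cn)$ is for. The residual work is bookkeeping of absolute constants, in particular to cover the whole range $\cn<1/4$ rather than only the regime $\cn\to0$.
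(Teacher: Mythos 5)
Your overall strategy is the same as the paper's: build a one-dimensional logistic instance, use Lemma~\ref{lem:huberized} to identify the Huberized objective $\calLc$ that $\adpgdc$ actually optimizes, show its minimizer is displaced by $\Theta(\log(1/\cn))$ from the true one, invoke the clipped-DP-GD convergence guarantee to localize $\privT$ past a threshold, and lower bound the original loss there. The paper instead takes $2n$ copies of $(1/2,+1)$ and $n$ copies of $(1,-1)$ (so the displacement comes from the mismatch in feature magnitudes rather than label imbalance, putting $\thetaH=2\log(1/\cn-1)$ while $\theta^*=0$ with baseline loss $\log 2$), uses the high-probability bound of Theorem~\ref{thm:dpsgd_convergence} plus a value-gap and local strong convexity of $\calLc$ to pin $\privT$ near $\thetaH$, and then evaluates $\calL$ at $\pm\log(1/\cn-1)$. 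Your expectation-plus-Markov variant of the localization is fine, and for small $\cn$ your argument goes through.

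The genuine gap is the regime $\cn$ close to $1/4$, which the theorem's statement (``for any $\cn<1/4$'') requires and which you defer as ``bookkeeping.'' With your $\rho=1/(2e)$, the bound you obtain is $\tfrac34\rho\log(1/\cn)-c_\rho$ with $c_\rho\approx0.48$, which is positive only for $\cn\lesssim0.03$; and your fallback of ``adjusting $\rho$'' cannot close it, because your own argument forces $\rho<\tfrac{1}{1+e}$ (to have $c_1=\tfrac{1-\rho}{e}-\rho>0$ and $\theta_\cn^*\ge c+1$), and for every such $\rho$ one has $c_\rho\ge\rho\log(1/\rho)>\tfrac34\rho\log 4$, so the bound stays negative as $\cn\to1/4$ no matter how $\rho$ is tuned. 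Worse, the failure is structural, not just a constant: for $\cn$ near $1/4$ and small constant $\rho$ the \emph{true} minimizer $\log\tfrac{1-\rho}{\rho}$ itself lies in the region $\{\theta_1>c\}$ (e.g.\ $\log(2e-1)\approx1.49>\log 3\approx1.10$), so the event $(\privT)_1>c$ carries no excess-risk information at all there; you would need to localize $(\privT)_1$ much closer to $\theta_\cn^*$ (via strong convexity of $\calLc$ on an interval around $\theta_\cn^*$, as the paper does) and track constants tightly, since even for your instance the excess loss at $\theta_\cn^*$ itself is only about $\rho\bigl(\log(1/\cn)-1+\cn\bigr)$, leaving a thin margin at $\cn=1/4$. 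This is exactly why the paper's construction is numerically tuned (features $1/2$ versus $1$, counts $2n$ versus $n$): it doubles the displacement to $2\log(1/\cn-1)$ and yields the boundary loss $\tfrac23\log\bigl(1+1/\sqrt{\cn}\bigr)$, which exceeds the baseline $\log 2$ with margin at least $\tfrac1{30}\log\bigl(1+1/\sqrt{\cn}\bigr)$ for \emph{every} $\cn<1/4$. So either adopt a similarly tuned instance, or strengthen your rigidity step to a genuine localization near $\theta_\cn^*$ with explicit constants covering $\cn$ up to $1/4$.
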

\mypar{Note} The lower bound construction does not require constraining $\calC$.
With $\calC$ being the whole space $\bfR^p$, the optimization problem considered here is unconstrained. 
Also, notice that $\cn < 1/4$ is not a strong requirement, as for any $(x_i,y_i)$, the gradient of logistic loss is upper bounded by $\|x_i\|_2 \leq 1$. So, $\cn = 1$ is already equivalent to no clipping.

It is obvious that if we set the clipping norm $\cn$ to be higher than the upper bound of the gradient norm, 
which exists in both cases,
then we can still get $\tilde O(1/n)$ excess empirical risk. 
Therefore, we can conclude that picking a proper $\cn$ is critical in convex optimization problems.

\begin{proof}[Proof of Theorem~\ref{thm:LR_lower}]
Since $\ell(\theta; (x,y))$ is convex in $\ip{\theta}{yx}$, as have been shown Lemma~\ref{lem:truncation} (with $\boldx$ there being $yx$), 
for any $(x,y)$, there exists another function $\ell_g(\theta; (x, y))$ that is convex in $\ip{\theta}{yx}$ and $\nabla_{\theta} \ell_g(\theta; (x, y)) = \trunc{\cn}{\nabla_{\theta} \ell(\theta; (x, y))}$ for any $\theta$.
Let $\calLH(\theta; D) = \frac{1}{n} \sum_{i=1}^n \ell_g(\theta; (x_i,y_i))$, which is also convex. 
For some convex set $\calC$, let $\thetaH \vcentcolon= \argmin_{\theta\in \calC} \calLH(\theta; D)$ and $\theta^* \vcentcolon= \argmin_{\theta\in\calC} \calL(\theta; D)$. Let $\privT$ be the output of DP-GD on objective function $\calL$.

To show a lower bound on $\expt{}{\calL(\privT; D)} - \calL(\theta^*; D)$, we would first show a lower bound on $\|\theta^*-\thetaH\|_2$ and an upper bound on $\|\privT-\thetaH\|_2$, which together will give a lower bound on $\|\theta^*-\thetaH\|_2$. Then, using strong convexity property of $\calL$, we translate that to lower bound on $\calL(\privT;D)-\calL(\theta^*;D)$.

It is enough to prove the result for dimension $p=1$, as we can always set the other $p-1$ dimensions to be $0$.
Let $D$ be $\{(1/2, +1)\}^{2n} \cup \{(1, -1)\}^n$ and $\calC = \bfR$.

We have
\begin{align*}
& \ell(\theta; (1/2,+1)) = \log \bracket{1 + e^{-\theta/2}},  & \text{and} \quad
 & \ell(\theta; (1,-1)) = \log \bracket{1 + e^{\theta}} \\
\Rightarrow & \nabla_{\theta} \ell(\theta; (1/2,+1)) = -\frac{1/2}{1+e^{\theta/2}},  & \text{and} \quad
 & \nabla_{\theta} \ell(\theta; (1,-1)) = \frac{1}{1+e^{-\theta}} \\
\Rightarrow & \nabla^2_{\theta} \ell(\theta; (1/2,+1)) = \frac{1}{4} \frac{1}{(1+e^{\theta/2})(1+e^{-\theta/2})}, & \text{and} \quad
 & \nabla^2_{\theta} \ell(\theta; (1,-1)) = \frac{1}{(1+e^{\theta})(1+e^{-\theta})}
\end{align*}
Given $\cn$, we have
\begin{align*}
&\ell_g(\theta; (1/2,+1))
=
\begin{cases}
- \cn \theta + 2\cn\log\bracket{\frac{1}{2\cn}-1} + \log\frac{1}{1-2\cn} \quad &\text{for } \theta < 2\log\bracket{\frac{1}{2\cn}-1} \\
\log\bracket{1+e^{-\theta/2}}  \quad &\text{for } \theta \geq 2\log\bracket{\frac{1}{2\cn}-1} \\
\end{cases}\\
& \qquad \qquad \qquad \qquad \qquad \qquad \text{and} \\
&\ell_g(\theta; (1,-1))
=
\begin{cases}
\log\bracket{1+e^{\theta}}  \quad &\text{for } \theta \leq -\log\bracket{\frac{1}{\cn}-1} \\
\cn \theta + \cn\log\bracket{\frac{1}{\cn}-1} + \log\frac{1}{1-\cn} \quad &\text{for } \theta > -\log\bracket{\frac{1}{\cn}-1} \\
\end{cases}
\end{align*}
as the loss function with gradient being $\trunc{\cn}{\nabla_{\theta} \ell(\theta; (1/2,+1))}$ and $\trunc{\cn}{\nabla_{\theta} \ell(\theta; (1,-1))}$.

We have $\theta^* = 0$ as
\begin{align*}
\nabla_{\theta} \calL(\theta;D) = 0
& \Leftrightarrow 
2 \nabla_{\theta} \ell(\theta; (1/2,+1)) + \nabla_{\theta} \ell(\theta; (1,-1)) = 0 \\
& \Leftrightarrow 
-\frac{1}{1+e^{\theta/2}} + \frac{1}{1+e^{-\theta}} = 0
\Leftrightarrow
\theta = 0.
\end{align*}

As for $\thetaH$, we have
\begin{align*}
\nabla_\theta \calLH(\theta; D)
=
\begin{cases}
\frac{2}{3} \frac{-1/2}{1+e^{\theta/2}} + \frac{\cn}{3} &\text{\quad for } \theta \geq 2\log\bracket{\frac{1}{2\cn}-1} \\
-\frac{2\cn}{3} + \frac{\cn}{3} &\text{\quad for } \theta \in  \bracket{-\log\bracket{\frac{1}{\cn}-1}, 2\log\bracket{\frac{1}{2\cn}-1}} \\
-\frac{2\cn}{3} + \frac{1}{3} {\frac{1}{1+e^{-\theta}}} &\text{\quad for } \theta \leq -\log\bracket{\frac{1}{\cn}-1}
\end{cases},
\end{align*}
which is equal to $0$ at $2 \log \bracket{\frac{1}{\cn} - 1}$. So we have $\thetaH = 2 \log \bracket{\frac{1}{\cn} - 1}$.

So we have $\|\theta^* - \thetaH\|_2 = 2 \log \bracket{\frac{1}{\cn} - 1}$.

Now we bound $\|\thetaH - \privT\|_2$.
As for each $x_i$ in $D$, it is given that $\|x_i\|_2 \leq 1$, we have $B = 1$. 
As the initial guess $\theta_0$ is $0$, we have $\|\theta_0 - \theta^*\|_2 = 2\log\bracket{\frac{1}{\cn}-1}$.
From Theorem~\ref{thm:dpsgd_convergence}, with probability $\geq 1-\delta$,  
$\calLH(\privT; D) - \calLH(\thetaH; D) \leq C\cdot{\frac{2\log\bracket{1/\cn-1} \sqrt{p\log(1/\delta)\log(1/\beta)}}{n \epsilon}}$ for some positive constant $C$. 
We set $\frac{n_0}{3} = \frac{96C\log\bracket{1/\cn-1} \sqrt{p\log(1/\delta)\log(1/\beta)}}{\cn\epsilon} > \max \bracket{20\log\bracket{\frac{1}{\cn}-1}, 96} \cdot \frac{C \sqrt{p\log(1/\delta)\log(1/\beta)}}{\cn\epsilon}$.
As $n \geq \frac{n_0}{3} > \frac{20C\log\bracket{1/\cn-1} \sqrt{p\log(1/\delta)\log(1/\beta)}}{\cn\epsilon}$, we have $\calLH(\privT; D) - \calLH(\thetaH; D) < 0.1\cn$.
We now translate this to an upper bound on $\|\privT-\thetaH\|_2$ (with high probability).

Let $\theta_1 = 2\log\bracket{\frac{1}{2\cn}-1}$ and $\theta_2 = 2\log\bracket{\frac{2}{\cn}-1}$. We now show $\privT \in (\theta_1, \theta_2)$ with probability $\geq 1-\delta$.
We know that for $\theta \geq \theta_1$, for ${\sf Const} = \frac{1}{3}\bracket{\cn\log\bracket{\frac{1}{\cn}-1} + \log \frac{1}{1-\cn}}$,
\begin{align*}
\calLH(\theta; D)
= \frac{1}{3} \bracket{2\log(1+e^{-\theta/2}) + \cn\theta} + {\sf Const},
\end{align*}
and we thus have
\begin{align*}
\calLH(\thetaH; D) 
&= \frac{1}{3} \bracket{2\log\frac{1}{1-\cn} + 2\cn\log\bracket{\frac{1}{\cn}-1}} + {\sf Const}\\
\calLH(\theta_1; D) 
&= \frac{1}{3} \bracket{2\log\frac{1}{1-2\cn} + 2\cn\log\bracket{\frac{1}{2\cn}-1}} + {\sf Const} \\
\calLH(\theta_2; D) 
&= \frac{1}{3} \bracket{2\log\frac{2}{2-\cn} + 2\cn\log\bracket{\frac{2}{\cn}-1}} + {\sf Const}.
\end{align*}
So for $\cn < 1/4$,
\begin{align*}
\calLH(\theta_1; D) - \calLH(\thetaH; D) 
&= \frac{2}{3} \Bigg(\log\frac{1}{1-2\cn} + \cn\log\bracket{\frac{1}{2\cn}-1}  - \log\frac{1}{1-\cn} - \cn\log\bracket{\frac{1}{\cn}-1}\Bigg) \\
&= \frac{2}{3} \bracket{(1-\cn)\log\frac{1-\cn}{1-2\cn} - \cn\log 2}
\geq \frac{2}{3} \bracket{1-\log 2} \cn 
> 0.2\cn > 0.1\cn. \\
\calLH(\theta_2; D) - \calLH(\thetaH; D) 
&= \frac{2}{3} \Bigg(\log\frac{2}{2-\cn} + \cn\log\bracket{\frac{2}{\cn}-1} - \log\frac{1}{1-\cn} - \cn\log\bracket{\frac{1}{\cn}-1}\Bigg) \\
&= \frac{2}{3} \Bigg(\log\frac{2}{2-\cn} + \cn\log\bracket{\frac{2}{\cn}-1} - \log\frac{1}{1-\cn} - \cn\log\bracket{\frac{1}{\cn}-1}\Bigg) \\
&\geq \frac{2}{3} \bracket{\log(2)-\frac{1}{2}} \cn
> 0.1\cn.
\end{align*}

Notice that $\calLH$ is convex, which means the derivative is monotone and the function is decreasing for $\theta < \thetaH$ and increasing for $\theta > \thetaH$. As $\theta_1 < \privT < \theta_2$, if $\privT \leq \theta_1$ or $\privT \geq \theta_2$, then $\calLH(\privT; D) - \calLH(\thetaH; D) \geq 0.1\cn$, which contradicts to the fact that $\calLH(\privT; D) - \calLH(\thetaH; D) < 0.1\cn$. Therefore, we can conclude that $\privT \in (\theta_1, \theta_2)$ with probability $\geq 1-\delta$.

For $\theta \in \bracket{\theta_1, \theta_2}$, the 2nd order derivative of $\calLH$ is $\frac{1}{6} \frac{1}{(1+e^{\theta/2}) (1+e^{-\theta/2})}
\geq \frac{1}{12} \frac{1}{1+e^{\theta/2}}$, which is decreasing and therefore $\geq \frac{\cn}{24}$. This means $\calLH$ is $\frac{\cn}{24}$-strongly convex for $\theta$ in this range.
Therefore, the bound on the difference of the loss translates to a bound on the $\ell_2$ distance and we have $\|\privT - \thetaH\|_2^2 \leq C \cdot {\frac{48 \log\bracket{2/\cn-1} \sqrt{p\log(1/\delta)\log(1/\beta)}}{\cn n \epsilon}}$.
We then have 
\begin{align*}
\|\privT - \theta^*\|_2
&\geq \|\theta^* - \thetaH\|_2 - \|\privT - \thetaH\|_2 \\
&\geq 2 \log \bracket{\frac{1}{\cn} - 1} - {\sqrt{\frac{48C\log(2/\cn-1)\sqrt{p\log(1/\delta)\log(1/\beta)}}{\cn n \epsilon}}} \\
&\geq \log \bracket{\frac{1}{\cn} - 1}
\end{align*}
where the last inequality follows as for $n > \frac{96 C}{\cn} \frac{\sqrt{p\log(1/\delta)\log(1/\beta)}}{\epsilon}$, we have
$$ \sqrt{\frac{48C\log(2/\cn-1)\sqrt{p\log(1/\delta)\log(1/\beta)}}{\cn n \epsilon}} \leq \log \bracket{\frac{1}{\cn} - 1} \text{ for any } \cn < 1/4.$$

Let $\theta_3 = -\log\bracket{\frac{1}{\cn}-1}$ and $\theta_4 = \log\bracket{\frac{1}{\cn}-1}$. 
As $\theta^* = 0$, the above inequality implies $\privT \in (-\infty, \theta_3]\cup[\theta_4, \infty)$. 
Similarly, as $\calL$ is convex with minimizer $\theta^*$, we know $\calL(\privT;D) \geq \min\bracket{\calL(\theta_3;D), \calL(\theta_4;D)}$.

Since 
\begin{align*}
\calL(\theta;D) 
= \frac{1}{3}\bracket{2\log\bracket{1+e^{-\theta/2}} + \log\bracket{1+e^{\theta}}}
= \frac{1}{3} \log\bracket{\power{1+e^{\theta/2}}{2} + \power{1+e^{-\theta/2}}{2}}
\end{align*}
is an even function, we have
\begin{align*}
\calL(\theta_3;D) = \calL(\theta_4;D)
= \frac{1}{3} \log \bracket{\power{1+\sqrt{\frac{1-\cn}{\cn}}}{2} + \power{1+\sqrt{\frac{\cn}{1-\cn}}}{2}}
= \frac{2}{3} \log\bracket{\frac{1}{\sqrt{\cn}}+\frac{1}{\sqrt{1-\cn}}}.
\end{align*}

As $\calL(\theta^*;D) = \log 2$, we have, for $\cn < 1/4$,
\begin{align*}
\calL(\privT;D) - \calL(\theta^*;D)
&\geq \frac{2}{3} \log\bracket{\frac{1}{\sqrt{\cn}}+\frac{1}{\sqrt{1-\cn}}} - \log(2)
\geq \frac{2}{3} \log\bracket{1+\frac{1}{\sqrt{\cn}}} - \log(2) \\
&\geq \frac{2}{3} \log\bracket{1+\frac{1}{\sqrt{\cn}}} - \frac{\log(2)}{\log(3)}\log\bracket{1+\frac{1}{\sqrt{\cn}}}
\geq \frac{1}{30} \log\bracket{1+\frac{1}{\sqrt{\cn}}} \\
& \geq \frac{1}{60} \log\frac{1}{\cn}
\end{align*}

This holds with probability $\geq 1-\beta$, and we can convert it back to an expectation bound and have
\begin{align*}
\calL(\privT;D) - \calL(\theta^*;D) = \bigOmega{\log\frac{1}{\cn}}.
\end{align*}

\end{proof}

\subsection{Clipped Softmax Regression Does Not Correspond to a ``Natural'' Function}
\label{app:softMax}

Consider a $K$-class classification problem for $K \geq 3$.
Given a sample $(x,y)$ with $x\in\bfR^p$ and $y\in [K]$, the cross-entropy loss $\ell: \bfR^{p\times K} \times \bfR^p\times [K] \to \bfR$ is, for $\bt = [\theta^{(1)}, \dots, \theta^{(K)}]$,
\begin{align*}
\ell\bracket{\bt; (x,y)} = \sum_{k=1}^K \indi{y=k} \log \frac{\expo{\theta^{(k)}\cdot x}}{\sum_{k'=1}^K \expo{\theta^{(k')}\cdot x}}.
\end{align*}
We then have the gradient of $\ell$ as
\begin{align*}
\nabla_{\theta^{(k)}}\bracket{\bt; (x,y)} = \bracket{\frac{\expo{\theta^{(k)}\cdot x}}{\sum_{k'=1}^K \expo{\theta^{(k')}\cdot x}}-\indi{y=k}} \cdot x,
\end{align*}
and the clipped gradient as
$G(\bt) := \min\bracket{1, \frac{\cn}{\|\nabla_{\bt}\bracket{\bt; (x,y)}\|_2}} \cdot \nabla_{\bt}\bracket{\bt; (x,y)}$ for any $\bt \in \bfR^{p\times K}$
where $\nabla_{\bt}\bracket{\bt; (x,y)} = \left[\nabla_{\theta^{(1)}}\bracket{\bt; (x,y)},\dots, \nabla_{\theta^{(K)}}\bracket{\bt; (x,y)}\right]$.

\begin{proof}[Proof of Theorem~\ref{thm:clipped_softmax}]
Without loss of generality, let $y=1$. Let $x$ be any non-zero vector in $\bfR^p$. 

In the beginning, we state the formulas for the gradient, its norm and the clipped gradient.
Let $\ee{k}(\theta) = \expo{\theta^{(k')}\cdot x}$. (We omit $(\theta)$ when it is clear from the context.) 
Recall the gradient of the cross-entropy loss is
$
\nabla_{\theta^{(k)}}\bracket{\bt; (x,y)} = \bracket{\frac{\expo{\theta^{(k)}\cdot x}}{\sum_{k'=1}^K \expo{\theta^{(k')}\cdot x}}-\indi{y=k}} \cdot x
=\bracket{\frac{\ee{k}}{\sum_{k'=1}^K \ee{k'}}-\indi{y=k}} \cdot x
$,
so we have
\begin{align}\label{eqn:softmax gradient}
&\nabla_{\theta^{(1)}}\ell\bracket{\bt; (x,y)}
= -\frac{\sum_{k=2}^K \ee{k}}{\sum_{k'=1}^K \ee{k'}} \cdot x.
 \\
\text{For $k\geq 2$, } 
&\nabla_{\theta^{(k)}}\ell\bracket{\bt; (x,y)}
= \frac{\ee{k}}{\sum_{k'=1}^K \ee{k'}} \cdot x.
\end{align}
The norm of the gradient $\nabla_{\bt} \ell(\bt;(x,y))$ is thus
\begin{align}
\snorm{\nabla_{\bt}\bracket{\bt; (x,y)}}{2}
= \sqrt{\sum_{k=1}^K \snorm{\nabla_{\theta^{(k)}}\bracket{\bt; (x,y)}}{2}^2}
&= \|x\|_2 \cdot \frac{\sqrt{\power{\sum_{k=2}^K \ee{k}}{2} + \sum_{k=2}^K \power{\ee{k}}{2}}}{\sum_{k'=1}^K \ee{k'}},
\label{eqn:softmax proof grad norm}
\end{align}
which takes value in $\bracket{0, \sqrt{\frac{K}{K-1}}\|x\|_2}$. Recall that $\bT = \bbracket{\bt:\|\nabla_{\bt}\bracket{\bt; (x,y)}\|_2 > \cn}$.

Recall $G(\bt)$ is the clipped gradient. 
We also define, for $k \in [K]$, for $\bt \in \bfR^{p\times K}$,
\begin{align*}
G^{(k)}(\bt) := 
\min\bracket{1, \frac{\cn}{\|\nabla_{\bt}\bracket{\bt; (x,y)}\|_2}} \cdot \nabla_{\theta^{(k)}}\bracket{\bt; (x,y)},
\end{align*}
so $G(\bt) = \left[G^{(1)}(\bt),\dots,G^{(K)}(\bt)\right]$.

When $\bt \in \bT$, we have $G^{(k)}(\bt) = \cn \cdot  \frac{\nabla_{\theta^{(k)}}\ell\bracket{\bt; (x,y)}}{\snorm{\nabla_{\bt}\bracket{\bt; (x,y)}}{2}}$, and thus
\begin{align}
G^{(1)}(\bt)
&= -\frac{\cn}{\|x\|_2} \frac{\sum_{k=2}^K \ee{k}}{\sqrt{\power{\sum_{k=2}^K \ee{k}}{2} + \sum_{k=2}^K \power{\ee{k}}{2}}} \cdot x, \nonumber\\
\text{For $k\geq 2$, }
G^{(k)}(\bt)
&= \frac{\cn}{\|x\|_2} \frac{\ee{k}}{\sqrt{\power{\sum_{k=2}^K \ee{k}}{2} + \sum_{k=2}^K \power{\ee{k}}{2}}} \cdot x.
\label{eqn:softmax proof G}
\end{align}
Notice that for any $k\geq 2$, $\nabla_{\theta^{(1)}} G^{(k)}$ is zero as $G^{(k)}$ does not depend on $\theta^{(1)}$; however, $\nabla_{\theta^{(k')}} G^{(1)}$ may not be zero everywhere as $G^{(1)}$ does not depend on $\theta^{(k')}$ (we will prove this formally).

We will prove the theorem by contradiction.
Suppose there exists a function $f:\calC\to \bfR$ such that 1). $\bT\cap \interior{\calC}$ is a non-empty set, 2). $f$ is differentiable except for a set $\calC_N$ which is closed on $\calC$ and has zero measure, and 3) $G(\bt)$ is a subgradient of $f$.
We will show that on an open subset of $\bT\cap \calC$, $f$ is differentiable but the 2nd derivative is not symmetric, which contradicts the fact that any function with continuous second order partial derivative should have symmetry of 2nd derivative in the interior of its domain.

We use Euclidean topology throughout the proof. When not specified, we talk about Euclidean topology in the space $\bfR^{p\times K}$. We consider Lebesgue measure on $\bfR^{p\times K}$ throughout the proof.

\begin{enumerate}
\item 
First, we show $\bT$ %
is a non-empty open set in $\bfR^{p\times K}$.

Recall the formula for $\snorm{\nabla_{\bt}\bracket{\bt; (x,y)}}{2}$ in \eqref{eqn:softmax proof grad norm}, which is obviously a continuous function in $\bfR^{p\times K}$. Therefore, the preimage of open set $(\cn, \infty)$ through $\snorm{\nabla_{\bt}\bracket{\bt; (x,y)}}{2}$, which is exactly $\Theta$, is an open set in $\bfR^{p\times K}$. By assumption, $\bT$ is non-empty.

\item
Second, let $\bT_G = \bT\cap \bbracket{\bt : \forall k, k',\ \nabla_{\theta^{(k')}} G^{(k)}(\bt) = \nabla_{\theta^{(k)}} G^{(k')}(\bt)}$ be the ``good'' subset of $\bT$ where the 2nd derivative of $f$ is symmetric if $G$ is the derivative of $f$. 
We will show that $\bT_G$ is a closed set in $\bT$ and has Lebesgue measure $0$.

Recall $G^{(k)}(\bt)$ for $\bt \in \bT$ in~\eqref{eqn:softmax proof G}.
For any $k\geq 2$, notice that $G^{(k)}(\bt)$ does not depend on $\theta^{(1)}$, so $\nabla_{\theta^{(1)}}G^{(k)}(\bt) = \mathbf{0}$ for $k\geq 2$. 

Now we look at the derivatives of $G^{(1)}$. Let $D(\bt) = \power{\sum_{k=2}^K \ee{k}}{2} + \sum_{k=2}^K \power{\ee{k}}{2}$. 
For any $k'\geq 2$,
\begin{align*}
 \nabla_{\theta^{(k')}} G^{(1)}(\bt) 
=& -\frac{\cn\ee{k'}}{\|x\|_2 \power{D(\bt)}{3/2}} \sum_{k=2}^K \ee{k}\bracket{\ee{k}-\ee{k'}} xx^{\top}.
\end{align*}

As $\ee{k} > 0$ and $x \neq \mathbf{0}$, we have
\begin{align*}
\forall k'\geq 2,\ \nabla_{\theta^{(k')}} G^{(1)}(\bt) = \mathbf{0} 
& \Leftrightarrow
\ee{2} = \cdots = \ee{K} \\
& \Leftrightarrow
\ip{\theta^{(2)}}{x} = \cdots = \ip{\theta^{(K)}}{x}.
\end{align*}

It is also not hard to check that $\ee{2} = \cdots = \ee{K}$ is sufficient to guarantee $\nabla_{\theta^{(k')}} G^{(k)}(\bt) = \nabla_{\theta^{(k)}} G^{(k')}(\bt)$ for any $k,k'\geq 2$.

Therefore, we have $\bT_G = \bbracket{\bt\in \bT: \ip{\theta^{(2)}}{x} = \cdots = \ip{\theta^{(K)}}{x}}$.
We can define a function $a$ on $\bT$ with $a(\theta) = \sum_{k=3}^{K} \abs{\ip{\theta^{(2)}}{x} - \ip{\theta^{(k)}}{x}}$. Since $a$ is continuous on domain $\bT$, the preimage of the closed set $\{0\}$ through $a$, which is exactly $\bT_G$, is a closed set in $\bT$.
Also, $\bT_G$ is obviously a lower dimensional subspace of $\bfR^{p\times K}$ and thus has measure $0$.

\item
Third, let the ``bad'' set be $\bT_B = \bT\backslash \bT_G$. We will show $\bT_B\cap\interior{\calC}$ is a non-empty set and is open on $\interior{\calC}$.

As $\bT_G$ is closed in $\bT$, $\bT_B$, its complement, is an open set in $\bT$. %
As $\bT$ is open in $\bfR^{p\times K}$, $\bT_B$ is also open in $\bfR^{p\times K}$ (since $\bT_B$ is the intersection of two open subsets in $\bfR^{p\times K}$). So $\bT_B\cap\interior{\calC}$ is open on $\interior{\calC}$.

On the other hand, as $\bT$ and $\interior{\calC}$ are open, $\bT\cap\interior{\calC}$ is open. Additionally, by assumption, $\bT\cap\interior{\calC}$ is non-empty. So $\bT\cap\interior{\calC}$ has positive measure. 
Since $\bT_G$ has measure $0$, $\bT_B\cap\interior{\calC} = \bT\cap\interior{\calC}\backslash\bT_G$ has positive measure and is thus non-empty.

\item
Finally,
recall that $f$ is differentiable everywhere except for a closed set on $\calC_N$ with measure zero. Obviously, $\calC_N$ is also closed on $\interior{\calC}$. Then $f$ is differentiable on $\bT_B' := \bT_B\cap\interior{\calC}\backslash\calC_N$, which implies that $G$ is the gradient of $f$ on $\bT_B'$. Also, since $\forall k$, all partial derivatives of $G^{(k)}$ exists and is continuous, we know that $f$ has continuous 2nd derivatives on $\bT_B'$.

As $\bT_B\cap\interior{\calC}$ is open and $\calC_N$ is closed on $\interior{\calC}$, $\bT_B'$ is open on $\interior{\calC}$. Also, as $\calC_N$ has zero measure, $\bT_B'$ is non-empty.

By Schwarz's theorem, for any function that has continuous second order partial derivatives, it has symmetry of 2nd derivative in the interior of its domain.
So we are supposed to see $\nabla_{\theta^{(k)}}G^{(k')} = \nabla_{\theta^{(k')}}G^{(k)}$ for any pairs of $k$ and $k'$ on $\bT_B'$ (since $\bT_B'$ itself is non-empty and open in $\interior{\calC}$). However, this does not hold by definition of $\bT_B$. We therefore have a contradiction and such $f$ cannot exist.
\end{enumerate}
\end{proof}

\subsubsection{``Per-class'' Clipping Does Not Resolve the Problem}
\label{sec:per-class_clip}
\begin{thm}
Consider any sample $(x, y)$ with $x\in \bfR^p\backslash{\{\mathbf{0}\}}$, $y\in [K]$ (for $K \geq 3$) and any $\cn > 0$ such that $\bT = \bbracket{\bt: \|\nabla_{\theta^{(k)}} \ell(\bt;(x,y))\|_2 > \cn \text{ for some } k \in [K]}$ is non-empty.
Let $G(\theta)$ be the ``per-class'' clipped gradient of $\ell(\bt;(x,y))$. 
Consider any function $f:\calC \to \bfR$, $\calC\subseteq \bfR^{p\times K}$ such that $\bT \subseteq \calC$.
If $f$ is differentiable everywhere except for a set $\calC_N\subseteq \calC$
such that $\calC_N$ is a closed set on $\calC$ and has zero Lebesgue measure, then it is not possible for $\nabla_{\bt} f(\bt) = G(\bt)$ to hold for all $\bt \in \interior{\calC}\backslash\calC_N$.
\label{thm:clipped_softmax2}
\end{thm}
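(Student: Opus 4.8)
The plan is to follow the template of the proof of Theorem~\ref{thm:clipped_softmax}: produce a nonempty \emph{open} region of $\bfR^{p\times K}$ on which the per-class clipped gradient field $G$ is $C^\infty$ but on which no potential $f$ can exist, because its Hessian would fail to be symmetric, contradicting Schwarz's theorem. The new wrinkle is that I want the clipping pattern (which of the $K$ blocks $\nabla_{\theta^{(k)}}\ell$ gets scaled down) to be \emph{locally constant} on this region; this avoids all the boundary/kink subtleties of the one-dimensional analysis and makes $G$ honestly smooth there.

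First I would normalize $y=1$, fix $x\neq\mathbf{0}$, write $\cn' = \cn/\ltwo{x}$, and set $p_k(\bt) = \expo{\ip{\theta^{(k)}}{x}}/\sum_{k'}\expo{\ip{\theta^{(k')}}{x}}$. Since $\nabla_{\theta^{(1)}}\ell = -(1-p_1)\,x$ and $\nabla_{\theta^{(k)}}\ell = p_k\,x$ for $k\ge 2$, per-class clipping gives $G^{(1)}(\bt) = -\min\{1-p_1,\cn'\}\,x$ and $G^{(k)}(\bt) = \min\{p_k,\cn'\}\,x$ for $k\ge 2$. Two easy observations: (i) every block has norm $<\ltwo{x}$, so $\bT\neq\emptyset$ forces $\cn' < 1$; and (ii) $\bT$ is open (a finite union of preimages of $(\cn,\infty)$ under continuous maps), so $\bT\subseteq\calC$ together with openness gives $\bT\subseteq\interior{\calC}$.

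Next I define $R = \bbracket{\bt : 1-p_1(\bt) > \cn',\ p_k(\bt) < \cn' \text{ for all } k\ge 2}$ — the first block strictly clipped, all others strictly unclipped — which is open by continuity of $p_k$. The crucial point, and the only place $K\ge 3$ is used, is that $R$ is nonempty: one needs a point of the open simplex with $p_1 < 1-\cn'$ and $p_k < \cn'$ for all $k\ge 2$; picking $p_1$ in the nonempty interval $(\max\{0,1-(K-1)\cn'\},\,1-\cn')$ and $p_k = (1-p_1)/(K-1)$ works when $K\ge 3$ and $\cn' < 1$, and any such probability vector is realized by $\theta^{(k)} = (\log p_k/\ltwo{x}^2)\,x$. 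On $R$ the clipping pattern is frozen, so $G^{(1)}\equiv -\cn'x$ (a constant vector) and $G^{(k)}(\bt) = p_k(\bt)\,x$ for $k\ge 2$, both $C^\infty$ on $R$. Hence $\nabla_{\theta^{(k)}}G^{(1)}\equiv 0$ for every $k$, whereas from $\partial p_k/\partial\theta^{(1)}_j = -p_1 p_k x_j$ one gets $\nabla_{\theta^{(1)}}G^{(k)}(\bt) = -p_1(\bt)p_k(\bt)\,xx^\top$, which is nonzero throughout $R$ since $p_1,p_k>0$ and $x\neq\mathbf{0}$.

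Finally I would argue by contradiction: if $f:\calC\to\bfR$ satisfies $\nabla_{\bt}f = G$ on $\interior{\calC}\setminus\calC_N$ with $\calC_N$ closed in $\calC$ of zero measure, then $R' = R\setminus\calC_N$ is open in $\bfR^{p\times K}$ (as $R\subseteq\interior{\calC}$ is open and $\calC_N$ is relatively closed) and nonempty ($R$ has positive Lebesgue measure, $\calC_N$ has none). On $R'$, $\nabla f = G$ with $G|_R\in C^1$, so $f\in C^2(R')$, and Schwarz's theorem forces $\nabla_{\theta^{(k)}}G^{(1)} = (\nabla_{\theta^{(1)}}G^{(k)})^\top$ there; but the left side is $0$ and the right side is $-p_1 p_k\,xx^\top\neq 0$, a contradiction. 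I expect the main obstacle to be the nonemptiness of $R$ — the simplex-packing step where $K\ge 3$ is essential (for $K=2$, $p_2 = 1-p_1$ forces block $2$ to be clipped whenever block $1$ is, so no such region exists) — together with the bookkeeping needed to ensure $R'$ is genuinely open and nonempty when $\calC_N$ is only assumed relatively closed in $\calC$.
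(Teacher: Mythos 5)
Your proposal is correct and follows essentially the same route as the paper's proof: identify a nonempty open region where the first block's gradient is clipped while another block's is not, compute that $\nabla_{\theta^{(k)}}G^{(1)}=0$ while $\nabla_{\theta^{(1)}}G^{(k)}=-p_1p_k\,xx^\top\neq 0$ there, and conclude via Schwarz's theorem on the open set left after removing the null set $\calC_N$. The only (harmless) difference is that you freeze the clipping pattern by requiring \emph{all} blocks $k\ge 2$ to be strictly unclipped (hence the simplex-packing step), whereas the paper only needs one such block unclipped; both use $K\ge 3$ in the same essential way.
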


\begin{proof}
Recall the formula for the gradient in \eqref{eqn:softmax gradient} and the definition of $\ee{k}$, and we have
\begin{align*}
&\snorm{\nabla_{\theta^{(1)}} \ell(\bt; (x,y))}{2} = \frac{\sum_{k=2}^K \ee{k}}{\sum_{k'=1}^K \ee{k'}} \snorm{x}{2},\\
\text{For } k\geq 2,\ &\snorm{\nabla_{\theta^{(k)}} \ell(\bt; (x,y))}{2} = \frac{\ee{k}}{\sum_{k'=1}^K \ee{k'}} \snorm{x}{2}.
\end{align*}
Obviously, $\snorm{\nabla_{\theta^{(1)}} \ell(\bt; (x,y))}{2} = \sum_{k=2}^K \snorm{\nabla_{\theta^{(k)}} \ell(\bt; (x,y))}{2}$. So $\nabla_{\theta^{(k)}} \ell(\bt; (x,y))$ for $k \geq 2$ is clipped only when $\nabla_{\theta^{(1)}} \ell(\bt; (x,y))$ is clipped.
We consider when some of them are clipped, i.e. the set $\bT$. There are two cases.
\begin{enumerate}
\item If all of them are clipped, then $G^{(1)} = -\frac{\cn}{\snorm{x}{2}} x$ and $G^{(k)} = \frac{\cn}{\snorm{x}{2}} x$ for $k\geq 2$. $G$ is basically a constant and we have a valid gradient field.
\item If $\nabla_{\theta^{(1)}} \ell$ is clipped and some of $\nabla_{\theta^{(k)}} \ell$ for $k \geq 2$ is not clipped, then $G^{(1)} = -\frac{\cn}{\snorm{x}{2}} x$ and $G^{(k)} = \frac{\ee{k}}{\sum_{k'=1}^K \ee{k'}} x$. So we have $\nabla_{\theta^{(k)}} G^{(1)} = 0$ for any $k\geq 2$ and $\nabla_{\theta^{(1)}} G^{(k)} = -\frac{\ee{k} \ee{1}}{\power{\sum_{k'=1}^K \ee{k'}}{2}} x^\top x$ which is always nonzero. So we do not have a valid gradient field when this happens.

This is the set 
{\small
\begin{align*}
\bT_B = \cup_{k_0=2}^K \bT_{k_0} \\
\text{ where }
\bT_{k_0} &= \bbracket{\bt: \frac{\sum_{k=2}^K \ee{k}}{\sum_{k'=1}^K \ee{k'}} \snorm{x}{2} > \cn \text{ and }
\frac{\ee{k_0}}{\sum_{k'=1}^K \ee{k'}} \snorm{x}{2} \leq \cn} \\
&\supseteq \bbracket{\bt: \frac{\sum_{k=2}^K \ee{k}}{\sum_{k'=1}^K \ee{k'}} \snorm{x}{2} > \cn} \cap 
\bbracket{\bt:\frac{\ee{k_0}}{\sum_{k'=1}^K \ee{k'}} \snorm{x}{2} < \cn}
\end{align*}}
It is easy to see that $\bT_{k_0}$ is non-empty for any $k_0 \geq 2$. Also, $\bT_{k_0}$ is an open set as $\frac{\sum_{k=2}^K \ee{k}}{\sum_{k'=1}^K \ee{k'}}$ and $\frac{\ee{k_0}}{\sum_{k'=1}^K \ee{k'}}$ are continuous. So $\bT_B$ is an non-empty open set.

As $\bT \subseteq \calC$, we know $f$ is differentiable on $\bT \backslash \calC_N$ which is an non-empty open set.
Then $f$ cannot exists following the similar argument as in the proof of Theorem~\ref{thm:clipped_softmax}.
\end{enumerate}
\end{proof}

\section{Omitted Details for Experiments in Section~\ref{sec:expts}}
\label{app:expts}

Now, we provide details for our experimental setup that were omitted from the main body due to space constraints.

\mypar{Dataset} Following~\citep{jain2014near}, we use the Cod-RNA dataset~\citep{codrna} which contains $59,535$ training and $271,617$ validation (used as test set) samples, each with $8$ features and a binary label. 
To preprocess the data, we normalize each feature to range from $[-0.5, 0.5]$ using $f(x) = (x - \text{min}) / (\text{max} - \text{min}) - 0.5$, where $\text{min}$ and $\text{max}$ is the minimum and maximum value, respectively, of the feature in the training set. We then project each sample to the $\ell_2$-ball of radius $\sqrt{8/9}$, and append a constant feature of value $1/3$, such that each sample has $9$ dimensions and the $\ell_2$ norm is upper bounded by $1$.

To illustrate the effect of data dimensionality on accuracy, following~\citep{jain2014near} we append zero-valued features to the data samples, such that the accuracy of a non-private classifier does not change. We consider data dimensionality as $9$ (the original data) and 
$\cup_{i=1}^{4}\{1,2,5\} \times 10^{i}  \setminus \{10\}$.

\mypar{Model} We use logistic regression to solve the binary classification problem, i.e., for $d$-dimensional feature $x$ and label $y \in \{-1, 1\}$, the loss function is $\ell = \log(1 + e^{-y\theta^\top x})$ for model $\theta$. The gradient of the loss function is $\nabla \ell = {-yx}/(1 + e^{y\theta^\top x})$, which has bounded $\ell_2$ norm: $\|\nabla \ell\|_2 \leq \|x\|_2 \leq 1$.

\mypar{Algorithm}
To solve this problem privately, we consider DP-SGD with mini-batch gradient and Gaussian noise. Specifically, we use batch size $250$, and iterate over the training data for $10$ epochs.

As a comparison, we consider another algorithm with random noise drawn from $\Pr(Z=z) \propto e^{-\epsilon_0 \|z\|_2}$, i.e., the direction of the noise is uniform, and the magnitude of the noise follows from Gamma distribution with shape $d$ and scale $1/\epsilon_0$. 
Adding this random noise to the stochastic gradient (averaged over one batch) guarantees $\epsilon_0$-differential privacy. Thus, by privacy amplification, if one batch is formed by selecting each sample with probability $q < 0.5$, it guarantees $\epsilon = \log(1 + q(e^{\epsilon_0}-1))$-differential privacy~\citep{Ullman-lecturenotes}. By properties of R\'enyi differential privacy (RDP)~\citep{mironov2017renyi}, this converts to $(\alpha, \alpha\epsilon^2/2)$-RDP, and we can then use RDP composition to account privacy for multiple iterations.

\mypar{Privacy parameters and hyperparameter} 
We consider $\epsilon \approx 5.0$ and $\delta=10^{-5}$, which, under batch size $250$ and number of epochs $10$, can be achieved by setting the standard deviation of the Gaussian noise to be $\sigma=0.63$, and the scale parameter of the Gamma noise to be $0.57$.

The only hyperparameter in the experiment is the learning rate of DP-SGD. We search in the grid
$\cup_{i=-1}^1\{1, 2, 5\} \times 10^{i}$.
We run with every learning rate once, and pick the one with the highest averaged test accuracy over the last $5$ epochs. None of the chosen learning rate lies on the boundary of the grid.
Finally, we repeat each experiment for $5$ times with the chosen learning rate.

\fi
\end{document}